\newif\ifarxiv\arxivtrue 

\newif\ifsubmit\submittrue \submitfalse 
\newif\ifappendix\appendixtrue

\ifarxiv
\documentclass[acmsmall,screen,nonacm]{acmart} \else
\documentclass[acmsmall,screen,review,nonacm]{acmart}
\fi

\usepackage{microtype}
\usepackage{graphicx}
\usepackage{hyperref}
\usepackage{color}
\usepackage{mathpartir}
\usepackage{stackrel}
\usepackage{stmaryrd}
\usepackage{wrapfig,subcaption}
\usepackage{relsize}
\usepackage{thmtools}
\usepackage{thm-restate}
\usepackage{whyrellang}
\usepackage{tikz}
\usepackage[framemethod=TikZ]{mdframed}
\usetikzlibrary{cd,automata,positioning,arrows,arrows.meta,decorations.pathmorphing}

\setcopyright{rightsretained}
\acmPrice{}
\acmDOI{10.1145/xxxxxxxxx}
\acmYear{2026}
\copyrightyear{2026}
\acmJournal{PACMPL}
\acmVolume{X}
\acmNumber{POPL}
\acmMonth{1}

\ifsubmit
\settopmatter{printacmref=false, printccs=false, printfolios=false}
\renewcommand\footnotetextcopyrightpermission[1]{} 

\fi

\ifarxiv
\settopmatter{printfolios=true,printccs=false,printacmref=false}
\renewcommand\footnotetextcopyrightpermission[1]{} 

\fi

\begin{document}

\newtheorem{remark}[theorem]{Remark} 
\title{Forall-Exists Relational Verification by Filtering to Forall-Forall
\ifappendix {\small [with appendix]}
\else {\small [without appendix]}
\fi
}

\ifsubmit
\else
\author{Ramana Nagasamudram}
\affiliation{\institution{Amazon Web Services}\country{USA}}
\authornote{Nagasamudram's work was done prior to his joining Amazon Web Services.}
\author{Anindya Banerjee}
\authornote{Banerjee’s research was based on work supported by the NSF, while working at the Foundation. Any opinions, findings, and conclusions or recommendations expressed in this article are those of the authors and do not necessarily reflect the views of the NSF.}
\affiliation{\institution{Dartmouth College}\country{USA}}
\author{David A. Naumann}
\affiliation{\institution{Stevens Institute of Technology}\country{USA}}
\authornote{Naumann was partially supported by NSF grant CNS 2426414.}
\fi

\newcommand{\dn}[1]{\emph{\color{blue}[[DN: #1]]}}
\newcommand{\rmn}[1]{\emph{\color{blue}[[R: #1]]}}
\newcommand{\ab}[1]{\emph{\color{blue}[[AB: #1]]}}

\newcommand{\dt}[1]{\textbf{\emph{#1}}} \newcommand{\rn}[1]{\textsc{\relsize{-1}#1}} 
\renewcommand{\iff}{\Leftrightarrow} 
\definecolor{light-gray}{gray}{0.88}
\definecolor{dark-gray}{gray}{0.25}
\newcommand{\graybox}[1]{\colorbox{light-gray}{#1}} 
\newcommand{\keyw}[1]{\ensuremath{\mathsf{#1}}}  
\newcommand{\size}{\keyw{size}}
\newcommand{\wlp}{\keyw{wlp}}
\newcommand{\wlpL}{\keyw{wlpL}}
\newcommand{\wlpR}{\keyw{wlpR}}
\newcommand{\gfp}{\keyw{gfp}}

\newcommand{\semframe}[2]{#2\mathbin{\Vdash} #1} \newcommand{\comFrame}{\keyw{cFrame}} \newcommand{\biFrame}{\keyw{bFrame}} 
\newcommand{\missingArg}{\keyw{\_}}
\newcommand{\mystrut}{\vrule height 3pt depth 2pt width 0pt} 
\newcommand{\skipc}{\keyw{skip}}
\newcommand{\ifc}[3]{\keyw{if}\ {#1}\ \keyw{then}\ {#2}\ \keyw{else}\ {#3}}
\newcommand{\ifFourXlong}[5]{\keyw{if}\ {#1}\ \keyw{thenThen}\ {#2}\ \keyw{thenElse}\ {#3}\ \keyw{elseThen}\ {#4}\ \keyw{elseElse}\ {#5}\ \keyw{fi}}
\newcommand{\ifFourLong}[5]{\keyw{if}\ {#1}\ \keyw{thth}\ {#2}\ \keyw{thel}\ {#3}\ \keyw{elth}\ {#4}\ \keyw{elel}\ {#5}\ \keyw{fi}}
\newcommand{\ifFour}[5]{
  \keyw{if}\ {#1}\ {#2}\ {#3}\ {#4}\ {#5}}

\newcommand{\whilec}[2]{\keyw{while}\ {#1}\ \keyw{do}\ {#2}}
\newcommand{\whilev}[3]{\keyw{while}\ {#1}\ \keyw{vnt}\ {#2}\ \keyw{do}\ {#3}}

\newcommand{\havc}[1]{\keyw{hav}\ {#1}} 
\newcommand{\assertc}[1]{\keyw{assert}\ {#1}} 
\newcommand{\biwhilev}[4]{\keyw{while}\ {#1}\ \keyw{algn}\ {#2}\ \keyw{vnt}\ {#3}\ \keyw{do}\ {#4}}
\newcommand{\biwhile}[3]{\keyw{while}\ {#1}\ \keyw{algn}\ {#2}\ \keyw{do}\ {#3}}

\newcommand{\havRtKeyword}{\keyw{havf}}
\newcommand{\havRt}[2]{\havRtKeyword\ {#1}\ {#2}} 

\newcommand{\biEmb}[2]{\langle #1 \sep #2 \rangle}
\newcommand{\BiEmb}[2]{\langle #1 \Sep #2 \rangle}
\newcommand{\bisync}[1]{\lfloor #1 \rloor}

\newcommand{\bigmid}{\mathrel{\mbox{\large$|$}}} 
\newcommand{\kateq}{\mathrel{\simeq}} \newcommand{\bieq}{\mathrel{\cong}} \newcommand{\agree}[1]{\stackrel{#1}{=}} 
\newcommand{\False}{\mathrm{ff}}
\newcommand{\True}{\mathrm{tt}}
\newcommand{\agreeRel}{\mathbin{\ddot{=}}} 

\newcommand{\sep}{\mathbin{\mid}} \newcommand{\Sep}{\mathbin{\:\mid\:}} \newcommand{\smSep}{\mbox{\tiny$|$}}

\newcommand{\imp}{\Rightarrow}
\newcommand{\impby}{\Leftarrow}

\newcommand{\leftex}[1]{ \raisebox{.25ex}{\relsize{-1}$\langle\hspace*{-2.0pt}[$} #1 \raisebox{.25ex}{\relsize{-1}$\langle\hspace*{-2.4pt}]$} }
\newcommand{\rightex}[1]{ \raisebox{.25ex}{\relsize{-1}$[\hspace*{-2.5pt}\rangle$} #1 \raisebox{.25ex}{\relsize{-1}$]\hspace*{-2.3pt}\rangle$} }

\newcommand{\leftF}[1]{\leftex{#1}}
\newcommand{\rightF}[1]{\rightex{#1}}
\newcommand{\bothF}[1]{  \raisebox{.25ex}{\relsize{-1}$\langle\hspace*{-2.1pt}[$}
  #1 
  \raisebox{.25ex}{\relsize{-1}$]\hspace*{-2.1pt}\rangle$ } 
}

\renewcommand{\P}{\mathcal{P}}  \renewcommand{\S}{\mathcal{S}} \newcommand{\Lrel}{\mathcal{L}} 
\newcommand{\Q}{\mathcal{Q}} 
\newcommand{\J}{\mathcal{J}} 
\newcommand{\R}{\mathcal{R}} 
\newcommand{\I}{\mathcal{I}} 
\newcommand{\T}{\mathcal{T}} 
\newcommand{\X}{\mathcal{X}} 

\newcommand{\union}{\cup}
\newcommand{\intersect}{\cap}
\newcommand{\proves}{\vdash}
\newcommand{\aftqua}{.\:}
\newcommand{\all}[2]{\forall #1 \aftqua #2}
\newcommand{\some}[2]{\exists #1 \aftqua #2}
\newcommand{\subst}[3]{{#1}^{#2}_{#3}}
\newcommand{\catenate}{\mbox{+\!+}}
\newcommand{\quant}[4]{(#1\,#2 \::\: #3 \::\: #4)} \newcommand{\hint}[1]{{\qquad\mbox{\guillemotleft\, #1 \guillemotright}}} 
\newcommand{\mathconst}[1]{\mbox{\upshape\textsf{#1}}} \newcommand{\indep}{\mathconst{indep}} 
\newcommand{\ghost}{\mathconst{ghost}}
\newcommand{\erase}{\mathconst{erase}}
\newcommand{\uChk}{\mathconst{uchk}}
\newcommand{\chk}{\mathconst{chk}}
\newcommand{\Store}{\mathconst{Store}}

\newcommand{\modVars}{\mathconst{modVars}} \newcommand{\modVarsR}{\mathconst{modVarsR}} 
\def\rightharpoonupfill{$\mathsurround=0pt \mathord- \mkern-6mu
  \cleaders\hbox{$\mkern-2mu \mathord- \mkern-2mu$}\hfill
  \mkern-6mu \mathord\rightharpoonup$}

\def\leftharpoonupfill{$\mathsurround=0pt \mathord\leftharpoonup \mkern-6mu
  \cleaders\hbox{$\mkern-2mu \mathord- \mkern-2mu$}\hfill
  \mkern-6mu \mathord-$}
\def\overleftharpoonup#1{\vbox{\ialign{##\crcr
      \leftharpoonupfill\crcr\noalign{\kern-1pt\nointerlineskip}
      $\hfil\displaystyle{#1}\hfil$\crcr}}}

\def\overrightharpoonup#1{\vbox{\ialign{##\crcr
      \rightharpoonupfill\crcr\noalign{\kern-1pt\nointerlineskip}
      $\hfil\displaystyle{#1}\hfil$\crcr}}}
\newcommand{\Left}[1]{\overleftharpoonup{#1}}
\newcommand{\Right}[1]{\overrightharpoonup{#1}}

\def\Rrightharpoonupfill{$\mathsurround=0pt \mathord- \mkern-6mu
  \cleaders\hbox{$\mkern-2mu \mathord- \mkern-2mu$}\hfill
  \mkern-6mu \mathord\rightharpoonup\hspace*{-2.5ex}\rightharpoonup$}
\def\Roverrightharpoonup#1{\vbox{\ialign{##\crcr
      \Rrightharpoonupfill\crcr\noalign{\kern-1pt\nointerlineskip}
      $\hfil\displaystyle{#1}\hfil$\crcr}}}
\def\Lleftharpoonupfill{$\mathsurround=0pt \mathord\leftharpoonup\hspace*{-2.5ex}\leftharpoonup \mkern-6mu
  \cleaders\hbox{$\mkern-2mu \mathord- \mkern-2mu$}\hfill
  \mkern-6mu \mathord-$}
\def\Loverleftharpoonup#1{\vbox{\ialign{##\crcr
      \Lleftharpoonupfill\crcr\noalign{\kern-1pt\nointerlineskip}
      $\hfil\displaystyle{#1}\hfil$\crcr}}}

\newcommand{\biRight}[1]{\Roverrightharpoonup{#1}} \newcommand{\biLeft}[1]{\Loverleftharpoonup{#1}} 
\newcommand{\means}[1]{\llbracket\, #1 \,\rrbracket}

\newcommand{\eqdef}{\mathrel{\,\hat{=}\,}}
\newcommand{\smalleqdef}{\mathord{\!\!\widehat{=}\!\!}}

\newcommand{\Z}{\mathbb{Z}} 
\newcommand{\ceval}{\mathrel{\Downarrow}}  \newcommand{\beval}{\mathrel{\Downarrow}}  \newcommand{\fail}{\lightning}

\newcommand{\update}[3]{#1[#2\mapsto#3]} 
\newcommand{\specSym}{\leadsto}
\newcommand{\bspecSym}{\ensuremath{\mathrel{\ddot{\leadsto}}}} \newcommand{\rspecSym}{\ensuremath{\mathrel{\mbox{\footnotesize$\stackrel{\forall}{\raisebox{-.07ex}{$\thickapprox$}\hspace*{-1.2ex}>}$}}}}
\newcommand{\aespecSym}{\ensuremath{\mathrel{\mbox{\footnotesize$\stackrel{\exists}{\raisebox{-.07ex}{$\thickapprox$}\hspace*{-1.2ex}>}$}}}}

\newcommand{\spec}[2]{#1\specSym #2} 
\newcommand{\rspec}[2]{#1\rspecSym #2} 
\newcommand{\aespec}[2]{#1\aespecSym #2} 
\newcommand{\Aespec}[2]{#1\;\aespecSym\; #2} \newcommand{\especSym}{\ensuremath{\mathrel{\mbox{\footnotesize$\stackrel{\exists}{\leadsto}$}}}}
\newcommand{\espec}[2]{#1\especSym #2}
\newcommand{\bspec}[2]{#1\bspecSym #2} 
\newcommand{\Bspec}[2]{#1\;\bspecSym\;#2} 

\newcommand{\lexop}{\text{\color{blue}\textbf{$\langle\hspace*{-2.2pt}[$}}} \newcommand{\lexcl}{\text{\color{blue}\textbf{$\langle\hspace*{-2.3pt}]$}}} \newcommand{\rexop}{\text{\color{blue}\textbf{$[\hspace*{-2.4pt}\rangle$}}} \newcommand{\rexcl}{\text{\color{blue}\textbf{$]\hspace*{-2.2pt}\rangle$}}} 
\newcommand{\Btt}{\ensuremath{B_1}}
\newcommand{\Btf}{\ensuremath{B_2}}
\newcommand{\Bft}{\ensuremath{B_3}}
\newcommand{\Bff}{\ensuremath{B_4}}

\lstset{language=whyrel}

\begin{abstract} Relational verification encompasses research directions such as reasoning about data abstraction, reasoning about security and privacy, secure compilation, and functional specificaton of tensor programs, among others. Several relational Hoare logics exist, with accompanying tool support for compositional reasoning of $\forall\forall$ (2-safety) properties and, generally, k-safety properties of product programs. In contrast, few logics and tools exist for reasoning about $\forall\exists$ properties which are critical in the context of nondeterminism.

This paper's primary contribution is a methodology for verifying a $\forall\exists$ judgment by way of a novel filter-adequacy transformation. This transformation adds assertions to a product program in such a way that the desired $\forall\exists$ property (of a pair of underlying unary programs) is implied by a $\forall\forall$ property of the transformed product. The paper develops a program logic for the basic $\forall\exists$ judgement extended with assertion failures; develops bicoms, a form of product programs that represents pairs of executions and that caters for direct translation of $\forall\forall$ properties to unary correctness; proves (using the logic) a soundness theorem that says successful $\forall\forall$ verification of a transformed bicom implies the $\forall\exists$ spec for its underlying unary commands; and implements a proof of principle prototype for auto-active relational verification which has been used to verify all examples in the paper. The methodology thereby enables a user to work with ordinary assertions and assumptions, and a standard assertion language, so that existing tools including auto-active verifiers can be used.

\end{abstract}

\maketitle

\section{Introduction}\label{sec:intro}

Many desirable properties of programs are naturally expressed as relational
properties, that is, conditions that relate multiple executions.  One commonly
occurring pattern specifies that for any pair of terminating runs, if the
initial states satisfy a specified relational precondition then the final states
satisfy a specified post-relation.  
This pattern is known as $\forall\forall$ or 2-safety.
The pairs of runs could be from two
different programs $c$ and $c'$, in which case we will write $c\sep c':
\rspec{\R}{\S}$ where $\R$ (resp.\ $\S$) is the pre- (resp.\ post-) relation.
Examples include observational equivalence, where $\R$ and $\S$ are the identity
on states, which has many applications.  A well known example that relates a program to itself
is noninterference, a security property with respect to a partition of variables into secret or public:
take $\R$ and $\S$ to be agreement on the values of public variables.

Nondeterminacy is important in imperative programming, even in the absence of
concurrency: it can represent unknown inputs and underspecified procedures, and
it can approximate randomization.  For nondeterministic programs a commonly
occuring relational pattern, which we write $c\sep c': \aespec{\R}{\S}$, says
that for $\R$-related initial states, all runs of $c$ can be matched by some run
of $c'$ that establishes $\S$ finally.  
Examples include trace refinement and generalized noninterference.
We refer to this as a $\forall\exists$ property,
by contrast with the $\forall\forall$ property written $\rspec{\R}{\S}$.

The $\forall\exists$ pattern also arises in more general
form, where multiple runs are universally and/or existentially quantified and
the tuple of traces is constrained at intermediate points not just final.
However, the case of two runs and pre-post spec is ubiquitous and illustrates
key issues; we focus on it in this paper since it admits more streamlined
notations.  Owing to the broad range of applications, relational verification is
an active research area. There have been considerable advances for
$\forall\forall$ verification (also known as $k$-safety) but relatively little
for $\forall\exists$ which is the focus of this paper.

Stepping back, let us recall the state of the art in ``unary'' (non-relational)
verification, i.e., trace properties.
There are several established means of formal specification including temporal
logics. Most relevant here are partial correctness assertions (Hoare triples)
given by pre- and post-conditions.
For such specifications, verification techniques include: fully automated tools
based on proof search or direct semantic reasoning; auto-active tools in which
users provide loop invariants, procedure specs, and other annotations; and use
of interactive proof assistants for direct semantic reasoning or application of
Hoare logic in its many variations~\cite{cao2018vst,JungKJBBD18}.  The state of the art in relational
verification is less advanced but some techniques have emerged.

One widely used principle is \emph{alignment}, in which two runs are viewed as
matched sequences of segments: put differently, relational assertions are
associated with aligned pairs of intermediate points.  For $\forall\forall$ this
directly generalizes the inductive assertion method~\cite{Floyd67}.  Most often
the aligned pairs are designated in terms of program control structure, but in
addition they may be designated by state-dependent conditions.  (Sometimes
called semantic alignment.)  Good alignment often enables the use of simple
relational assertions in solvable first order fragments~\cite{ShemerGSV19}.  

Another key principle is \emph{product programs} to represent multiple runs by a
single one.  This has mostly been explored for $\forall\forall$ properties,
where a product of programs (or transition systems) serves straightforwardly to
reduce the problem to ordinary verification and thereby leverage existing tools.
Products in the form of code can enable users to express useful
alignments~\cite{BartheCK16}, and can serve as a framework in which to search
for good alignments~\cite{AntonopoulosEtal2022popl,DickersonMD25}.

This paper addresses $\forall\exists$ properties for which the state of the art
is less advanced.  We leave aside model checking of finite state systems and
focus on imperative programs acting on general data structures.  One
verification approach is direct semantic reasoning~\cite{LamportS21}.  For code
and specifications in solvable assertion languages there have been exciting
advances in fully automated verification, e.g., constraint solving to find
relational invariants and alignment
conditions~\cite{UnnoTerauchiKoskinen21,ItzhakySV24}, with programs represented
by transition relations.  But solvability is a strong restriction: specs are
usually quantifier free and the programs do linear arithmetic on integer
variables, or are sufficiently similar that data operations can be abstracted by
uninterpreted functions.  

For much richer languages, refinement-oriented logics have been developed based
on the Iris separation logic which is implemented in the Rocq proof
assistant~\cite{JungKJBBD18}.  For sequential imperative programs, the recent
$\forall\exists$ logics RHLE and FERL have bespoke features to cater for
automated proof search~\cite{DickersonYZD22,Beutner24}.  It is safe to say no
$\forall\exists$ logic has emerged as a standard.  By contrast, for simple
imperative programs the core rules of Hoare logic are widely used (with minor
variations).  For $\forall\forall$, basic rules like those of Benton's
Relational Hoare Logic~\cite{Benton:popl04} are well known; they include
syntax-directed rules that support compositional reasoning and orthogonal
treatment of program constructs.

When formal verification of general programs is done in industry, it often
relies on auto-active tools like Dafny~\cite{Leino10}, Why3~\cite{Why3}, and
Viper~\cite{MuellerSS16} (also called deductive verifiers).  Some of these have
been adapted to $\forall\forall$ verification
(e.g.,~\cite{NagasamudramBN25,EilersDM23}).  But we are not aware of such tools
for $\forall\exists$.  The work reported here aims to help fill that gap.

Our starting point is the observation that judicious use of assumptions in a
product program can serve to reduce a $\forall\exists$ property to a
$\forall\forall$ property of the product by filtering out executions that are
not helpful to witness the
existential~\cite{AntonopoulosEtal2022popl,BNN23}. Free use of
assumptions is unsound, so some additional checks are needed to justify the
assumptions and also ensure the right (existential) side does not diverge when
the left side terminates. This is called \emph{adequacy} of the product with
respect to the specification of interest.  Our \emph{conceptual contribution} is
a transformation that adds assertions to a product program in such a way that
the desired $\forall\exists$ property (of underlying unary programs) is implied
by a $\forall\forall$ property of the transformed product.  The transformation
is applicable so long as nondeterminacy is made explicit in the form of standard
havoc statements.  It enables $\forall\exists$ verification in which the user
works with ordinary assertions and assumptions, and an ordinary assertion
language ---opening the door to use of existing tools including but not limited
to auto-active verifiers.  The idea is developed through the following technical
contributions.
\begin{description}
\item[A program logic] for the basic $\forall\exists$ judgement extended to
  avoidance of failures, which is needed to support assertions.  By contrast
  with the aforementioned $\forall\exists$ logics, it supports fully general 
  data dependent loop alignments and the rules are relatively simple.  The logic
  is adapted from a recent work on alignment~\cite{BNN23}.

\item[Bicoms:] a form of product programs that caters for direct translation of
  $\forall\forall$ properties to unary correctness.  They have a big-step semantics
  easily translated to intermediate verification languages used by auto-active
  verifiers.  Moreover they support weakest liberal preconditions (\emph{wlp}) that satisfy equations
  similar to standard wlp though more complicated because the loop construct
  supports data dependent alignments.  To facilitate the next contribution,
  there is a bicom contruct that combines havoc (on the right side) with an
  assumption.

\item[The filter-adequacy transformation] which formalizes the conceptual contribution: it adds assertions to check $\forall\exists$-adequacy of a product program.  The main result is a soundness theorem which confirms that successful $\forall\forall$ verification of a transformed bicom implies the $\forall\exists$ spec for 
its underlying (unary) commands.  This achievement required clean formulation of
fresh variables and framing results for bicom wlp. The soundness theorem is
proved in detail in a readable way that highlights how verification conditions
provide ingredients of deductive proof in the program logic.  The theorem has
also been fully mechanized in Rocq.

\item[A proof of principle prototype] that implements the filter-adequacy transformation 
as well as compilation of bicoms to ordinary programs with assertions and assumptions. 
The latter are verified using an existing verification tool based on verification conditions and SMT solving,
in accord with our goal to advance auto-active relational verification.  
\end{description}
The soundness theorem connects the validity of a $\forall\forall$ property with
validity of the desired $\forall\exists$ property.  Thus it is applicable
regardless of how the $\forall\forall$ property verified.  For that matter, one
could use testing of the transformed bicom, to obtain some evidence of the
$\forall\forall$ property which would be evidence of the $\forall\exists$
property. (Direct testing of $\forall\exists$ is problematic~\cite{CorrensonNFW24}.)

The soundness of state of the art auto-active and automated tools is typically
established, if at all, through informal arguments, although some recent
advances provide machine checked foundations for components of auto-active
tools~\cite{CohenJF24,DardinierEtalPOPL25}.  Our theorem is for a simple core
language but the structure of its proof, centering on a general $\forall\exists$ relational Hoare
logic, appears well suited to use with richer programming languages and a
variety of assertion languages.

The rest of the paper is structured as follows.
Section~\ref{sec:overview} uses examples to introduce some key ideas.
Sections~\ref{sec:progs}--\ref{sec:WhyRel} develop the technical contributions in the order listed above.
Section~\ref{sec:related} discusses related and future work.

There are a number of technical challenges due to handling general data-dependent loop alignment
(by contrast with more restrictive product forms in prior work such as
the $\forall\forall$ products of~\citet{DickersonMD25}).
Other challenges arise due to the presence of failure: Although our bicoms are similar to alignment products 
in prior works, failure invalidates the key left-right commute property they use~\cite{AntonopoulosEtal2022popl,DickersonMD25},
and one established way to handle failure in a product semantics~\cite{BNNN19} is precluded by 
need for a straightforward translation of bicoms to ordinary programs.

\ifsubmit
\emph{\underline{Note to reviewers}: If the paper is accepted we will submit two supporting artifacts: the prototype (which might be hard to anonymize) and the mechanized proofs which we expect to 
finish by the time of author response.
As supplemental material we are submitting (i) a version of the paper with detailed proofs in an appendix, and (ii) a preliminary version of the mechanization.
}
\fi

\section{Overview}\label{sec:overview}

This section provides an informal overview of $\forall\exists$ verification. For each example, we look at the verification problem, a bicom that captures an alignment, and an instrumented bicom for which one establishes correctness. This implies, via the paper's main theorem, the $\forall\exists$ spec for the original pair of commands.

\paragraph{An introductory example: illustrating filtering}

Let $c, c'$ be the commands $\havc{x, \havc{y}}$. (The examples act on integer variables.) Consider this verification problem:
$c \sep c': \aespec{\mathit{true}}{x \agreeRel y}$ which is our notation for a $\forall\exists$ spec with precondition $true$.  The postcondition, $x \agreeRel y$, says that the value of $x$ in the left ($\forall$) execution equals the value of $y$ on the right ($\exists$).  
Ignoring failure, the correctness judgment holds provided that for any states $s, s', t$, 
if $c$ executed in $s$ (written $c/s$) terminates in $t$, there exists state $t'$ such that $c'/s'$ terminates in $t'$ and $t(x) = t'(y)$.

Given $c,c'$, the \emph{embed} bicom $\biEmb{c}{c'}$ is a product that 
represents pairs of their executions.  The bicom spec $\bspec{\mathit{true}}{x \agreeRel y}$
means that for any pair of states $s, s'$, $\biEmb{c}{c'}/(s,s')$ does not fail and 
if $\biEmb{c}{c'}/(s,s')$ terminates in
the pair of states $(t, t')$ then $t(x) = t'(y)$. As stated, bicom correctness is essentially a $\forall\forall$ property. For our example, the spec clearly does not hold for $\biEmb{c}{c'}$: owing to nondeterminism in both $\havc{x}$ and $\havc{y}$, $t(x)$ and $t'(y)$ need not agree.

To achieve bicom correctness we introduce a \emph{filtering condition}. The intuition is that such a ``filtered'' bicom captures all executions of $c$ on the left but keeps only those executions of $c$ on the right for which the filtering condition $x\agreeRel y$ is met, that is, the havoc'd $y$ on the right equals the havoc'd value of $x$ on the left. Havocing $x$ on the left is accomplished using the bicom $\biEmb{\havc{x}}{\skipc}$. Incorporating the filtering condition is accomplished using the \emph{havoc-filter} bicom $\havRt{y}{(x \agreeRel y)}$ which abbreviates $\biEmb{\skipc}{\havc{y}}; \keyw{assume}\;(x \agreeRel y)$. Here is the overall filtered bicom:
\[
\biEmb{\havc{x}}{\skipc} \,;\, \havRt{y}{(x \agreeRel y)}
\]
The \havRtKeyword\ bicom assumes $x \agreeRel y$, but free use of assumptions is unsound for our purpose: 
one could use $\biEmb{\havc{x}}{\skipc}; \havRt{y}{\mathit{false}}$
which satisfies $\bspec{\mathit{true}}{x \agreeRel y}$ at the cost of having no executions at all.  Its correctness does not imply the desired $\forall\exists$ spec. 
What we need is that there exists at least one value of $y$ on the right that agrees with $x$'s value on the left. This can be ensured by adding an assertion preceding the \havRtKeyword, like this: 
\[
  \biEmb{\havc{x}}{\skipc} \,;\, 
  \assertc{\some{\smSep y}{x \agreeRel y}} \,;\, 
  \havRt{y}{(x \agreeRel y)}
\]
Here $\some{\smSep y}{\ldots}$ expresses quantification over $y$ on the right side.
This bicom is correct, that is, it satisfies $\bspec{\mathit{true}}{x \agreeRel y}$
which in particular expresses that there is no assertion failure.
Moreover correctness is easily checked with any ordinary verification technique or tool: The bicom can be translated to an ordinary command, interpreting the embed construct as sequence and renaming the two sides apart.

Later we will show that the step of adding the assertion is performed by the
filter-adequacy transformation, which is called $\chk$ for short.  The main
result of the paper is that if a bicom spec such as
$\bspec{\mathit{true}}{x \agreeRel y}$ holds for a transformed bicom then the
underlying commands satisfy the $\forall\exists$ spec which is
$\aespec{\mathit{true}}{x \agreeRel y}$ in this case.  The connection with the
original commands $\havc{x}$ and $\havc{y}$ is a matter syntactically projecting
out the unary parts, discarding relational assertions and discarding the assumption
part of \havRtKeyword.

Now consider a variation on the example: let $c'$ be $\havc{y};y:=2\cdot y$.  Filtering is only helpful in connection with nondeterminacy so we consider this bicom:
\( \biEmb{\havc{x}}{\skipc}; \havRt{y}{\ldots}\,; \biEmb{\skipc}{y:=2\cdot y} \).
What should the elided assumption be?  
If we use $x\agreeRel y$, the assertion added by $\chk$ will be as before, and will not fail.
But that assumption does not support successful verification of $\bspec{\mathit{true}}{x \agreeRel y}$ because of course from $x\agreeRel y$ the assignment to $y$ establishes $x\agreeRel y/2$.
This suggests the assumption should be $x\agreeRel 2\cdot y$ which does suffice to prove the postcondition.
But now $\chk$ inserts the assertion $\some{\smSep y}{x \agreeRel 2\cdot y}$ which 
fails because not all integers are even.
Indeed, the example commands do not satisfy $\aespec{\mathit{true}}{x \agreeRel y}$.

\paragraph{Conditionally aligned loops: illustrating loop variant on right}

\begin{figure}[t]
  \begin{subfigure}[t]{.35\textwidth}
    \begin{lstlisting}
  z := 0; w := 0;
  while x > 0 do
    if w = 0 then
      hav z;
      x := x - 1;
    end;
    w := (w + 1) mod n
  done;
     \end{lstlisting}
  \end{subfigure}
    \begin{subfigure}[t]{.6\textwidth}
    \begin{lstlisting}
  |_ z := 0 _|; |_ w := 0 _|;
  while x > 0 | x > 0 .
      *<| w <> 0 *<] | [> w <> 0 |> do variant { [> (n - w) mod n |> }
    if w = 0 | w = 0 then
       < hav z | skip >;
       havF z { z =:= z };
       |_ x := x - 1 _|;
    end;
    |_ w := (w + 1) mod n _|;
  done;
   \end{lstlisting}
  \end{subfigure}
  \vspace*{-3ex}
  \caption{Program \graybox{$c1$} and a bicom for two copies of $c1$. Notation such as \protect\lstinline{|_ z := 0 _|} denotes bicom $\biEmb{z:=0}{z:=0}$.
Notation $\leftex{w \neq 0}$ means $w\neq 0$ in the left state; $\rightex{w \neq 0}$ means $w\neq 0$ in the right state.
}
  \label{fig:whyrelext:cond-align}
\end{figure}

We now discuss an example involving conditionally aligned loops.  Consider the
unary program $c1$ shown in Figure~\ref{fig:whyrelext:cond-align} (using concrete
syntax of the prototype which has been used to verify the examples).  This
program havocs $z$, $x$ many times, but does so in a loop that stutters.  We
assume $n>0$ on both sides, but not $n\agreeRel n$.  That is, we aim to show
\begin{equation}\label{eq:c1}
c1 \sep c1: \aespec{\leftF{n>0}\land\rightF{n>0}\land x \agreeRel x}{z \agreeRel z}
\end{equation}
where $\leftF{n>0}$ (resp.\ $\rightF{n>0}$) says the right (resp.\ left) state satisfies $n>0$.  

Figure~\ref{fig:whyrelext:cond-align} shows a bicom for the pair $c1 \sep c1$.  
The bicom follows a common heuristic: aligning similar control structures.
The notation \lstinline{if...|...} indicates that two if-commands are aligned.
It represents all the possible combinations of then/else execution paths.  
The so-called bi-while represents two related loops, but with an extra feature: left and right \emph{alignment conditions}, here $\leftF{w\neq 0}$ and $\rightF{w\neq 0}$.  
The aligned loop body includes a filtering condition for $\havc{z}$ that assumes $z\agreeRel z$. 
The conditionally aligned loop works as follows. Reason about a left-only iteration when $w\neq 0$ holds in the left state; reason about a right-only iteration when $w\neq 0$ holds in the right state. That is, only the left or right side underlying execution takes effect.
In both cases, the example loop's effect is to modify (increment) $w$ but to leave $z$ unchanged.  When both alignment conditions are false, that is, when $w=0$ in both the left and right states, we reason about the loop bodies in lockstep.
The keyword \lstinline{variant} introduces an annotation used by the \chk\ function; it has no effect on the meaning of the bicom.

The variant $\rightex{(n - w) \mathbin{mod} n}$ is an integer expression
evaluated in the right state. Note that it is \emph{not} a conventional variant:
execution on the right does not always decrease the value.  It does, however,
decrease when $\rightF{ w \neq 0 }$ is true, and that is the only condition
under which right-only iteration occurs.  Here is the point. In a sequence that
intersperses left-only, right-only, and joint executions of the loop body,
divergence on the left, or jointly, cannot falsify a $\forall\exists$ property,
but divergence on the right alone can do so.  The $\chk$ transformation must
preclude such divergence.

\begin{figure}[t]
\begin{lstlisting}
  |_ z := 0 _|; |_ w := 0 _|;
  while x > 0 | x > 0 . *<| w <> 0 *<] | [> w <> 0 |> do variant { [> (n - w) mod n |> }

    var vsnap : int, rosnap : bool in
    vsnap := (n - w) mod n; /* added by chk: snapshot variant */
    rosnap := [> w <> 0 |> ; /* added by chk: snapshot of w <> 0 on right */

    if w = 0 | w = 0 then
       < hav z | skip >;
       assert { exists |z. z =:= z }; /* added by chk */
       havF z { z =:= z };
       |_ x := x - 1 _|;
    end;
    |_ w := (w + 1) mod n _|;

    /* added by chk: assert variant decreases under right-only iterations */
    assert { rosnap -> [> 0 <= (n - w) mod n < vsnap |> };
  done;
\end{lstlisting}
\vspace*{-1ex}
\caption{The $\chk$ function applied to the bicom in Figure~\ref{fig:whyrelext:cond-align}.}
\label{fig:whyrelext:chkc1}
\end{figure}

The result of applying $\chk$ on the bicom in
Figure~\ref{fig:whyrelext:cond-align} is shown in
Figure~\ref{fig:whyrelext:chkc1}. Variable \lstinline{rosnap} snapshots the
truth value of the condition $\rightex{w\neq 0}$ under which the current
iteration will be right-only, and \lstinline{vsnap} snapshots the variant.  The
added assertion requires the variant to decrease during right-only iterations.
The $\chk$'d bicom can be verified for $\bspec{\leftF{n>0}\land\rightF{n>0}\land
x \agreeRel x}{z \agreeRel z}$, so the main theorem implies
(\ref{eq:c1}).

\paragraph{Possibilistic noninterference: illustrating may termination (right-sided divergence)}

\newcommand{\vhigh}{\ensuremath{\mathit{high}}}
\newcommand{\vlow}{\ensuremath{\mathit{low}}}

The preceding examples are terminating programs.  The next example, adapted
from \citet{UnnoTerauchiKoskinen21}, involves loops that can diverge. 
We refer to it as \graybox{$c2$}:
\[
\begin{array}{l}
\keyw{if}\ \vhigh \neq 0\ \keyw{then}\  \havc{x};\ 
    \keyw{if}\ x \geq \vlow\ \keyw{then}\ \skipc\
    \keyw{else}\ (\keyw{while}\ true\ \keyw{do}\ \skipc) \\
\keyw{else}\  x := \vlow; \, \havc{b}; 
  \, (\keyw{while}\ (b \neq 0)\ \keyw{do}\ x := x+1;\ \havc{b})
\end{array}
\]

We aim to show  $c2 \sep c2: \aespec{\vlow\, \agreeRel\, \vlow}{x\,\agreeRel\, x}$. 
All four combinations of initial values of $\vhigh$ must be considered,
and different alignments are needed in the different case.  
Whereas the previous example used loop alignment conditions to express a data dependent alignment, here the desired alignments are expressed using bi-if, a 4-way conditional
\[ \ifFourLong{(\vhigh\neq 0\sep \vhigh\neq 0)}{\Btt}{\Btf}{\Bft}{\Bff} \]
with suggestively named bicoms $\Btf,\ldots,\Bff$ shown in Figure~\ref{fig:unno}. For example, when $\vhigh\neq 0$ (symmetrically $\vhigh=0$) in both initial states, use lockstep alignment (Figure~\ref{fig:unno} (\Btt)). Alternatively, if $\vhigh\neq 0$ in the left initial state but $\vhigh=0$ in the right initial state, use sequential, left-first alignment (Figure~\ref{fig:unno} (\Btf)). Note the possibility of nontermination of the left execution when $x < \vlow$. If the left execution does terminate, $x\geq low$ can be asserted. In the right execution, the \keyw{WhileR} bicom abbreviates the standard \keyw{while} bicom where the left loop test and left alignment conditions  are false, but the right alignment condition is true. (Recall from the previous example that when a left (resp. right) alignment condition holds, the bicom does a left (resp. right)-only iteration; for lockstep iterations, both left and right alignment conditions are false.) In essence, this is a sequential right-execution of the loop,
but written in a form that allows \havRt\ to be used.
Instead of proving (must) termination of all executions, we prove the existence of a terminating right execution (may termination). This is done by filtering right executions to only allow those where $\havc{b}$ sets the value of $b$ in the right state (written $\rightex{b}$) to the difference of $x$'s values in the left and right states, and maintaining $\rightex{b}$ as variant: filtering forces $\rightex{b}$ to decrease in every iteration. The descriptions of \Bft\ and \Bff\ are similar. In \Bft, the \keyw{WhileL} bicom abbreviates the standard \keyw{while} bicom where the right loop test and right alignment condition are false, but the left alignment condition is true.

Finally, to validate the assumptions in \havRtKeyword\ bicoms, $\chk$ introduces assertions (elided in Figure~\ref{fig:unno}) preceding them.  For example, $\assertc{\some{\smSep b}{\rightex{b}=\leftex{x}-\rightex{x}}}$ in \Btf. 

In this overview section we are glossing over the projections that connect a bicom with the underlying commands.
In the case of bi-if, this imposes constraints, for example the left projections
of $\Btt$ and $\Btf$ should be essentially the same.  

\begin{figure}[t]
  \begin{subfigure}[t]{.45\textwidth}
    \begin{lstlisting}
 (B1) <hav x | skip>; havF x { x =:= x};
      if x >= low | x >= low then
        |_ skip _|
      else
        <while true do skip done | skip>;
    \end{lstlisting}
  \end{subfigure}
    \begin{subfigure}[t]{.45\textwidth}
    \begin{lstlisting}
 (B2) <hav x | skip>;
      <if x >= low then skip
       else while true do skip done
      | skip>;
      <skip | x := low>;
      havF b { [> b |> = *<| x *<] - [> x |> };
      WhileR b <> 0 do
        variant { [> b |> }
        <skip | x := x+1>;
        havF b { [> b |> = *<| x *<] - [> x |> }
      done;
    \end{lstlisting}
  \end{subfigure}
    \begin{subfigure}[t]{.45\textwidth}
    \begin{lstlisting}
 (B3) <x := low; hav b | skip>;
      WhileL b <> 0 do 
         <x := x+1; hav b | skip>;
      done;
      havF x { x =:= x };
      <skip | if x >= low then skip>
    \end{lstlisting}
  \end{subfigure}
  \begin{subfigure}[t]{.45\textwidth}
    \begin{lstlisting}
 (B4) |_ x := low _|;
      <hav b | skip>; havF b { b =:= b };
      while b <> 0 | b <> 0 do
        |_ x := x+1 _|;
        <hav b | skip>; havF b { b =:= b };
      done;
    \end{lstlisting}
  \end{subfigure}
\vspace*{-2ex}
  \caption{Bicoms \Btt\dots\Bff\ capture different alignments of $c2 \sep c2$.}
  \label{fig:unno}
\end{figure}

\paragraph{Summary}
The filter-adequacy transformation supports the following methodology
for verifying a $\forall\exists$ judgment $c\sep c':\aespec{\R}{\S}$.
\begin{enumerate}

\item Find a bicom $B$ that represents $c,c'$ in the sense that it's projections are semantically equivalent to $c$ and $c'$.
Moreover $B$ should represent a helpful alignment in which right-side havocs are accompanied by filter assumptions,
and simple relational invariants can be used.
Finding $B$ is not the focus of this paper; it can be done with user guidance or automated search, possibly using equivalence-preserving rewriting $\biEmb{c}{c'}$ as 
discussed in under related work (Section~\ref{sec:related}).  
A conservative check for the semantic equivalences is that $c,c'$ should be the syntactic projections 
of $B$, and this was sufficient in our experience.

\item Transform $B$ by applying $\chk$.  This is a simple linear time transformation.
\item Attempt to verify the $\forall\forall$ property $\chk(B): \bspec{\R}{\S}$.  As discussed under related work, $\forall\forall$ verification of well aligned products is amenable to automation.  
\end{enumerate}
If verification is successful then $c\sep c':\aespec{\R}{\S}$ holds by the main theorem of the
paper.  Otherwise, find another $B$ to try.

\section{Programs and a {\(\forall\exists\)} relational program logic}\label{sec:progs}

This section defines the syntax and semantics of commands
and the relational correctness properties of interest.
Then some proof rules for $\forall\exists$ correctness are given.  

We treat assertions and relations by shallow embedding,\footnote{Readers not
   familiar with shallow embedding can consult~\cite{NielsonBook92} or 
   the chapter \texttt{Hoare.v} in~\cite{PierceSF2}. See also~\cite{WangCT24}.} 
i.e., as sets of stores and relations
on stores rather than syntactic formulas.  This is done both in specifications
(Section~\ref{sec:correctness}) and in code, specifically in \assertc\ commands
and the \havRtKeyword\ construct.  Use of shallow embedding provides generality,
in that our results are not tied to a specific assertion language. A key benefit
is that weakest preconditions can be defined in the ambient logic, which avoids
the need to consider expressiveness issues~\cite{AptOld3}.  In the paper the
ambient logic is ordinary logic and set theory.  In the mechanization, the
ambient logic is that of Rocq.\footnote{Making use of these standard libraries:
\texttt{Classical}, \texttt{FunctionalExtensionality}, and
\texttt{PropExtensionality}.}  Given that the program syntax has ``semantic''
assertions in this manner, we also choose shallow embedding of expressions in
code.  This precludes reasoning by structural induction on expressions or
assertions but we have no need for that.

We do need logical operations like quantification and substitution for assertions and relations. We sketch key definitions, which amount to the usual semantics of predicate logic formulas, and any missing details can be found in the Rocq development.
We use conventional syntactic notations for expressions and assertions, 
for example $p\land q$ instead of $p\intersect q$.
So the casual reader can ignore fine points about shallow embedding. 

\subsection{Assertions and commands}

We assume given sets $\keyw{boolVar}$ and $\keyw{intVar}$ that are disjoint and
both denumerable.  Metavariables $x,y,\ldots$ range over
$\keyw{boolVar}\union\keyw{intVar}$.  A \dt{store} is a total function from
$\keyw{boolVar}\union\keyw{intVar}$ to values that maps integer variables to
$\Z$ and boolean variables to $\{\True,\False\}$.  The set of all stores is named
$\Store$.  
An integer (resp.\ boolean) \dt{expression} $e$ is a total function $\Store\to\Z$ (resp.\ $\Store\to\{\True,\False\}$.  
We write $\update{s}{x}{n}$ for the update of store
$s$ to map $x$ to value $n$.  
An \dt{assertion} $p$ is a subset $p\subseteq\Store$.

As mentioned earlier, we use ordinary formula syntax for logical operations.
For example we write $p\imp q$ instead of $(\Store\setminus p)\union q$.  
Let $\imp$ bind less tightly
than $\land$ and $\lor$.  
For assertion $p$ we write $\subst{p}{x}{e}$ for
semantic substitution: $\subst{p}{x}{e}$ is defined to be $\{ s \mid
\update{s}{x}{e(s)} \in p \}$.  This enjoys standard properties of
syntactic substitution, e.g., distributing through boolean operators.
Quantifiers, as operators on store sets, are defined by $\graybox{$\all{x}{p}$}
\eqdef \{ s \mid s\in \subst{p}{x}{v} \mbox { for all }v\in\Z \}$ and
$\graybox{$\some{x}{p}$} \eqdef \{ s \mid s\in \subst{p}{x}{v} \mbox { for some
}v\in\Z \}$.  Here and in the sequel we use words to distinguish metalanguage
from the operators on assertions; but sometimes we use symbols for metalanguage
when needed for succinctness (e.g., in Definition~\ref{def:semframe}).
In some contexts we implicitly convert a boolean expression $e$ to the assertion $\{ s
\mid e(s)=\True \}$.

Commands are given by this grammar,
where $e$ ranges over expressions and $p$ over assertions.  
\[ c::= \skipc \mid x:=e \mid \havc{x} \mid \assertc{p} 
     \mid c;c \mid \ifc{e}{c}{c} \mid \whilec{e}{c} 
\]
We assume without further comment that commands are type correct, e.g., conditional tests are
boolean expressions and in $x:=e$ the expression has the right type for the variable $x$.  

The evaluation semantics of commands is entirely standard.
Throughout the paper identifiers $s,t,u$ range over stores.
The outcome $\fail$ (pronounced \dt{fail}) represents assertion failure which is the only form of runtime failure in our idealized language.
The relation \graybox{$c/s\ceval \phi$} says that from initial store $s$ the command yields outcome $\phi$. 
A few cases are in Figure~\ref{fig:commandSemSelected}.  
We write \graybox{$s\models p$} for $s\in p$.

\begin{figure}[t]
\begin{small}
\begin{mathpar}
\inferrule{s\models p}{\assertc{p} / s \ceval s}

\inferrule{s\not\models p}{\assertc{p} / s \ceval \fail}

\inferrule{n\in\Z}{\havc{x} / s \ceval \update{s}{x}{n}}

\inferrule{c/s \ceval t\\ d/t \ceval \phi}{c;d/ s \ceval \phi}

\end{mathpar}
\end{small}
\vspace*{-3ex}
\caption{Semantics of selected commands.
The outcome $\phi$ ranges over the disjoint union $\Store\union\{\fail\}$.
}
\label{fig:commandSemSelected}
\end{figure}

\subsection{Relations and relational correctness judgments}\label{sec:correctness}

A \dt{Store relation} (relation, for short) is a subset $\R\subseteq\Store\times\Store$.
Relations are ranged over by identifiers $\P,\Q,\R,\S,\ldots$.  
The requisite operations include the following:
quantification over variables on the left (written 
$\all{x\smSep }{\R}$ and $\some{x\smSep }{\R}$), on the right
($\all{\smSep x}{\R}$ and $\some{\smSep x}{\R}$),
an assertion on the left (written $\leftF{p}$) or right ($\rightF{p}$), and 
of course $\land,\imp,\dots$.
We also use the form  $E \oplus E$
where $E$ ranges over two-state expressions and $\oplus$ ranges over primitive relations including equality and inequality.
\dt{Two-state expressions} are integer-valued expressions that depend on a pair of states.  To be precise, we let $E$ range over functions $\Store\times\Store\to\Z$, 
and we write two-state expressions using the left and right embeddings of unary expressions.
These embeddings are defined by $\graybox{$\leftex{e}$} \eqdef (s,t)\mapsto e(s)$ and 
$\graybox{$\rightex{e}$} \eqdef (s,t)\mapsto e(t)$.
(This avoids cluttered notations found in other works that rely on renaming
to encode pairs of states as a single state.)

For assertions, the form $\leftF{p}$ (resp.\ $\rightF{p}$) says the left 
(resp.\ right) store satisfies unary assertion $p$.
That is, $\leftF{p} = \{(s,t)\mid s\in p\}$ and $\rightF{p} = \{(s,t)\mid t\in p\}$.
As an example, $\leftex{x} > 0$   
says the value of $x$ on the left is positive,
which is equivalent to $\leftF{x>0}$.  
Another example is $\leftex{x} > \rightex{y}+1$. 
We write \graybox{$e \agreeRel e'$} as abbreviation of $\leftex{e}=\rightex{e'}$.

The quantifier forms use $x\smSep$ (resp.\ $\smSep x$) to indicate quantification on the left (resp.\ right).
For relation $\R$ on stores we write 
$\subst{\R}{x\smSep}{e\smSep}$ for substitution of $e$ for $x$ in the left store.
Similarly $\subst{\R}{\smSep x}{\smSep e}$ substitutes on the right.
The definitions are a straightforward generalization of unary substitution:
$(s,t)\in \subst{\R}{\smSep x'}{\smSep e'}$ iff $( s, \update{t}{x'}{e'})\in \R$.
We write $\some{x\smSep}{\R}$ for existential quantification over $x$ on the left side:
$\graybox{$\some{x\smSep }{\R}$} 
\eqdef\{(s,s')\mid (s,s')\in\subst{\R}{x\smSep }{v\smSep } \mbox{ for some } v\in\Z \}$.
Similarly for the right side and for universal quantification.
These operations enjoy the usual properties of corresponding operations on formulas, 
e.g., $\subst{\R}{|x}{|e} = \R$ if $\R$ does not depend on $x$,
which is made precise in Lemma~\ref{lem:substR_outside_frame}.
Note that we often use primed identifiers, simply as mnemonic for things on the right.

An assertion is \dt{valid}, written \graybox{$\models p$}, iff $s\models p$ for all $s\in \Store$. In particular, $\models p\imp q$ iff $p\subseteq q$.
We write $s,s'\models\R$ for $(s,s')\in\R$.
Validity of a relation $\R$, written \graybox{$\models \R$}, means $s,s'\models\R$ for all $s,s'$.  Owing to shallow embedding, $\models \R\iff\S$ 
says we have $\R=\S$, i.e., the same sets of pairs.
Validity of relations is used to define the primary correctness property of interest in this paper, the $\aespecSym$ judgment.  The $\rspecSym$ judgment plays a supporting role.

\begin{definition}\label{def:relcorrect}
A $\forall\forall$ correctness judgment $c\sep c':\rspec{\R}{\S}$ is
\dt{valid}, written \graybox{$\models c\sep c':\rspec{\R}{\S}$}, iff
for all $s,s',\phi,\phi'$, if $s,s'\models\R$
and $c/s\ceval \phi$ and $c'/s'\ceval \phi'$ then
$\phi\neq\fail$, $\phi'\neq\fail$, and 
$\phi,\phi'\models\S$.

A $\forall\exists$ correctness judgment $c\sep c':\aespec{\R}{\S}$ is \dt{valid}, written
\graybox{$\models c\sep c':\aespec{\R}{\S}$}, iff
for all $s,s'$, if $s,s'\models\R$ then 
(i)  $c/s\not\ceval\fail$, and 
(ii) for all $t$, if $c/s\ceval t$ 
then there is $t'$ such that $c'/s'\ceval t'$ and $t,t'\models\S$
\end{definition}

The particular treatment of failure in the definition of $\rspecSym$ is 
motivated by considerations about bicoms discussed in Section~\ref{sec:semantics}. 
As discussed under related work (Section~\ref{sec:related}) one can consider other treatments of failure.  The absence of failure is a unary property that may as well be proved as such.

\begin{figure}[t!]
\begin{footnotesize}
\begin{mathpar}
\inferrule[eSkip]{}{
\skipc \sep \skipc : \aespec{\R}{\R}
}

\inferrule[eAsgnSkip]{}{
x:=e \sep \skipc  : \aespec{\subst{\R}{x|}{e|}}{\R}
}

\inferrule[eSkipAsgn]{}{
\skipc\sep x:=e : \aespec{\subst{\R}{|x}{|e}}{\R}
}

\inferrule[eHavSkip]{}{ 
  \havc{x}\sep\skipc : \aespec{(\all{x\smSep}{\R})}{\R}
}

\inferrule[eSkipHav]{}{ 
  \skipc\sep\havc{x}:\aespec{(\some{\smSep x}{\R})}{\R}
}

\inferrule[eAsrtSkip]{}{ 
  \assertc{q}\sep\skipc : \aespec{\R\land\leftF{q}}{\R}
}

\inferrule[eSkipAsrt]{}{ 
  \skipc\sep\assertc{q}: \aespec{\R\land\rightF{q}}{\R}
}

\inferrule[eSeq]{
  c\sep c' : \aespec{\P}{\Q} \\
  d\sep d' : \aespec{\Q}{\R}
}{
  c;d \Sep c';d' : \aespec{\P}{\R}
}

\inferrule[eIf4]{
  c\sep c' : \aespec{\Q\land \leftF{e}\land\rightF{e'}}{\R} \\
  c\sep d' : \aespec{\Q\land \leftF{e}\land\neg\rightF{e'}}{\R} \\
  d\sep c' : \aespec{\Q\land \neg\leftF{e}\land\rightF{e'}}{\R} \\
  d\sep d' : \aespec{\Q\land \neg\leftF{e}\land\neg\rightF{e'}}{\R} 
}{
  \ifc{e}{c}{d} \Sep \ifc{e'}{c'}{d'} : \aespec{\Q}{\R}
}

\inferrule[eDo]{ 
  c\sep \skipc : \aespec{\Q\land \leftF{e}\land\P }{\Q}
   \\
\mbox{$ \skipc\sep c' : \aespec{\Q\land \rightF{e'}\land\P'\land (n=E) }{\Q\land (0\leq E < n)}$ for all $n\in\Z$}
   \\
  c\sep c' : \aespec{\Q\land \leftF{e}\land\rightF{e'} \land \neg\P \land \neg\P'}{\Q}
  \\
  \Q\imp (\leftex{e} = \rightex{e'}
          \lor (\P \land \leftF{e})\lor (\P' \land \rightF{e'})) 
}{ \whilec{e}{c} \Sep \whilec{e'}{c'} : 
  \aespec{\Q}{\Q\land \neg\leftF{e}\land\neg\rightF{e'}}
}

\inferrule[eRewrite]{ c\sep c': \aespec{\Q}{\R} \\ c\kateq d \\  c'\kateq d' }
{ d\sep d': \aespec{\Q}{\R}  }

\inferrule[eConseq]{
  \P\imp \R \\
  c\sep c' : \aespec{\R}{\S} \\
  \S \imp \Q 
}{
  c\sep c' : \aespec{\P}{\Q} \\
}

\inferrule[eFalse]{}{
  c\sep c' : \aespec{\False}{\R} 
}

\end{mathpar}
\end{footnotesize}
\vspace*{-2ex}
\caption{ERHL: Core rules for the $\aespecSym$ judgment}
\label{fig:ERHL}
\end{figure}
Figure~\ref{fig:ERHL} gives a set of proof rules for the judgment $c\sep
c':\aespec{\R}{\S}$.  We refer to these rules as the logic \dt{ERHL}, as they
are adapted from the logic named ERHL+ in~\citet{BNN23}, with the addition
of two rules for $\keyw{assert}$.  The semantics in~\cite{BNN23} does not involve failure,
however; we have proved all these rules are sound for our semantics.

In rule \rn{eDo} the right-only premise is a family of premises indexed by
integers $n$.  Put differently, $n$ is a universally quantified variable in the
metalanguage.  It serves to snapshot the value of the variant at the start of an
iteration.  It is possible to formalize the rule using instead a single premise
with a fresh program variable in place of $n$.  We choose this version
(following~\cite{BNN23}) because it avoids technicalities about
freshness that would add complications to the already intricate proofs of
Lemma~\ref{lem:uChk_term} and Theorem~\ref{thm:main}.

The following is used in rule \rn{eRewrite} and later to connect commands with bicoms.
 
\begin{definition}[kat equivalence]\label{def:kateq}
Define \graybox{$\kateq$} to be the relation on commands 
that is the least congruence that satisfies the following (for all $c,d$):
\(
\begin{array}[t]{ll}
\skipc;c \kateq c 
\quad
& \ifc{\True}{c}{d}\kateq c \\
c;\skipc \kateq c
& \whilec{\False}{c}\kateq \skipc 
\end{array}\).
\end{definition}
Here congruence is with respect to the command combinators: sequence, if, and while.
We use the term \dt{kat equivalence} with reference to KAT~\cite{Kozen97}, a theory of imperative control structure that has been used to minimize the number of core rules needed for a relational logic~\cite{BNN23}.   
Results in this paper hold if $\kateq$ is defined to satisfy additional equations
(as in \cite{BNN23}).
What matters is that $\kateq$ implies semantic equality (Lemma~\ref{lem:bieq_bsem_equiv}).
A potential source of additional equations is FailKAT~\cite{Mamouras17};
it models failure, unlike KAT.

Figure~\ref{fig:ERHLadditional} gives additional rules which are derivable from
those in Figure~\ref{fig:ERHL} using \rn{eRewrite}.\footnote{To derive \rn{eSkipDo}, apply \rn{eDo}
to prove $\whilec{\False}{\skipc} \Sep \whilec{e}{c}
: \aespec{\Q}{\Q\land \neg\leftF{\False}\land\neg\rightF{e'}}$, then
use \rn{eRewrite} with $\whilec{\False}{\skipc}\kateq\skipc$ from
Definition~\ref{def:kateq}, and \rn{eConseq} to eliminate $\neg\leftF{\False}$
(which is just $\True$).  By instantiating \rn{eDo} with alignment conditions
$\P,\P':=\False,\True$, the left-only and joint premises are easily proved
using \rn{eFalse} and \rn{eConseq}.  The derivation of \rn{eSkipIf} is similar,
using the equation $\ifc{\True}{\skipc}{\skipc}\kateq\skipc$ with \rn{rIf},
noting that two of the premises can then be proved using \rn{eFalse}
and \rn{eConseq}.
}

\begin{theorem}[soundness of proof rules]
\label{thm:soundness}
The proof rules in Figure~\ref{fig:ERHL} are all sound: for any instantiation in which the premises are valid, the conclusion is valid.
\end{theorem}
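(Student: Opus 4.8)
The plan is to check each rule of Figure~\ref{fig:ERHL} in turn, in every case assuming the premises valid and discharging the two conjuncts of the conclusion's validity from Definition~\ref{def:relcorrect}: (i) nonfailure of the left ($\forall$) command, and (ii) existence of a matching right ($\exists$) run establishing the postrelation. The axiom-shaped rules are immediate from the operational semantics together with the substitution law for relations, and I would record only the one governing idea in each. In \rn{eHavSkip} the havoc is on the $\forall$ side, so \emph{every} outcome $\update{s}{x}{n}$ must satisfy the post, which is exactly what the universal left precondition $\all{x\smSep}{\R}$ guarantees; dually in \rn{eSkipHav} the havoc is on the $\exists$ side and the existential precondition $\some{\smSep x}{\R}$ supplies the witness value needed for (ii). In \rn{eAsrtSkip} the conjunct $\leftF{q}$ is precisely what rules out the left assertion failure demanded by (i), whereas in \rn{eSkipAsrt} the assert is on the existential side, and $\rightF{q}$ keeps the single witnessing run failure-free. \rn{eConseq} and \rn{eFalse} are routine (the latter vacuous, since no pair satisfies the empty relation), and \rn{eRewrite} follows because $\kateq$ implies semantic equality of commands (the evaluation-relation fact recorded after Definition~\ref{def:kateq}, cf.\ Lemma~\ref{lem:bieq_bsem_equiv}), hence preserves validity exactly.

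For \rn{eSeq} I would chain the premises: from $s,s'\models\P$, premise one gives nonfailure of $c$ and, for each left outcome $t$, a right outcome $t'$ with $t,t'\models\Q$; premise two then gives nonfailure of $d$ from $t$ and a matching $d'$-run from $t'$ establishing $\R$, and composing the two right runs yields the witness for $c';d'$. \rn{eIf4} is a four-way case analysis on the booleans $e(s)$ and $e'(s')$: each case matches exactly one premise, whose precondition carries the corresponding sign of $\leftF{e}$ and $\rightF{e'}$, and since expressions are total the test evaluations never fail.

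The real work—and the point where the failure-free development of \cite{BNN23} must genuinely be extended—is \rn{eDo}. Fixing $s,s'\models\Q$, I would build the right loop run matching a given terminating left run. The key device is a lexicographic measure $(L,E)$, where $L$ is the number of left body-executions still remaining in the (finite, since terminating) left run and $E$ is the current value of the right variant. At each step the coverage premise $\Q\imp(\leftex{e}=\rightex{e'}\lor(\P\land\leftF{e})\lor(\P'\land\rightF{e'}))$ dictates the mode: a left-only step (premise one) or a joint step (premise three) strictly decreases $L$; a right-only step (premise two, instantiated at $n=E$) leaves $L$ fixed but, by its conclusion $0\le E<n$, strictly decreases $E$ while keeping it nonnegative. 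Thus every step decreases $(L,E)$ in the well-founded lexicographic order—this is exactly the formal content of the informal remark that right-only divergence is the only thing that can falsify a $\forall\exists$ property and that the variant is designed to preclude it. Each step preserves $\Q$ by the relevant premise, and when both tests finally agree and are false the required postrelation $\Q\land\neg\leftF{e}\land\neg\rightF{e'}$ has been reached.

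I expect the main obstacle to be treating (i) and (ii) in a single argument rather than separately. For (i) I do not get to assume left termination, so I would instead induct on the finite-height derivation of a hypothetical failure $\whilec{e}{c}/s\ceval\fail$: running the same mode-selection construction alongside it keeps $\Q$ true of the current left state paired with the constructed right state, so at the iteration that fails the nonfailure clause of the applicable left-only or joint premise contradicts $c/\cdot\ceval\fail$. The delicate part is that between the finitely many left iterations one may have to insert right-only steps to re-establish $\Q$; these are again bounded by the variant, so the construction stays finite. Making this interleaving precise, and verifying that the mode the coverage condition selects is always one for which an applicable premise exists, is where the bookkeeping is heaviest—everything else reduces to the substitution law and the operational rules of Figure~\ref{fig:commandSemSelected}.
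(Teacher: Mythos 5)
Your proposal cannot be compared line-by-line against an in-text argument, because the paper never prints a proof of Theorem~\ref{thm:soundness}: the text only remarks that the rules are adapted from ERHL+ of \cite{BNN23} (whose semantics has no failure) and that soundness was re-proved for the present failure-aware semantics; the actual proof lives in the Rocq mechanization. That said, what you propose is the natural direct-semantic route such a proof must take, and its essentials are correct: per-rule case analysis against Definition~\ref{def:relcorrect}; \rn{eSeq} correctly threads premise one's clause (ii) to obtain the right-side state that premise two's clause (i) needs before it can exclude failure of $d$; \rn{eRewrite} rests on $\kateq$ preserving $\means{\cdot}$ (Lemma~\ref{lem:bieq_bsem_equiv}); and all the real weight falls on \rn{eDo}, where your devices—mode selection driven by the coverage disjunct, instantiating the right-only premise at $n:=E$, and a separate induction on the failure derivation for clause (i)—are the right ones.

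Two points in your \rn{eDo} case need tightening, though neither breaks the argument. First, the joint premise carries $\neg\P\land\neg\P'$ in its precondition, so the construction may not take a joint step merely because both tests are true; the one-sided modes must be tried with priority: if $\P\land\leftF{e}$ holds use left-only, else if $\P'\land\rightF{e'}$ holds use right-only, and only otherwise does the coverage condition force $\leftex{e}=\rightex{e'}$ together with $\neg\P$ and $\neg\P'$, making the joint premise applicable. Your ``bookkeeping'' remark gestures at this, but the priority order is the actual content of that verification. Second, lexicographic order on $\Z\times\Z$ is not well-founded; what makes your measure legitimate is precisely the postcondition clause $0\leq E<n$, which forces $E$ to be positive whenever a right-only step is taken and pins every block of consecutive right-only steps to length at most that entry value. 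Relatedly, the right-only steps in your failure argument are not inserted ``to re-establish $\Q$''---$\Q$ is invariant throughout---but because when $\neg(\P\land\leftF{e})$ and $\P'\land\rightF{e'}$ hold, neither the left-only nor the joint premise is applicable, so the construction is forced to advance the right side before it can again make progress toward the failing left iteration.
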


\begin{figure}[t!]
\begin{footnotesize}
\begin{mathpar}

\inferrule[eSkipDo]{ 
\mbox{$\skipc\sep c : 
       \aespec{\rightF{e}\land\R\land (n=E)}{\Q\land (0\leq E < n)}$ for all $n\in\Z$} \\
  }{ \skipc \Sep \whilec{e}{c} : 
    \aespec{\Q}{\Q\land \neg\rightF{e}}
}

\inferrule[eSkipIf]{ 
  \skipc\sep c : \aespec{\rightF{e}\land\R}{\S} \\
  \skipc\sep d : \aespec{\neg\rightF{e}\land\R}{\S} 
  }{ \skipc \Sep \ifc{e}{c}{d} : \aespec{\R}{\S}
}

\end{mathpar}
\end{footnotesize}
\vspace*{-3ex}
\caption{Derived rules for $\aespecSym$.}
\label{fig:ERHLadditional}
\end{figure}

\begin{lemma}\label{lem:allall_allexists_term}
(i) If $\models c\sep \skipc : \rspec{\Q}{\R} $ then
$\models  c\sep \skipc : \aespec{\Q}{\R}$. 
(ii) More generally, suppose $c'$ can terminate normally, i.e.,
$\all{s}{\some{t}{ c'/s\ceval t}}$.  
Then $c\sep c': \rspec{\R}{\S}$ implies $c\sep c': \aespec{\R}{\S}$.
\end{lemma}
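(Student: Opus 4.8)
The plan is to prove part (ii) directly from the two correctness definitions and then to recover part (i) as the special case $c' = \skipc$.

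For part (ii) I would assume that $c'$ can terminate normally and that $\models c \sep c' : \rspec{\R}{\S}$, then fix arbitrary states $s,s'$ with $s,s'\models\R$ and establish the two conjuncts of the $\aespecSym$ judgment. First I would use the termination hypothesis at $s'$ to fix a store $t'_0$ with $c'/s'\ceval t'_0$. The crux is that this single $t'_0$ witnesses \emph{every} left run, precisely because the $\forall\forall$ hypothesis constrains every pair of outcomes simultaneously. For conjunct~(i), if $c/s\ceval\fail$, then instantiating $\rspecSym$ with $\phi:=\fail$ and $\phi':=t'_0$ (a store, hence $\neq\fail$) forces $\phi\neq\fail$, a contradiction; so $c/s\not\ceval\fail$. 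For conjunct~(ii), given any store $t$ with $c/s\ceval t$, instantiating $\rspecSym$ with $\phi:=t$ and $\phi':=t'_0$ yields $t,t'_0\models\S$, so $t'_0$ is the required right-witness---and the same $t'_0$ serves uniformly for all such $t$.

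Part (i) then follows by applying (ii) with $c':=\skipc$: since $\skipc/s\ceval s$ for every store $s$, the command $\skipc$ trivially satisfies the normal-termination hypothesis $\all{s}{\some{t}{\skipc/s\ceval t}}$, and therefore $\models c\sep\skipc:\rspec{\Q}{\R}$ implies $\models c\sep\skipc:\aespec{\Q}{\R}$.

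I expect no serious obstacle here; the argument is a direct unfolding of the definitions. The one point requiring care is the treatment of failure together with the quantification over outcomes $\phi,\phi'$ in the $\rspecSym$ definition: the existential witness must be an actual store, neither $\fail$ nor a divergent run, which is exactly what the normal-termination hypothesis supplies. Without that hypothesis the implication genuinely fails, since a $c'$ that always diverges makes the $\rspecSym$ premise vacuously valid while leaving the $\aespecSym$ existential unwitnessed.
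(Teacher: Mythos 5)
Your proposal is correct: fixing a single terminating outcome $t'_0$ of $c'/s'$ and using it both to refute $c/s\ceval\fail$ and to witness the existential for every left run $t$ is exactly the direct definitional unfolding this lemma calls for, and deriving (i) from (ii) via $\skipc/s\ceval s$ matches the paper's own ``more generally'' phrasing of the statement. The paper states this lemma without giving a proof (its appendix only proves Lemma~\ref{lem:uChk_term} and Theorem~\ref{thm:main}), so your argument is precisely the routine one it leaves implicit, including the correct observation that the termination hypothesis is what makes the implication non-vacuous.
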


Item (i) could as well be presented as a proof rule but unlike the rules
in Figure~\ref{fig:ERHL} it involves a $\forall\forall$ judgment so we keep it separate. 

\section{Bicoms}\label{sec:bicoms}

Our bicoms are inspired by similar forms of alignment product used in prior work.
But the semantics is carefully designed in order to satisfy two key criteria.
First, the syntax and semantics should facilitate straightforward translation into unary code such as an intermediate verification language.
Second, there should be full support for data dependent alignment of loop iterations.
Third, there should be a bigstep semantics that facilitates establishing a connection with command semantics ---to facilitate establishing a foundational connection with actual program behavior.  

Section~\ref{sec:syntax} defines bicoms and some syntactic notions: projections and size.
Section~\ref{sec:semantics} defines bicom semantics together with $\forall\forall$ correctness of bicoms.
Section~\ref{sec:wlp} develops weakest preconditions and Section~\ref{sec:framing} develops
dependency notions (framing) and applies them to weakest preconditions of bicoms.
Fortunately, the wlp operator determined by our semantics 
satisfies equations similar to the familiar ones for commands 
\cite[Lemma~3.5]{AptOld3}; this is very useful in making connections with compositional proof rules as we do in proving the main theorem.

\subsection{Bicoms, projections, and size}\label{sec:syntax}

The syntax of bicoms is as follows, overloading some keywords without ambiguity. 
\[ B ::= 
  \biEmb{c}{c} \:\bigmid\:  \assertc{\R} \:\bigmid\:  \havRt{x}{\R}    \:\bigmid\:  B;B \:\bigmid\:  \ifFour{e|e}{B}{B}{B}{B}  
   \:\bigmid\:  \biwhile{e|e}{\R|\R}{B}
\]
We let $B,C,D$ range over bicoms but reserve $E$ for two-state expressions (Section~\ref{sec:correctness}).
In the bi-while construct, we call $\P$ (resp.\ $\P'$) the left (resp.\ right) \dt{alignment condition}.  
As introduced in Section~\ref{sec:overview}, the \dt{havoc-filter}
$\havRt{x}{\R}$ has the effect of havoc on $x$ on the right side, followed by assuming $\R$.

The grammar shows a compact notation for bi-if, but it has an alternate notation that is mnemonic and close to the syntax 
in our prototype:
\[ \graybox{$\ifFourLong{e|e'}{B_1}{B_2}{B_3}{B_4}$}
\quad\mbox{ is the same as }\quad
\ifFour{e|e'}{B_1}{B_2}{B_3}{B_4}
\] 
The form allows different alignments to be used for different combinations of the branch conditions. 
The benefit of this is evident in example $c2$ in Section~\ref{sec:overview}.

A common idiom is alignment of two if-commands that are expected to follow the same branch.  For $\ifc{e}{c}{d}$ and $\ifc{e'}{c'}{d'}$ this can be written 
\begin{equation}\label{eq:alignIf}
  \assertc(\leftex{e}=\rightex{e'});
  \ifFourLong{e|e'}{\biEmb{c}{c'}}{\biEmb{c}{d'}}{\biEmb{d}{c'}}{\biEmb{d}{d'}}
\end{equation}
This form allows to then replace, say $\biEmb{c}{c'}$, with a conveniently aligned version.  
The two terms $\biEmb{c}{d'}$ and $\biEmb{d}{c'}$ can be handled trivially,
as they are unreachable given the initial agreement $\leftex{e}=\rightex{e'}$.

\begin{figure}[t]
\begin{small}
\( \begin{array}{lcl}
\Left{\biEmb{c}{c'}} &=& c \\
\Left{\assertc(\P)} &=& \skipc \\
\Left{\havRt{x}{\R}} &=& \skipc \\
\Left{B_1;B_2} &=& \Left{B_1};\Left{B_2} \\
\Left{\ifFour{e|e'}{B_1}{B_2}{B_3}{B_4}}  
&=& \ifc{e}{\Left{B_1}}{\Left{B_3}} \\
\Left{\biwhile{e|e'}{\P|\P'}{B}} &=&
      \whilec{e}{\Left{B}} 
\end{array}
\)

Right projection \graybox{$\Right{B}$} is mirror image, except in these cases:
\[ \begin{array}{lcl}
\Right{\havRt{x}{\R}} &=& \havc{x} \\
\Right{\ifFour{e|e'}{B_1}{B_2}{B_3}{B_4}}  
&=& \ifc{e'}{\Right{B_1}}{\Right{B_2}} \\
\end{array}\]
\end{small}
\vspace*{-3ex}
\caption{Left ($\protect\Left{B}$) and right ($\protect\Right{B}$) syntactic projections of bicoms $B$.}
\label{fig:project}
\end{figure}

Bicoms are meant to represent pairs of commands,
for which reason we define left and right projections from bicoms to commands 
in Figure~\ref{fig:project}.
One use of projections is to ensure that bi-if is used coherently to reason about a pair of unary if-commands,
even though the bi-if form allows different bicoms for different combinations of branch conditions.

\begin{definition}[well-formed bicom] 
A bicom is \dt{well-formed} just if for each
sub-bicom of the form 
$\ifFour{e|e'}{B_1}{B_2}{B_3}{B_4}$
we have that 
\begin{equation}\label{eq:ifFourWF}
\Left{B_1} \kateq \Left{B_2}
\qquad
\Left{B_3} \kateq \Left{B_4}
\qquad
\Right{B_1} \kateq \Right{B_3}
\qquad
\Right{B_2} \kateq \Right{B_4}
\end{equation}
\end{definition}
The reader should check that these conditions hold for the pattern in (\ref{eq:alignIf}).
In fact they hold in (\ref{eq:alignIf}) as syntactic equalities.  
But in general we allow that, say $B_1$  has a relational assertion 
that is not included in $B_2$, so $\Left{B_1}$ will include an extra skip that $\Left{B_2}$ lacks.
For example, for \Bft\ and \Bff\ in Figure~\ref{fig:unno}, here are $\Left{\Bft}$ and $\Left{\Bff}$ which are different but related by $\kateq$.
\[
\begin{array}{ll}
x := \vlow;\ \havc{b}; & \qquad x := \vlow;\ \havc{b};\ \skipc;\\
\keyw{while}\ b\neq 0\ \keyw{do}\ x := x+1; \havc{b}\ \keyw{done} & \qquad
\keyw{while}\ b\neq 0\ \keyw{do}\ x := x+1; \havc{b}; \skipc\ \keyw{done}\\
\skipc; \skipc & \\
\end{array}
\]
Note the extra $\skipc$'s in $\Left{B_3}$ and $\Left{B_4}$ corresponding to the different positions of \havRtKeyword\ in $B_3, B_4$.

Define the \dt{bi-left projection} \graybox{$\biLeft{B}$} by
\(\biLeft{B} \eqdef \biEmb{\Left{B}}{\skipc}
\).
We also define the \dt{bi-right projection}, \graybox{$\biRight{B}$},
in Figure~\ref{fig:biproject}.
Its  purpose is to retain relational assumptions (``filters'') associated with havoc on the right. 
Both bi-projections preserve well-formedness.
They also satisfy the following.
(Proof of the latter two cases requires induction, and 
the last case is only up to $\kateq$.)
\begin{equation}\label{eq:proj_biproj}
\Left{(\biLeft{B})} = \Left{B} \qquad         \Right{(\biLeft{B})} = \skipc \qquad          \Right{(\biRight{B})} = \Right{B} \qquad      \Left{(\biRight{B})} \kateq \skipc            \end{equation}

\begin{figure}[t]
\begin{small}
\( \begin{array}{lcl}
\biRight{\biEmb{c}{c'}} &=&  \biEmb{\skipc}{c'} \\
\biRight{\assertc{\P}} &=&  \assertc{\P} \\
\biRight{\havRt{x}{\R}} &=&  \havRt{x}{\R} 
\end{array}
\)
\hspace*{3em}
\( \begin{array}{lcl}
\biRight{B_1;B_2} &=&  \biRight{B_1};\biRight{B_2} \\
\biRight{\ifFour{e|e'}{B_1}{B_2}{B_3}{B_4}}  
&=&  
\ifFour{\True|e'}{\biRight{B_1}}{\biRight{B_2}}{\biRight{B_3}}{\biRight{B_4}} \\
\biRight{\biwhile{e|e'}{\P|\P'}{B}} &=&
\biwhile{\False|e'}{\False|\P'}{\biRight{B}} 
\end{array}
\)
\end{small}
\vspace*{-1ex}
\caption{Bi-right projection $\protect\biRight{B}$}\label{fig:biproject}
\end{figure}

The semantics of bi-while (later, in Figure~\ref{fig:bicomSem}) involves
projections of the loop body, and those projections are not sub-terms of the
bicom.  Thus some results that one might expect to prove by structural
induction on syntax, must instead go by strong induction on a $\size$ measure.
We define $\size(\skipc)=0$, size 1 for each other primitive command and bicom,
and otherwise the size is one more than the sum of sizes of the constituent parts.
The treatment of $\skipc$ is what ensures the following important fact.

\begin{lemma}[size of projection]
$\size(\biLeft{B}) \leq \size(B)$ and
$\size(\biRight{B}) \leq \size(B)$.
\end{lemma}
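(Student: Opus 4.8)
The plan is to prove the two inequalities separately, each by structural induction on the bicom $B$. The bi-right bound can be proved by a direct induction because $\biRight{\cdot}$ is itself defined by structural recursion (Figure~\ref{fig:biproject}). The bi-left bound needs one extra step: since $\biLeft{B}\eqdef\biEmb{\Left{B}}{\skipc}$ is \emph{not} defined by recursion on $B$, I would first reduce it to a command-level claim about $\Left{\cdot}$. Throughout, the decisive ingredient is the size convention: $\size(\skipc)=0$ and the size of an embed reflects the sizes of its two component commands, so that replacing a component of an embed by $\skipc$ never increases the bicom's size. This is exactly what the remark preceding the lemma flags.

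For the bi-left bound, since $\size(\skipc)=0$ we have $\size(\biLeft{B})=\size(\biEmb{\Left{B}}{\skipc})=\size(\Left{B})$, so it suffices to establish the auxiliary claim $\size(\Left{B})\le\size(B)$, which I prove by induction on $B$. The primitive cases are immediate: $\size(\Left{\biEmb{c}{c'}})=\size(c)\le\size(c)+\size(c')=\size(\biEmb{c}{c'})$, while $\Left{\assertc{\P}}=\Left{\havRt{x}{\R}}=\skipc$ has size $0\le 1$. Sequencing and bi-while follow from the induction hypotheses and monotonicity of $+$; for instance $\size(\whilec{e}{\Left{B}})=1+\size(\Left{B})\le 1+\size(B)=\size(\biwhile{e|e'}{\P|\P'}{B})$. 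The only case with genuine slack is bi-if: $\Left{\ifFour{e|e'}{B_1}{B_2}{B_3}{B_4}}=\ifc{e}{\Left{B_1}}{\Left{B_3}}$, so its size is $1+\size(\Left{B_1})+\size(\Left{B_3})\le 1+\size(B_1)+\size(B_3)$, which is at most $1+\size(B_1)+\size(B_2)+\size(B_3)+\size(B_4)$ since $\size(B_2),\size(B_4)\ge 0$; the two discarded branches only help.

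For the bi-right bound $\size(\biRight{B})\le\size(B)$ I would induct directly on $B$. The base cases are $\biRight{\assertc{\P}}$ and $\biRight{\havRt{x}{\R}}$, which are left unchanged, and $\biRight{\biEmb{c}{c'}}=\biEmb{\skipc}{c'}$, whose size is $\size(\skipc)+\size(c')=\size(c')\le\size(c)+\size(c')$ --- here again $\size(\skipc)=0$ is precisely what yields the inequality. Each compound case rewrites $B$ to a construct of the same shape applied to the bi-right projections of its constituents, so the bound reduces to the induction hypotheses: for $B_1;B_2$ we get $1+\size(\biRight{B_1})+\size(\biRight{B_2})\le 1+\size(B_1)+\size(B_2)$, and similarly for bi-while and for bi-if, noting that bi-right retains all four branches, so the branch sizes are compared termwise with no discarding.

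I do not expect a real obstacle here: the argument is routine structural induction, and the only point requiring care is the size bookkeeping at the embed base cases, where the convention $\size(\skipc)=0$ is what makes the bounds hold exactly rather than only up to an additive constant. It is worth emphasizing that ordinary structural induction suffices, in contrast to the size-based strong inductions needed for later results about bi-while semantics: this lemma is exactly the fact that licenses those inductions, since it shows that the body projections $\biLeft{B}$ and $\biRight{B}$ arising when a loop is unfolded are no larger than $B$ itself.
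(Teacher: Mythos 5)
Your bi-right half is fine: $\biRight{\cdot}$ is defined by structural recursion (Figure~\ref{fig:biproject}), every case preserves the top-level constructor, and the bound follows termwise from the induction hypotheses; the $+1$ contributed by each constructor appears on both sides of the inequality, so your occasional omission of it there is harmless. (The paper states this lemma without proof, so I am judging your argument on its own merits.)

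The bi-left half, however, has a genuine off-by-one gap. Under the paper's size definition (Figure~\ref{fig:size}), the embed construct is \emph{not} free: $\size(\biEmb{c}{c'}) = 1 + \size(c) + \size(c')$. Hence $\size(\biLeft{B}) = \size(\biEmb{\Left{B}}{\skipc}) = 1 + \size(\Left{B})$, not $\size(\Left{B})$ as you claim. Your auxiliary claim $\size(\Left{B}) \leq \size(B)$ therefore only yields $\size(\biLeft{B}) \leq 1 + \size(B)$, which is not the lemma, and this cannot be patched by a softer argument: equality $\size(\biLeft{B}) = \size(B)$ is actually attained (take $B = \assertc{\P}$, where both sides equal $1$), so a weak bound on $\size(\Left{B})$ is genuinely insufficient as a final step. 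The repair is to strengthen the auxiliary claim to the strict inequality $\size(\Left{B}) < \size(B)$, equivalently $1 + \size(\Left{B}) \leq \size(B)$, and run the same structural induction with this stronger hypothesis. It holds in the base cases because projection discards at least one unit ($\Left{\biEmb{c}{c'}} = c$ drops the embed's own $1$ and $\size(c')$; $\Left{\assertc{\P}}$ and $\Left{\havRt{x}{\R}}$ have size $0 < 1$), and in the compound cases the strict inductive bounds are exactly what absorbs the extra $+1$ introduced by the command constructor that $\Left{\cdot}$ produces --- note that with only the weak hypothesis, the sequence case gives $1 + \size(\Left{B_1}) + \size(\Left{B_2}) \leq \size(B_1;B_2)$ but not the strict version needed to propagate. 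With that strengthening your proof goes through; the rest of your bookkeeping and your closing observation (that plain structural induction suffices here, and that this lemma is what licenses the later strong inductions on $\size$) are correct.
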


In addition to $\kateq$, we introduce a similar equivalence relation, $\bieq$,
for bicoms.  

\begin{definition}\label{def:bieq}
Define \graybox{$\bieq$} to be the least equivalence relation on bicoms such that 
\[\begin{array}{l}
\biEmb{\skipc}{c';d'} \bieq \biEmb{\skipc}{c'};\biEmb{\skipc}{d'} \hspace*{6em}
\biEmb{c;d}{\skipc} \bieq \biEmb{c}{\skipc};\biEmb{d}{\skipc} \\
\biEmb{c}{c'} \bieq \biEmb{c}{\skipc}; \biEmb{\skipc}{c'} \hspace*{6em}
c\kateq d \mbox{ and } c'\kateq d' \mbox{ imply } \biEmb{c}{c'}\bieq\biEmb{d}{d'}
\end{array}
\]
\end{definition}
In other words, $\bieq$ is the reflexive, symmetric, transitive closure of the displayed relations.  
The relation $\bieq$ implies equivalence in the semantics defined later (Section~\ref{sec:semantics} 
and Lemma~\ref{lem:bieq_bsem_equiv}). 

Connections with stronger relations used in other works~\cite{AntonopoulosEtal2022popl,DickersonMD25}
are discussed in Section~\ref{sec:related}.
Here we are not concerned with the rewriting of bicoms in general, but only eliminating skips introduced 
by the projections, for which the equations in Definition~\ref{def:bieq} suffice (Lemma~\ref{lem:chk_commute}).

\subsection{Semantics of bicoms and their connection with commands}\label{sec:semantics}

\begin{figure}[t!]
\begin{small}
\begin{mathpar}
\mprset{sep=1.4em}
\inferrule{c / s \ceval \fail}
          {\biEmb{c}{c'} / (s,s')\beval \fail}

\inferrule{c / s \ceval t \\ c' / s'\ceval \fail}
          {\biEmb{c}{c'} / (s,s')\beval \fail}

\inferrule{c / s\ceval t \\ c' / s'\ceval t'}
          {\biEmb{c}{c'},(s,s')\beval (t,t')}

\inferrule{s,s'\models \Q}
          {\assertc{\Q} / (s,s')\beval (s,s')}

\inferrule{s,s'\not\models \Q}
          {\assertc{\Q} / (s,s')\beval \fail}

\inferrule{t'=\update{s'}{x}{n}\\ s,t'\models \Q}
          {\havRt{x}{\Q} / (s,s')\beval (s,t')}

\inferrule{B_1 / (s,s')\beval\fail}
          {B_1;B_2 / (s,s')\beval\fail}

\inferrule{s\models e \\ s'\models e' \\ B_1 / (s,s')\beval\varphi}
{\ifFour{e|e'}{B_1}{B_2}{B_3}{B_4} / (s,s') \beval \varphi}

\inferrule{B_1 / (s,s')\beval(t,t') \\ B_2 / (t,t')\beval\varphi}
          {B_1;B_2 / (s,s')\beval\varphi}

\inferrule{}{\parbox{35ex}{and three similar if rules\\
for one, other, or neither test true}}

\inferrule{s\not\models e\\ s'\not\models e'}
          {W / (s,s') \beval (s,s')}

\inferrule{s\models e \\ s,s'\models\Lrel \\ \biLeft{B} / (s,s') \beval \fail}
          {W / (s,s') \beval \fail}

\inferrule{s\models e \\ s,s'\models\Lrel \\ \biLeft{B} / (s,s') \beval (t,t')
           \\ W / (t,t') \beval \varphi}
          {W / (s,s') \beval \varphi}

\inferrule{(s\not\models e \mbox{ or } s,s'\not\models\Lrel) \\ s'\models e' 
              \\ s,s'\models\R
              \\ \biRight{B} / (s,s') \beval \fail}
          {W / (s,s') \beval \fail}

\inferrule{(s \not\models e \mbox{ or } s,s'\not\models\Lrel) 
              \\ s'\models e' \\ s,s'\models\R
              \\ \biRight{B} / (s,s') \beval (t,t')
              \\ W / (t,t')\beval \varphi}
          {W / (s,s') \beval \varphi}

\inferrule{s\models e \\ s'\models e' \\ 
           s,s'\not\models\Lrel \\ s,s'\not\models\R \\
           B / (s,s') \beval \fail}
          {W / (s,s') \beval \fail}

\inferrule{s\models e \\ s'\models e' \\
          s,s'\not\models\Lrel \\ s,s'\not\models\R \\
           B / (s,s') \beval (t,t') \\
           W / (t,t')\beval \varphi}
          {W / (s,s') \beval \varphi}

\inferrule{(s\models e \mbox{ and } s'\not\models e' \mbox{ and } s,s'\not\models\Lrel)
           \mbox{ or } 
            (s\not\models e \mbox{ and } s'\models e' \mbox{ and } s,s'\not\models\R)}
        {W / (s,s') \beval \fail}

\end{mathpar}
\end{small}
\vspace*{-2ex}
\caption{Bicom semantics. Here $W$ abbreviates $\biwhile{e|e'}{\Lrel|\R}{B}$. }
\label{fig:bicomSem}
\end{figure}

Bicoms are given an evaluation semantics as in Figure~\ref{fig:bicomSem}.  Here
$\varphi$ ranges over outcomes of two forms: either a store pair $(s,s')$ or
$\fail$.  The handling of failure in bicom semantics (and also in
Definition~\ref{def:relcorrect} for $\rspecSym$) is motivated by two
considerations.  First, it must support the main theorem.  That is, the
$\forall\forall$ property of a bicom to which the filter-adequacy transformation
has been applied must imply the $\forall\exists$ property of its underlying
commands.  Second, it should be implementable by translation to commands
(including assertions), using straightforward encoding to leverage tool
automation and to facilitate user interaction.  

The second consideration rules out, for example, a dovetail semantics for the
embed construct as used in~\cite{BNNN19}.  Our semantics of $\biEmb{c}{c'}$ can
be implemented by translating to $c;c''$ where $c''$ is $c'$ with its variables
renamed apart from those of $c$.  This means, for example, that
$\biEmb{diverge}{\mathit{fail}}$ does not fail.  Our semantics of
$\biwhile{e|e'}{\Lrel|\R}{B}$ determinizes the choice between left-only and
right-only iterations when $\Lrel$ and $\R$ (and $e$ and $e'$) hold.  This loses
no generality and accords with a translation that uses if-commands for the loop
body, which works provided $\Lrel$ and $\R$ can be written as expressions---which
is the case in all examples we have seen.  (After all, the point of alignment is
to facilitate use of simple assertions.)  The last rule of the bi-while
semantics bakes in an adequacy condition: if either $e$ or $e'$ is true but none
of the other transition rules apply then the bicom fails.  (Compare the side
condition in rule \rn{eDo} in Figure~\ref{fig:ERHL}.)  Note that it is possible
for one-sided iteration to mask failure on the other side, similar to the case
of $\biEmb{diverge}{\mathit{fail}}$.  An example is 
$\biwhile{\True|e'}{\True|\R}{\biEmb{\skipc}{\mathit{fail}}}$.

For any $B,s,s'$, the possible results from $B/(s,s')$ are failure, normal termination, or no outcome.
The latter can happen only due to a divergent loop or blockage by $\havRt{x}{\Q}$ (in case there is no value for $x$ that makes $\Q$ true). 
The bi-while semantics uses $\biRight{\mystrut\missingArg}$ to ensure such blockage for right-only iterations.
It can be proved that $\biRight{B} / (s,s') \beval (t,t')$ implies  $s=t$,
and \emph{mut.\ mut.} for $\biLeft{B}$.  
A basic property of bicom semantics is that it models the effects of commands in this sense:
\begin{equation}\label{eq:bicomBasic} B / (s,s') \beval (t,t') \mbox{ implies } 
\Left{B} / s \ceval t \mbox{ and } \Right{B} / s' \ceval t'.
\end{equation}
The converse of (\ref{eq:bicomBasic}) need not hold.
The converse may fail if $B$ includes \havRtKeyword\  (the assumption of which may not hold) 
or nontrivial loop alignment conditions or relational assertions (which can lead to failures) ---because these are discarded by $\Left{\mystrut\missingArg}$ and
$\Right{\mystrut\missingArg}$.

Let \graybox{$\means{c}$} be the relation from stores to outcomes defined by 
$\means{c} = \{ (s,\phi) \mid c / s \ceval \phi \}$.
Let \graybox{$\means{B}$} be the relation from store pairs to outcomes defined by 
$\means{B} = \{ ((s,s'),\varphi) \mid B / (s,s') \beval \varphi \}$.
Now for semantic equivalence of bicoms $B$ and $C$ we can simply write $\means{B}=\means{C}$. 
\begin{lemma}\label{lem:bieq_bsem_equiv} $c\kateq d$ implies $\means{c}=\means{d}$,
and 
$B\bieq C$ implies $\means{B}=\means{C}$.
\end{lemma}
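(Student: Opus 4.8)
The plan is to prove the two claims separately, each by induction on the structure of the equivalence-generating relation, reducing everything to the already-fixed operational semantics via the definitions of $\means{\cdot}$. Since $\kateq$ and $\bieq$ are defined as (least congruence / least equivalence) closures, the natural proof strategy is: (1) verify the claimed semantic equality on each \emph{generating} equation; (2) check it is preserved by the closure operations (reflexivity, symmetry, transitivity, and for $\kateq$ also the congruence rules for sequence, if, and while). Semantic equality $\means{\cdot}=\means{\cdot}$ is itself an equivalence relation and a congruence, so step (2) is automatic once step (1) is done; the real content is step (1).

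For the first claim, $c\kateq d\Rightarrow\means{c}=\means{d}$, I would discharge the four generating equations of Definition~\ref{def:kateq} by unfolding the command semantics of Figure~\ref{fig:commandSemSelected}. The equations $\skipc;c\kateq c$ and $c;\skipc\kateq c$ follow because $\skipc/s\ceval s$ is the unique outcome and the sequence rule then matches outcomes of $c$ exactly (for both normal and $\fail$ outcomes). The equation $\ifc{\True}{c}{d}\kateq c$ holds because the true-branch rule is the only applicable if-rule when the guard is $\True$. The equation $\whilec{\False}{c}\kateq\skipc$ holds because with a false guard the loop immediately exits with the identity outcome, matching $\skipc$. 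Closure under the command congruences is then routine: $\means{\cdot}=\means{\cdot}$ is preserved under sequencing, if, and while by direct appeal to the semantic rules (for while, one argues by a straightforward induction on the derivation height that replacing the body by a semantically equal body yields equal outcomes).

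For the second claim, $B\bieq C\Rightarrow\means{B}=\means{C}$, I would again check the four generators of Definition~\ref{def:bieq} using the bicom semantics of Figure~\ref{fig:bicomSem} together with the first claim. The three ``embed-splitting'' equations, e.g.\ $\biEmb{c}{c'}\bieq\biEmb{c}{\skipc};\biEmb{\skipc}{c'}$, reduce to comparing the derivations of $\biEmb{\cdot}{\cdot}/(s,s')$ against the composite via the sequence rules; here one must carefully track the $\fail$-propagation rules, checking that left-failure, right-failure, and normal-termination all line up between the two sides (using that $\skipc$ is an identity on each side). The last generator, $c\kateq d$ and $c'\kateq d'$ imply $\biEmb{c}{c'}\bieq\biEmb{d}{d'}$, follows directly from the first claim: since the embed rules are stated purely in terms of $c/s\ceval\phi$ and $c'/s'\ceval\phi'$, replacing $c,c'$ by semantically equal $d,d'$ leaves $\means{\biEmb{c}{c'}}$ unchanged.

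\textbf{The main obstacle} I anticipate is not the generators but the subtle interaction of $\fail$ with the semantics. Because the paper has deliberately designed bicom failure so that a divergent or blocked side can \emph{mask} failure on the other side (e.g.\ $\biEmb{\mathit{diverge}}{\mathit{fail}}$ does not fail, and the masking example with bi-while), I must verify on each embed-splitting equation that no such asymmetry is created or destroyed by the rewriting---in particular that the ordering baked into $\biEmb{c}{c'}$ (left first, then right) agrees with the sequential form $\biEmb{c}{\skipc};\biEmb{\skipc}{c'}$ even when $c$ diverges or $c'$ fails. One checks that both yield ``no outcome'' when $c$ diverges and both yield $\fail$ exactly when $c$ terminates and $c'$ fails, so the two sides coincide. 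Apart from this care with outcomes, the proof is a mechanical induction; the only place genuine induction on derivations (rather than on syntax) is needed is the while congruence cases, because the bi-while semantics refers to $\biLeft{B}$ and $\biRight{B}$, which are not sub-terms---so, as the paper notes, one argues via the $\size$ measure and Lemma~(size of projection) rather than plain structural induction.
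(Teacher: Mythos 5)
Your handling of the generators is correct and is the real content of the lemma: the four $\kateq$ equations follow by unfolding the command semantics; the three embed-splitting equations follow by tracking $\fail$-propagation through the sequence rules, and your check that divergence of $c$ masks failure of $c'$ identically on both sides of $\biEmb{c}{c'} \bieq \biEmb{c}{\skipc};\biEmb{\skipc}{c'}$ is exactly the right point to verify; the fourth $\bieq$ generator reduces to the first claim because the embed rules depend only on $\means{c}$ and $\means{c'}$. The congruence closure for $\kateq$, with induction on evaluation derivations in the while case, is also as it should be.

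However, your step (2) for the second claim contains a step that would fail. You assert that $\means{\mystrut\missingArg}=\means{\mystrut\missingArg}$ on bicoms ``is a congruence'' and you plan to discharge ``the while congruence cases'' of $\bieq$ via the $\size$ measure and the size-of-projection lemma. But Definition~\ref{def:bieq} contains \emph{no} congruence clauses: $\bieq$ is the least equivalence relation containing the four generators, so the closure step needs only reflexivity, symmetry, and transitivity, which are immediate. This is not an accident of presentation but a necessity, as the paper's appendix remark explains: semantic equality of bicoms is \emph{not} a congruence with respect to bi-while, because the bi-while semantics invokes the syntactic projections $\biLeft{\mystrut\missingArg}$ and $\biRight{\mystrut\missingArg}$ of the body, and $\biLeft{\mystrut\missingArg}$ discards relational assertions and filters. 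Concretely, $B_0 = \assertc{\False};\biEmb{x:=0}{\skipc}$ and $B_1 = \assertc{\False};\biEmb{x:=1}{\skipc}$ satisfy $\means{B_0}=\means{B_1}$ (both always fail), yet $\biwhile{x<0|\False}{\True|\False}{B_0}$ and $\biwhile{x<0|\False}{\True|\False}{B_1}$ differ semantically: a left-only iteration runs $\biLeft{B_i}$, in which the failing assertion has become $\skipc$, so the loops terminate with $x=0$ versus $x=1$. Had the bi-while congruence property you intend to prove been part of $\bieq$, the lemma itself would be false. The fix is simply to delete that step: your verification of the four generators, together with closure under the equivalence operations, already completes the proof, and no appeal to $\size$ or projection-size is needed anywhere in this lemma.
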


Bicoms are specified by pre- and post-relations but they have a single execution.
We choose the notation $\bspec{\P}{\Q}$ in the following.

\begin{definition}[Bicom correctness]
The correctness judgment $B:\bspec{\P}{\Q}$ is \dt{valid}, 
written \graybox{$\models B: \bspec{\P}{\Q}$}, 
iff for all $s,s'$ such that $s,s'\models \P$ we have
$B / (s,s') \not\beval \fail$ and moreover 
$t,t'\models \Q$ for all $t,t'$ with $B / (s,s')\beval(t,t')$.
\end{definition}

Because correctness is defined in terms of semantics,
the following is easily proved.

\begin{lemma}\label{lem:bsem_equiv_bcorrect} 
Suppose $\means{B}=\means{C}$.
Then $\models B:\bspec{\P}{\Q}$ iff $\models C:\bspec{\P}{\Q}$.
\end{lemma}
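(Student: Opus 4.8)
The plan is to prove Lemma~\ref{lem:bsem_equiv_bcorrect} directly by unfolding the definition of bicom correctness, observing that validity $\models B:\bspec{\P}{\Q}$ is a property phrased entirely in terms of the semantic relation $\means{B}$. First I would recall that by Definition of bicom correctness, $\models B:\bspec{\P}{\Q}$ holds iff for all $s,s'$ with $s,s'\models\P$ we have (a) $B/(s,s')\not\beval\fail$ and (b) $t,t'\models\Q$ whenever $B/(s,s')\beval(t,t')$. The key remark is that both clauses (a) and (b) are statements about which pairs $((s,s'),\varphi)$ belong to the evaluation relation --- that is, about $\means{B}$.

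Concretely, I would rewrite the two clauses using the membership characterization $B/(s,s')\beval\varphi \iff ((s,s'),\varphi)\in\means{B}$. Clause (a) becomes ``$((s,s'),\fail)\notin\means{B}$'' and clause (b) becomes ``for all $t,t'$, if $((s,s'),(t,t'))\in\means{B}$ then $t,t'\models\Q$''. Thus the entire predicate $\models B:\bspec{\P}{\Q}$ is a function of the set $\means{B}$ together with $\P$ and $\Q$. Since the hypothesis gives $\means{B}=\means{C}$, I may substitute $\means{C}$ for $\means{B}$ throughout, and the resulting predicate is exactly the unfolding of $\models C:\bspec{\P}{\Q}$. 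This establishes the biconditional.

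A clean way to present this is to define an auxiliary predicate, say $\mathit{Good}(\mathcal{R},\P,\Q)$ on a relation $\mathcal{R}$ from store pairs to outcomes, meaning: for all $s,s'$ with $s,s'\models\P$, we have $((s,s'),\fail)\notin\mathcal{R}$, and for all $t,t'$, $((s,s'),(t,t'))\in\mathcal{R}$ implies $t,t'\models\Q$. Then by definition $\models B:\bspec{\P}{\Q}$ is precisely $\mathit{Good}(\means{B},\P,\Q)$, and likewise for $C$. Since $\means{B}=\means{C}$, we get $\mathit{Good}(\means{B},\P,\Q)\iff\mathit{Good}(\means{C},\P,\Q)$ by congruence of equality (replacing equals by equals in a predicate), which is exactly the claim.

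There is essentially no obstacle here: the lemma is a routine consequence of the fact that correctness was \emph{defined} semantically rather than syntactically, which is exactly the point the surrounding text makes (``Because correctness is defined in terms of semantics, the following is easily proved''). The only thing to be careful about is that \emph{both} components of correctness --- freedom from failure and the postcondition on normal outcomes --- are genuinely determined by $\means{\cdot}$; since $\means{B}$ records the full outcome (either a store pair or $\fail$) for each input pair, both components are captured, so no extra semantic information is needed beyond the equality of the two relations.
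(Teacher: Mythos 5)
Your proof is correct and is exactly the argument the paper intends: the paper gives no separate proof beyond remarking that ``correctness is defined in terms of semantics,'' and your unfolding of $\models B:\bspec{\P}{\Q}$ into a predicate on $\means{B}$ followed by substitution of equals for equals is precisely that routine argument, carried out carefully.
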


Because a bicom can include assumptions (in the $\havRtKeyword$ construct), 
in general it is not the case that $\models B:\bspec{\P}{\Q}$ implies
$\models \Left{B}\sep\Right{B} : \rspec{\P}{\Q}$.  
But it holds for $B$ without $\havRtKeyword$, owing in part to the adequacy 
condition checked by the last rule in Figure~\ref{fig:bicomSem} for bi-while.
In particular we use the following.

\begin{lemma}[adequacy of embed] \label{lem:biprogram_embed_correctness}
$\models \biEmb{c}{c'}:\bspec{\P}{\Q}$ implies
$\models c\sep c': \rspec{\P}{\Q}$. 
\end{lemma}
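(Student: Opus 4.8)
The plan is a direct unfolding of the two correctness definitions, driven by a case analysis on the outcomes $\phi,\phi'$ of the two unary runs; no induction is needed, since both sides of the implication concern only single applications of the embed semantics in Figure~\ref{fig:bicomSem}. Assume $\models \biEmb{c}{c'}:\bspec{\P}{\Q}$, and fix arbitrary $s,s',\phi,\phi'$ with $s,s'\models\P$, $c/s\ceval\phi$, and $c'/s'\ceval\phi'$. By Definition~\ref{def:relcorrect} for $\rspecSym$, it suffices to establish the three obligations $\phi\neq\fail$, $\phi'\neq\fail$, and $\phi,\phi'\models\Q$.

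First I would rule out $\phi=\fail$. The first embed rule turns $c/s\ceval\fail$ into $\biEmb{c}{c'} / (s,s')\beval \fail$; since $s,s'\models\P$, this contradicts the no-failure clause of $\models \biEmb{c}{c'}:\bspec{\P}{\Q}$. Hence $\phi$ is a store, say $\phi=t$. Next, with $\phi=t$ already in hand, I would rule out $\phi'=\fail$ using the second embed rule, whose premise $c/s\ceval t$ is now available; together with $c'/s'\ceval\fail$ it again yields $\biEmb{c}{c'} / (s,s')\beval \fail$, a contradiction. So $\phi'$ is a store $t'$ as well. Finally, with $\phi=t$ and $\phi'=t'$, the third embed rule gives $\biEmb{c}{c'} / (s,s')\beval (t,t')$, and the \emph{moreover} clause of bicom correctness yields $t,t'\models\Q$, i.e.\ $\phi,\phi'\models\Q$. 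All three obligations hold, completing the argument.

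There is essentially no obstacle here; the proof is a mechanical matching of the $\rspecSym$ definition against the three embed rules. The one point that needs care is the \emph{ordering} of the case split: the second embed rule premises normal termination on the left before failure on the right, so I deliberately dispose of $\phi$ first in order to make the premise $c/s\ceval t$ available when treating $\phi'$. More conceptually, this lemma is precisely the point at which the treatment of failure in Definition~\ref{def:relcorrect} (for $\rspecSym$) and the treatment of failure in the embed rules have been tuned to agree; the statement records that the two notions coincide on embed bicoms, which is why the last transition rule's adequacy condition is not needed in this case (embed introduces no bi-while).
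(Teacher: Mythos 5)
Your proof is correct and is exactly the intended argument: the paper states this lemma without a written proof (it is discharged by direct semantic unfolding, mechanized in Rocq), and your case analysis on $\phi$ then $\phi'$ against the three embed rules, using the no-failure and \emph{moreover} clauses of bicom correctness, is that argument. Your observation about the ordering of the case split (establishing $c/s\ceval t$ before invoking the second embed rule, whose premises require normal termination on the left) is precisely the one point of care the proof requires.
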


\paragraph{Summing up}

The technical development aims to justify the following method for
verifying a judgment $c\sep c':\aespec{\P}{\Q}$.  First, find $B$
such that $\means{\Left{B}}=\means{c}$ and $\means{\Right{B}}=\means{c'}$,
for which a simple check is $\Left{B}\kateq c$ and $\Right{B}\kateq c'$.
Second, by some means verify $\models \chk(B):\bspec{\P}{\Q}$ where $\chk$ applies the filter-adequacy transformation.  The main theorem says that this method is sound.
Sections~\ref{sec:wlp}--\ref{sec:framing} develop results on weakest preconditions and framing that are used to prove Theorem~\ref{thm:main} after $\chk$ has been defined in Section~\ref{sec:chk}.

\subsection{Weakest preconditions for bicoms}\label{sec:wlp}

For expository purposes we begin with commands.
We use the phrase ``weakest precondition'' but we are concerned with partial correctness and so use the standard name $\wlp$ that refers to weakest liberal precondition~\cite{Dijkstra76}.
Define $\wlp$ to map commands and assertions to assertions as follows.
\begin{equation}\label{eq:def:wlp}
\graybox{$\wlp(c,p)$} \eqdef \{ s \mid c/s\not\ceval\fail \mbox{ and }
                             t\models p \mbox{ for all $t$ such that } c/s\ceval t \} 
\end{equation}
This satisfies well known equations\footnote{These are derived from bigstep
semantics as in~\cite{AptOld3}, rather than being used to define $\wlp$ as in
some work~\cite{DijkstraScholten}.}
including 
\[ \wlp(\havc{x},p) = \all{x}{p} \qquad
\wlp(\assertc{q},p) = q\land p \qquad
\wlp(\whilec{e}{c},p) = \gfp(F(e,c,p)) \]
where $F(e,c,p)(X)\eqdef (e\imp\wlp(c,X))\land(\neg e\imp p)$. 
Note that $F(e,c,p)(X)$ is monotonic in $X$ with respect to the ordering 
on assertions defined by $p\leq q$ iff $\models p\imp q$,
which is equivalent to $(\Store\setminus p)\union q = \Store$ and to $p\subseteq q$.
In short, this is the powerset lattice on $\Store$,
and $\gfp$ gives greatest fixpoints of monotonic functions on this lattice.

We overload the name $\wlp$, using it for a map from bicoms and relations to relations:
\begin{equation}\label{eq:def:rwlp}
\graybox{$\wlp(B,\P)$} \eqdef \{ (s,s') \mid B/(s,s')\not\beval\fail \land
\all{t,t'}{ B/(s,s')\ceval (t,t') \imp (t,t')\models \P \} }
\end{equation}
This has some basic properties that are standard for wlp of commands and partial correctness.
\begin{lemma}\label{lem:rwlp_prop}
\begin{list}{}{}
\item[(i)] $\models \P\imp\wlp(B,\Q)$ iff $\, \models B:\bspec{\P}{\Q}$
\item[(ii)] $\models B: \bspec{\wlp(B,\Q)}{\Q}$
\item[(iii)] If $\, \means{B}=\means{C}$ then $\wlp(B,\Q)=\wlp(C,\Q)$
\end{list}
\end{lemma}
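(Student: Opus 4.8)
The plan is to prove all three parts by directly unfolding the definition (\ref{eq:def:rwlp}) of $\wlp$ together with the definition of bicom correctness. No induction on the structure of $B$ is needed, precisely because $\wlp(B,\Q)$ is defined semantically in terms of the evaluation relation $\beval$ rather than by recursion on syntax. This is exactly what makes these basic properties routine, in contrast with the $\wlp$ equations stated for commands (which must be derived from the bigstep semantics).

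For (i), I would unfold both sides of the biconditional into the same first-order statement over store pairs. On the left, $\models \P\imp\wlp(B,\Q)$ means $\P\subseteq\wlp(B,\Q)$, i.e.\ for every $(s,s')$ with $s,s'\models\P$ we have $(s,s')\in\wlp(B,\Q)$; expanding (\ref{eq:def:rwlp}), this says $B/(s,s')\not\beval\fail$ and $t,t'\models\Q$ for all $t,t'$ with $B/(s,s')\beval(t,t')$. On the right, the definition of bicom correctness says exactly the same thing for each $(s,s')$ satisfying $\P$. Hence the two sides are literally the same statement, and the equivalence is immediate.

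For (ii), I would instantiate (i) with $\P:=\wlp(B,\Q)$: since $\models\wlp(B,\Q)\imp\wlp(B,\Q)$ holds by reflexivity of $\subseteq$, part (i) yields $\models B:\bspec{\wlp(B,\Q)}{\Q}$ at once. (Equivalently one can unfold directly: any pair in $\wlp(B,\Q)$ satisfies, by definition, exactly the two conjuncts---no failure, and all normal outcomes in $\Q$---required by bicom correctness.) For (iii), I would observe that the right-hand side of (\ref{eq:def:rwlp}) mentions $B$ only through the two semantic predicates $B/(s,s')\beval\fail$ and $B/(s,s')\beval(t,t')$, both of which are determined by the membership relation $\means{B}$. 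Since $\means{B}=\means{C}$ makes these predicates coincide for $B$ and $C$ at every $(s,s')$, the two defining sets are equal, so $\wlp(B,\Q)=\wlp(C,\Q)$.

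I do not anticipate a genuine obstacle here: the content is entirely a matter of matching quantifier structure across definitions. The only point requiring a little care is keeping the ``no failure'' conjunct and the universally quantified ``every normal outcome satisfies $\Q$'' conjunct paired correctly throughout, since both (\ref{eq:def:rwlp}) and the definition of bicom correctness bundle these two conditions together; as long as that bookkeeping is respected, each part is a one-line unfolding.
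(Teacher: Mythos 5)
Your proposal is correct and matches the paper's treatment: the paper gives no separate proof of Lemma~\ref{lem:rwlp_prop}, regarding all three parts as immediate from the semantic definition (\ref{eq:def:rwlp}) and the definition of bicom correctness, which is exactly your unfolding argument (with (ii) by instantiating (i) reflexively and (iii) because the defining set depends on $B$ only through $\means{B}$). Your observation that no structural induction is needed, and that the $\impby$ direction of (i) is just the statement that $\wlp$ is genuinely the weakest such precondition, is precisely the point the paper makes in its remark following the lemma.
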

In item (i), the $\impby$ direction relies on $\wlp$ being the weakest (and thus $\gfp$ being the greatest).
Much of what we do would work using an approximate $\wlp$ like those used in tools based on verification conditions (which may only approximate $\wlp$).  But here we only use $\wlp$ for proving semantic results.

It is convenient to define the following abbreviations:
\[ \graybox{$\wlpL(c,\Q)$} \eqdef \wlp(\biEmb{c}{\skipc},\Q)
\qquad
\graybox{$\wlpR(c,\Q)$} \eqdef \wlp(\biEmb{\skipc}{c},\Q)
\]
In proofs we use that $\wlpR$ satisfies equations very similar to those for commands,
but using the right-side substitution and quantification operators.
\begin{lemma}[wlpR equations]
\label{lem:wlpR_equations}
\[\begin{array}{lcl}
\wlpR(\skipc,\Q)            & = & \Q                     \\
\wlpR(x:=e,\Q)              & = & \subst{\Q}{|x}{|e}        \\
\wlpR(\havc{x},\Q)          & = & \all{|x}{\Q}             \\
\wlpR(\assertc{p},\Q)       & = & \rightF{p}\land \Q               \\
\wlpR(c_1;c_2,\Q)           & = & \wlpR(c_1,\wlpR(c_2,p))   \\
\wlpR(\ifc{e}{c_1}{c_2},\Q) & = & (\rightF{e}\imp\wlpR(c_1,\Q))\land (\neg \rightF{e}\imp\wlpR(c_2,\Q)) \\
\wlpR(\whilec{e}{c},\Q)     & = & \gfp(F(e,c,\Q)) \\ 
\multicolumn{3}{l}{
\quad \mbox{where } F(e,c,\Q)(\X)\eqdef (\rightF{e}\imp\wlpR(c,\X))\land(\neg \rightF{e}\imp \Q)
}
\end{array}\]
\end{lemma}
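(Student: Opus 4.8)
The plan is to reduce every equation to the corresponding familiar \emph{unary} wlp equation by ``sectioning'' relations over the frozen left store. First I would unfold the definition $\wlpR(c,\Q) \eqdef \wlp(\biEmb{\skipc}{c},\Q)$ and read off the embed semantics. Since $\skipc/s \ceval s$ always, the embed rules give $\biEmb{\skipc}{c}/(s,s')\beval\fail$ iff $c/s'\ceval\fail$, and $\biEmb{\skipc}{c}/(s,s')\beval(t,t')$ iff $t=s$ and $c/s'\ceval t'$. For a relation $\Q$ and store $s$, write $\Q_s \eqdef \{t \mid s,t\models\Q\}$ for its \emph{left-section} at $s$. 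Comparing with the definition of unary wlp in \eqref{eq:def:wlp}, this immediately yields the key \emph{sectioning lemma}: for every $c$, $\Q$, $s$, we have $\wlpR(c,\Q)_s = \wlp(c,\Q_s)$. No induction is needed for this step. Since two relations are equal iff their sections at every $s$ coincide (because $s,t\models\R$ iff $t\in\R_s$), each desired equation reduces to a section-wise check.

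Next I would record the elementary sectioning identities for the relational operators appearing on the right-hand sides: $(\subst{\Q}{\smSep x}{\smSep e})_s = \subst{(\Q_s)}{x}{e}$, $(\all{\smSep x}{\Q})_s = \all{x}{\Q_s}$, $(\rightF{p})_s = p$, and hence $(\rightF{p}\land\Q)_s = p\land\Q_s$ and $(\rightF{e}\imp\X)_s = (e\imp\X_s)$; each is a one-line unfolding of the definitions in Section~\ref{sec:correctness}. With these in hand, every non-loop case is a direct calculation: apply the sectioning lemma, invoke the standard unary wlp equation (for $\skipc$, $x:=e$, $\havc{x}$, $\assertc{p}$, sequence, and if), and translate back using the identities. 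For sequence, for instance, $\wlpR(c_1;c_2,\Q)_s = \wlp(c_1;c_2,\Q_s) = \wlp(c_1,\wlp(c_2,\Q_s)) = \wlp(c_1,\wlpR(c_2,\Q)_s) = \wlpR(c_1,\wlpR(c_2,\Q))_s$, using the sectioning lemma three times; as this holds for all $s$, the two relations are equal.

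The while case is the one requiring care, because it involves a greatest fixpoint rather than a closed form. Let $W = \whilec{e}{c}$ and abbreviate by $F_R$ the relational functional $F(e,c,\Q)$ of the lemma and by $F^s$ the unary functional $F(e,c,\Q_s)$ of the excerpt. $F_R$ is monotone since $\wlpR(c,-)$ is monotone, so $\gfp(F_R)$ is defined. The two ingredients are: (i) the sectioning lemma together with the unary while equation gives $\wlpR(W,\Q)_s = \wlp(W,\Q_s) = \gfp(F^s)$; and (ii) $F_R$ sections coordinate-wise, $(F_R(\X))_s = F^s(\X_s)$, which follows from the sectioning lemma applied to $\wlpR(c,\X)$ and from $(\rightF{e})_s = e$. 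It then remains to show $\gfp(F_R)_s = \gfp(F^s)$ for all $s$. The cleanest route argues directly with Knaster--Tarski, sidestepping any product-lattice isomorphism: since $\gfp(F_R)$ is a fixpoint, (ii) makes each $\gfp(F_R)_s$ a fixpoint of $F^s$, whence $\gfp(F_R)_s \subseteq \gfp(F^s)$; conversely, defining the relation $\R$ by $\R_s \eqdef \gfp(F^s)$ makes $\R$ a fixpoint of $F_R$ by (ii), so $\R \subseteq \gfp(F_R)$ and thus $\gfp(F^s) = \R_s \subseteq \gfp(F_R)_s$. Combining these with (i) gives $\wlpR(W,\Q)_s = \gfp(F_R)_s$ for every $s$, hence the claimed equality of relations. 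I expect this fixpoint-sectioning step to be the main obstacle; everything else is routine unfolding. (An alternative I would also consider reduces the compound cases via $\bieq$---e.g.\ $\biEmb{\skipc}{c_1;c_2}\bieq\biEmb{\skipc}{c_1};\biEmb{\skipc}{c_2}$---using Lemma~\ref{lem:bieq_bsem_equiv} and Lemma~\ref{lem:rwlp_prop}(iii), but that route needs bicom-level wlp equations for sequence, if, and while that the sectioning argument avoids.)
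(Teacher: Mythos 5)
Your proof is correct; I found no gap in any of its three components (the sectioning lemma, the operator identities, and the fixpoint transfer). Note that the paper itself prints no proof of this lemma: it is stated as the relational analogue of the unary wlp equations, which the paper says are ``derived from bigstep semantics as in''~\cite{AptOld3}, with the actual derivation deferred to the Rocq mechanization. So the route implied by the paper is the direct one: unfold $\wlpR(c,\Q)=\wlp(\biEmb{\skipc}{c},\Q)$ and re-derive each equation, including the $\gfp$ characterization of while, by analysis of the operational rules---essentially repeating at the relational level the work done for unary $\wlp$. Your route is genuinely different and arguably more economical: the single, induction-free observation that $\biEmb{\skipc}{c}/(s,s')$ fails iff $c/s'$ fails and terminates in $(t,t')$ iff $t=s$ and $c/s'\ceval t'$ gives $\wlpR(c,\Q)_s=\wlp(c,\Q_s)$, after which every non-loop equation is transported from its unary counterpart, and the loop case reduces to commuting $\gfp$ with sectioning. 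Your Knaster--Tarski argument for that last step is sound: by $(F_R(\X))_s=F^s(\X_s)$, each section $\gfp(F_R)_s$ is a fixpoint of $F^s$ and hence contained in $\gfp(F^s)$, while the relation whose sections are $\gfp(F^s)$ is a fixpoint of $F_R$ and hence contained in $\gfp(F_R)$. What your approach buys is that the operational-semantics analysis is done once, in the unary equations the paper already takes as known, rather than redone relationally (this matters most for while, whose direct proof needs the usual unrolling argument); what the direct approach buys is self-containedness in a mechanization, where the unary equations are not free either and where the section operator and the ``equal iff all sections agree'' principle would themselves have to be set up. Two presentational points: your argument leans on $\skipc$ never failing and terminating uniquely, so that fact deserves to be displayed rather than mentioned in passing; and the paper's sequence equation contains a typo ($p$ where $\Q$ is meant), which your calculation silently and correctly repairs---better to flag it explicitly.
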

Store relations form a complete lattice with respect to subset ordering.
Note that  $\Q\subseteq\R$ iff $\models \Q\imp\R$, 
and $F(e,c,\Q)(\X)$ is monotonic in $\X$ with respect to $\subseteq$.

In the following equations that characterize $\wlp$ on bicoms,
the loop case is given by greatest fixpoint of a function, $G$.
Like $F$ in Lemma~\ref{lem:wlpR_equations} it maps relations to relations,
but it is more complicated for two reasons:
there are three kinds of bi-while loop iteration, 
and bi-while can fail due to its adequacy condition (i.e., the last rule in Figure~\ref{fig:bicomSem}).

\begin{lemma}[bicom wlp equations]
\label{lem:rwlp_equations}
\begin{small}
\[\begin{array}{lcl}
\wlp(\biEmb{c}{c'},\Q)       & = & \wlpL(c,\wlpR(c',\Q))               \\
\wlp(\assertc{\P},\Q)        & = & \P\land \Q               \\
\wlp(\havRt{x}{\P},\Q)       & = & \all{\smSep x}{(\P\imp\Q)}        \\
\wlp(B_1;B_2,\Q)             & = & \wlp(B_1,\wlp(B_2,p))   \\
\wlp(\ifFour{e|e'}{B_1}{B_2}{B_3}{B_4},\Q) & = & 
        (\leftF{e}\land \rightF{e'} \imp \wlp(B_1,\Q)) \,\land\,
        (\leftF{e}\land \neg \rightF{e'} \imp \wlp(B_2,\Q)) \,\land\,         \\
     && (\neg\leftF{e}\land \rightF{e'} \imp \wlp(B_3,\Q)) \,\land\,
        (\neg\leftF{e}\land \neg\rightF{e'} \imp \wlp(B_4,\Q))           \\
\wlp(\biwhile{e|e'}{\Lrel|\R}{B},\Q)  & = & \gfp(G(e,e',\Lrel,\R,B,\Q)) \quad\mbox{where $G$ is defined below}\\ 
\end{array}\]
\end{small}
\end{lemma}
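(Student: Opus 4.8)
The plan is to prove each of the six equations \emph{independently}, directly from the big-step semantics (Figure~\ref{fig:bicomSem}) and the semantic definition of $\wlp$ in (\ref{eq:def:rwlp}); no outer induction on bicom structure is needed, because each equation relates $\wlp$ of a compound bicom to $\wlp$ of its immediate constituents (treated as given via their semantics), and the only self-referential case, bi-while, is settled by fixpoint theory. In every case the recipe is the same: unfold (\ref{eq:def:rwlp}), read off from the semantics exactly which outcomes are possible from $(s,s')$, and simplify, keeping in mind that membership in $\wlp(B,\Q)$ requires \emph{both} that $B/(s,s')\not\beval\fail$ \emph{and} that every terminating run lands in $\Q$. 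The four finite cases are routine unfoldings. For $\assertc{\P}$ the only non-failing outcome is $(s,s')$ when $s,s'\models\P$, giving $\P\land\Q$. For $\havRt{x}{\P}$ there is no failure rule and the outcomes are exactly the pairs $(s,\update{s'}{x}{n})$ with $s,\update{s'}{x}{n}\models\P$, so $(s,s')\in\wlp(\havRt{x}{\P},\Q)$ iff for every $n$ we have $s,\update{s'}{x}{n}\models\P\imp\Q$, which is the definition of $\all{\smSep x}{(\P\imp\Q)}$. For $B_1;B_2$ the sequence rules say failure occurs iff $B_1$ fails or some terminating $B_1$-result leads $B_2$ to fail, and termination in $(t,t')$ iff some $B_1$-result $(u,u')$ has $B_2/(u,u')\beval(t,t')$; factoring this through (\ref{eq:def:rwlp}) yields $\wlp(B_1,\wlp(B_2,\Q))$. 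For bi-if the four transition rules select the branch determined by the truth of $e$ on the left and $e'$ on the right, so $\wlp$ of the bi-if is exactly the conjunction of the four guarded implications.

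For the embed equation I would avoid re-deriving the factorization by hand and instead reuse the sequence case: by Definition~\ref{def:bieq}, $\biEmb{c}{c'}\bieq\biEmb{c}{\skipc};\biEmb{\skipc}{c'}$, so by Lemma~\ref{lem:bieq_bsem_equiv} these are semantically equal and by Lemma~\ref{lem:rwlp_prop}(iii) have equal $\wlp$; applying the (already proved) sequence equation and the definitions $\wlpL(c,-)=\wlp(\biEmb{c}{\skipc},-)$ and $\wlpR(c',-)=\wlp(\biEmb{\skipc}{c'},-)$ gives $\wlp(\biEmb{c}{c'},\Q)=\wlpL(c,\wlpR(c',\Q))$. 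A direct check is also possible by unfolding the three embed rules; there one must verify that the identity survives the failure-masking discussed in Section~\ref{sec:semantics}, whereby $c$ diverging on the left suppresses a would-be right failure---this is harmless because both sides of the equation then hold vacuously.

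The bi-while case is the main obstacle. Writing $W$ for the loop, I first note that $G$ is monotonic on the powerset lattice of relations (as for $F$ in Lemma~\ref{lem:wlpR_equations}), so $\gfp(G)$ exists by Knaster--Tarski and is the union of all post-fixpoints. The argument has two directions. First, \emph{$\wlp(W,\Q)$ is a fixpoint of $G$}: unroll the loop semantics by one step and case-split on which transition fires at $(s,s')$---immediate termination when $\neg\leftF{e}\land\neg\rightF{e'}$, a left-only step through $\biLeft{B}$, a right-only step through $\biRight{B}$, a joint step through $B$, or adequacy failure. Each case matches the correspondingly guarded conjunct of $G$ evaluated at $\wlp(W,\Q)$ (with the failure rule matching $G$'s requirement that the adequacy-failure condition not hold), yielding $\wlp(W,\Q)=G(\wlp(W,\Q))$ and hence $\wlp(W,\Q)\subseteq\gfp(G)$. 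Second, \emph{$\gfp(G)\subseteq\wlp(W,\Q)$}: it suffices that every post-fixpoint $\X\subseteq G(\X)$ satisfies $\X\subseteq\wlp(W,\Q)$. Fixing $(s,s')\in\X$, I would prove by well-founded induction on the height of big-step derivations that (a) $W/(s,s')\not\beval\fail$ and (b) $W/(s,s')\beval(t,t')$ implies $(t,t')\models\Q$. In each iteration case, $(s,s')\in\X\subseteq G(\X)$ supplies, for whichever step fires, that $\biLeft{B}$, $\biRight{B}$, or $B$ does not fail and sends $(s,s')$ into $\X$ (immediately from the definition of $\wlp$), after which the recursive $W$-derivation is strictly shorter and the induction hypothesis applies; the base case $\neg\leftF{e}\land\neg\rightF{e'}$ uses that $G(\X)$ there reduces to $\Q$. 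Since $\gfp(G)$ is itself a post-fixpoint, this gives $\gfp(G)\subseteq\wlp(W,\Q)$, and with the first direction we conclude $\wlp(W,\Q)=\gfp(G)$.

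The real work lies in the second direction of the loop. The induction cannot be structural on the bicom, because $W$ re-enters itself and its body is reached only through the projections $\biLeft{B}$ and $\biRight{B}$, which are not subterms of $W$; this forces induction on derivation height together with careful bookkeeping that $\wlp(\biLeft{B},\X)$ and $\wlp(\biRight{B},\X)$ deliver exactly ``no failure, all results in $\X$'', so that the hypothesis transfers cleanly to the shorter recursive derivation. The adequacy-failure rule must likewise be threaded through both the fixpoint computation and the no-failure argument, since it is precisely what makes $G$ return $\False$ on the blocked one-sided configurations.
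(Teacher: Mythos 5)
Your proposal is correct. There is, however, no written proof in the paper to compare it against: Lemma~\ref{lem:rwlp_equations} is stated without proof, with the paper indicating (in the footnote on the unary $\wlp$ equations) that such equations are derived from the big-step semantics in the style of Apt et al., and with the details deferred to the Rocq mechanization. Your argument is the natural realization of exactly that plan, and it matches the paper's own reasoning style where it surfaces: your reduction of the embed case to the sequence case via $\biEmb{c}{c'}\bieq\biEmb{c}{\skipc};\biEmb{\skipc}{c'}$, Lemma~\ref{lem:bieq_bsem_equiv}, and Lemma~\ref{lem:rwlp_prop}(iii) is precisely the pattern used in the embed case of the paper's proof of Theorem~\ref{thm:main}; and your two-inclusion Knaster--Tarski argument for bi-while --- $\wlp$ of the loop is a (post)fixpoint of $G$, and every postfixpoint of $G$ is contained in $\wlp$ of the loop by induction on derivation height, with the adequacy-failure rule discharged by conjunct (G5) --- correctly identifies both why the induction must be on derivation height rather than on bicom structure (the body enters only through the projections $\biLeft{B}$, $\biRight{B}$, which are not subterms) and where the one genuinely nonstandard point of the semantics, the built-in adequacy failure, has to be threaded through.
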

The definition of $G(e,e',\Lrel,\R,B,\Q)$ 
involves five conjuncts to which we make later reference, so we label them (G1)--(G5).
\[ G(e,e',\Lrel,\R,B,\Q)(\X) \eqdef 
\begin{array}[t]{ll}
(\neg\leftF{e} \land \neg\rightF{e'} \imp \Q) \land                                    & \mbox{(G1)}\\
(\leftF{e} \land \Lrel \imp \wlp(\biLeft{B}, \X)) \land                                & \mbox{(G2)}\\ 
(\rightF{e'}\land \R \land \neg(\leftF{e}\land\Lrel) \imp \wlp(\biRight{B}, \X)) \land & \mbox{(G3)}\\
(\leftF{e} \land \rightF{e'} \land ~\Lrel \land ~\R \imp \wlp(B, \X)) \land                & \mbox{(G4)}\\
((e \agreeRel e') \lor (\leftF{e} \land \Lrel) \lor (\rightF{e'} \land \R))            & \mbox{(G5)} 
\end{array}\]
The reader can check that $G(e,e',\Lrel,\R,B,\Q)(\X)$ is monotonic in $\X$.

The first equation in the Lemma says that $\wlp(\biEmb{c}{c'},\Q)$ is
$\wlp(\biEmb{c}{\skipc},\wlp(\biEmb{\skipc}{c'},\Q))$.  The order reflects the
asymmetry in the failure semantics of $\biEmb{c}{c'},\Q)$
(Figure~\ref{fig:bicomSem}).\footnote{Readers accustomed to using such equations
to define $\wlp$ may see apparent circularity in this equation, but this is not
a definition.}

\subsection{Semantic framing}\label{sec:framing}

For the filter-adequacy transformation to serve its purpose we need the $\chk$ function to use fresh
variables.  In an implementation this is easily accomplished using syntactic
checks and gensym (as discussed in Section~\ref{sec:WhyRel}).  But a stateful
gensym cannot easily be used in proofs; rather, we need to define the transformation as
a pure function in the ambient logic.  Moreover shallow embedding precludes
naive syntactic analysis to find the variables used in a given command or bicom.
So, for the technical development we assume given a set of variables that
overapproximates those on which the command or bicom acts and on which its
constituents depend.  In this section we formalize checks that such
approximation holds, called semantic framing conditions.  Then, for commands and
bicoms, Section~\ref{sec:synframe} gives a conservative syntactic formulation
that is convenient in proofs by induction on program structure.

We choose to represent the sets by lists (without any requirement of ordering or
uniqueness) and write $x\in vs$ to say $x$ is in the list $vs$ of variables.
For list $vs$ of variables, define relation $\agree{vs}$ on stores by
$ \graybox{$ s \agree{vs} t $} \eqdef \all{x\in vs}{s(x)=t(x)} $ (pronounced
``$s$ \dt{agrees with $t$ on $vs$}'').
\begin{definition}[semantic frames]\label{def:semframe}
For expressions, assertions, two-state expressions, and relations we define
a proposition \graybox{$\semframe{\missingArg}{vs}$}, that says the entity depends only on the variables in list $vs$.\footnote{Please note here we use logic symbols for propositions in the ambient logic,
by contrast with, e.g., Lemma~\ref{lem:rwlp_equations} where they denote operations on store relations.}
\[
\begin{array}{lcl}
\semframe{e}{vs} & \eqdef & \all{s,t}{s\agree{vs} t \imp e(s) = e(t) } \\
\semframe{p}{vs} & \eqdef & \all{s,t}{s\agree{vs} t \imp (s\models p \iff t\models p) } \\
\semframe{E}{vs} & \eqdef & \all{s,t,s',t'}{s\agree{vs} s' \land t\agree{vs} t'
  \imp E(s,s') = E(t,t') } \\
\semframe{\R}{vs} & \eqdef & \all{s,t,s',t'}{s\agree{vs} s' \land t\agree{vs} t'
  \imp (s,s'\models \R \iff t,t'\models \R) }
\end{array}\]
\end{definition}

For expression $e$, the property $\semframe{e}{vs}$ (pronounced ``$vs$ \dt{frames} $e$'') says that the value of $e$ 
in a given store depends only on the values of the variables in $vs$.  Similary for assertions etc.

\begin{lemma}\label{lem:substR_outside_frame}
$\semframe{\R}{vs}$ and $x\notin vs$ implies
$\subst{\R}{|x}{|e} = \R$.
\end{lemma}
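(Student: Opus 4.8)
The plan is to prove the stated equality of relations pointwise, using that under shallow embedding two store relations are equal exactly when they contain the same pairs (recall $\models \R\iff\S$ means $\R=\S$). So I fix arbitrary stores $s,t$ and aim to show $(s,t)\in\subst{\R}{|x}{|e} \iff (s,t)\in\R$. First I would unfold the definition of right-side substitution, under which $(s,t)\in\subst{\R}{|x}{|e}$ holds iff $(s,\update{t}{x}{e(t)})\in\R$. This reduces the goal to the equivalence
\[
(s,\update{t}{x}{e(t)})\models\R \iff (s,t)\models\R .
\]

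The key observation is that the two store pairs appearing here, namely $(s,\update{t}{x}{e(t)})$ and $(s,t)$, agree componentwise on $vs$: their left stores are literally equal (so $s\agree{vs}s$), and their right stores $\update{t}{x}{e(t)}$ and $t$ differ only at $x$, hence agree on every variable of $vs$ since $x\notin vs$ by hypothesis. Note that the particular value $e(t)$ substituted for $x$ is irrelevant to this step; any value would agree with $t$ on $vs$ precisely because $x\notin vs$. With these two agreements in hand, I would instantiate the framing hypothesis $\semframe{\R}{vs}$ (Definition~\ref{def:semframe}) at these two pairs, which yields exactly the desired equivalence and discharges the reduced goal. Generalizing over the arbitrary $s,t$ then gives $\subst{\R}{|x}{|e}=\R$.

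There is no substantial obstacle here; this is a routine application of the frame definition. The only point requiring care is the bookkeeping of which store plays which role in the (near-symmetric) form of $\semframe{\R}{vs}$: one must line up $(s,\update{t}{x}{e(t)})$ and $(s,t)$ as the two pairs whose left stores agree and whose right stores agree, so that the conclusion of the frame condition is exactly $(s,\update{t}{x}{e(t)})\models\R \iff (s,t)\models\R$. (A symmetric argument establishes the analogous fact for left substitution $\subst{\R}{x|}{e|}$, should it be needed elsewhere.)
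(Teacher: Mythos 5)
Your proof is correct and is essentially the argument the paper intends: the paper states this lemma without a written proof (treating it as a routine consequence of Definition~\ref{def:semframe}, with the formal proof in the Rocq development), and your pointwise unfolding of right substitution followed by applying $\semframe{\R}{vs}$ to the pairs $(s,\update{t}{x}{e(t)})$ and $(s,t)$ — whose left stores are equal and whose right stores agree on $vs$ because $x\notin vs$ — is exactly that routine argument. Your parenthetical care about lining up the pairs componentwise (lefts with lefts, rights with rights) is also the correct reading of the framing definition.
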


For an expression, the property $\semframe{e}{vs}$ is about the value of $e$.
For commands, semantic framing says that the \emph{effect} only depends on the initial values of variables in $vs$. 
\[ \graybox{$\semframe{c}{vs}$} \eqdef
\begin{array}[t]{l}
(\all{s,s',t}{s\agree{vs}s' \land c/s\ceval t \imp
  \some{t'}{c/s'\ceval t' \land t\agree{vs}t'}})  \\
\land(\all{s,s'}{s\agree{vs}s' \land c/s\ceval\fail \imp c/s'\ceval\fail }) 
\end{array}\]
Note that this considers the effect on $vs$.
This allows that if the command terminates normally it may have an effect on other variables, and that effect may depend on other variables.\footnote{Some readers will note the connection with possibilistic noninterference.}
Although the phrasing of these conditions seems asymmetric, the relation $\agree{vs}$ is symmetric. So an informal reading of $\semframe{c}{vs}$ is that from initial states $s,s'$ that agree modulo $vs$, $c$ has the same behaviors (modulo $\agree{vs}$).
For bicoms the definition is similar.
\[ \graybox{$\semframe{B}{vs}$} \eqdef
\begin{array}[t]{l}
(\all{s,s',t,t',u,u'}{
    \begin{array}[t]{l}
      s\agree{vs}t \land s'\agree{vs}t'\land B/(s,s')\ceval (u,u') \\
      \imp \some{v,v'}{B/(t,t')\beval (v,v') \land u\agree{vs}v \land u'\agree{vs}v'}})
    \end{array} \\
\land (\all{s,s',t,t'}{s\agree{vs}t \land s'\agree{vs}t' \land B/(s,s')\beval\fail 
  \imp B/(t,t')\beval\fail })
\end{array}\]

\begin{lemma}[framing $\wlp$]
\label{lem:frame_sem_rwlp}
(i) If $\, \semframe{c}{vs}$ and $\semframe{\Q}{vs}$ then $\semframe{\wlpR(c,\Q)}{vs}$. \\
(ii) If $\, \semframe{B}{vs}$ and $\semframe{\Q}{vs}$ then $\semframe{\wlp(B,\Q)}{vs}$.
\end{lemma}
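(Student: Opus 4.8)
The plan is to argue both parts \emph{directly from the semantic definitions}, with no induction on program structure. The key observation is that $\wlp(B,\Q)$ is defined (equation~\ref{eq:def:rwlp}) purely in terms of the evaluation relation $B/(s,s')\beval\varphi$ and the postcondition $\Q$, and that $\semframe{B}{vs}$ is itself a condition on that same evaluation relation; so framing of $\wlp(B,\Q)$ should fall out by unfolding. I would prove (ii) first, then reduce (i) to it using the definitional identity $\wlpR(c,\Q)=\wlp(\biEmb{\skipc}{c},\Q)$.

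For (ii), I would unfold $\semframe{\wlp(B,\Q)}{vs}$ as in Definition~\ref{def:semframe}: fixing stores with $a\agree{vs}a'$ and $b\agree{vs}b'$, I must show $(a,b)\models\wlp(B,\Q)$ iff $(a',b')\models\wlp(B,\Q)$. Since $\agree{vs}$ is symmetric, it suffices to prove one implication, say assuming $(a,b)\models\wlp(B,\Q)$ and deriving $(a',b')\models\wlp(B,\Q)$. For non-failure: if $B/(a',b')\beval\fail$, then the failure clause of $\semframe{B}{vs}$, applied with the agreeing stores in the flipped roles ($a'\agree{vs}a$, $b'\agree{vs}b$), gives $B/(a,b)\beval\fail$, contradicting $(a,b)\models\wlp(B,\Q)$. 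For the postcondition: given any $(v,v')$ with $B/(a',b')\beval(v,v')$, the normal-termination clause of $\semframe{B}{vs}$ yields a pair $(u,u')$ with $B/(a,b)\beval(u,u')$, $u\agree{vs}v$, and $u'\agree{vs}v'$; then $(u,u')\models\Q$ by the assumption on $(a,b)$, and finally $\semframe{\Q}{vs}$ transports this along $u\agree{vs}v$ and $u'\agree{vs}v'$ to give $(v,v')\models\Q$. Divergence and $\havRtKeyword$-blockage require no special handling, since the postcondition clause only quantifies over pairs actually reached.

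For (i), I would reduce to (ii). By definition $\wlpR(c,\Q)=\wlp(\biEmb{\skipc}{c},\Q)$, so it suffices to know $\semframe{\biEmb{\skipc}{c}}{vs}$, the hypothesis already supplying $\semframe{\Q}{vs}$. This framing of the embed follows from $\semframe{c}{vs}$ together with the fact that $\skipc$ is framed by every $vs$: the semantics of $\biEmb{\skipc}{c}$ leaves the left store fixed and runs $c$ on the right, and fails exactly when $c$ fails on the right, so the two clauses of $\semframe{\biEmb{\skipc}{c}}{vs}$ reduce immediately to the corresponding clauses of $\semframe{c}{vs}$ (the left components stay equal to the already-agreeing source stores). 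Alternatively one can simply repeat the argument of (ii) verbatim, reading $\wlpR$ through its defining embed.

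The development is routine; the only thing that requires care is matching the deliberately asymmetric phrasings in Definition~\ref{def:semframe} --- the existential witness in the normal-termination clause and the fixed direction of the implication in the failure clause --- against the symmetric biconditional of relation framing, invoking symmetry of $\agree{vs}$ at each application. I would also flag what to \emph{avoid}: proving (ii) by structural induction using the $\wlp$ equations of Lemma~\ref{lem:rwlp_equations} is tempting but harder, because the bi-while case would force a separate argument that the $\gfp$ of the framing-preserving operator $G$ is again framed (a fixpoint-preservation lemma); the direct semantic route sidesteps this entirely.
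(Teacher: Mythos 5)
Your proof is correct. The paper gives no written proof of this lemma (it is deferred to the Rocq mechanization), but your argument---unfolding the definition of $\wlp$ and transporting the failure and normal-termination cases across component-wise $\agree{vs}$-related store pairs via the two clauses of $\semframe{B}{vs}$ together with $\semframe{\Q}{vs}$, invoking symmetry of $\agree{vs}$ to bridge the one-directional phrasing of those clauses---is precisely the direct semantic argument these definitions are designed to support, and it is consistent with the paper's own signals (only Lemma~\ref{lem:frame_of_com_sem}, whose hypotheses are \emph{syntactic}, is flagged as needing induction on $\size$). Your reduction of (i) to (ii) is also sound: since $\wlpR(c,\Q)$ is by definition $\wlp(\biEmb{\skipc}{c},\Q)$, and the semantics makes $\biEmb{\skipc}{c}$ fix the left store, run $c$ on the right, and fail exactly when $c$ does, $\semframe{c}{vs}$ immediately yields $\semframe{\biEmb{\skipc}{c}}{vs}$.
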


\subsection{Variants and syntactic frames}\label{sec:synframe}

Here and in the following sections we work with annotated command and bicom syntax.  
Bicoms already feature alignment conditions, which are a kind of annotation that 
accords with the proof rule \rn{eDo} in Figure~\ref{fig:ERHL}.  Now we add variant annotations (as in Section~\ref{sec:overview}). 
Unlike alignment conditions, variants have no effect on semantics; they just serve in defining the filter-adequacy transformation (Section~\ref{sec:chk}).    
The syntax of while and bi-while is henceforth 
\[ \whilev{e}{e_1}{c} \qquad \biwhilev{e|e'}{\P|\P'}{E}{B} \]
where $e_1$ is an integer expression
and $E$ is a two-state expression.
All preceding definitions and results are applicable to the revised syntax:
the semantics ignores variants.  
For technical reasons, the right bi-projection $\biRight{\mystrut\missingArg}$ 
(Figure~\ref{fig:biproject}) keeps the variant. We consider that $\kateq$ and $\bieq$ leave variants unchanged.\footnote{That is how the Rocq mechanization defines these relations; but ignoring variants would work as well.} 

\begin{figure}[t!]
\begin{small}
\[\begin{array}{ll}
c                    & \comFrame(c,vs)                          \\\hline

\skipc               & \True                                    \\
x:=e                 & x \in vs \land \semframe{e}{vs}          \\
\havc{x}             & x \in vs                                 \\
\assertc{p}          & \semframe{p}{vs}                         \\
c_1;c_2              & \comFrame(c_1,vs)\land\comFrame(c_2,vs)   \\ 
\ifc{e}{c_1}{c_2}    & \semframe{e}{vs}\land\comFrame(c_1,vs)\land \comFrame(c_2,vs) \\
\whilev{e}{e_1}{c_1} & \semframe{e}{vs}\land\semframe{e_1}{vs} \land \comFrame(c_1,vs)
\end{array}\]

\[\begin{array}{ll}
B              & \biFrame(B,vs)                         \\\hline

\biEmb{c}{c'}  & \comFrame(c,vs)\land\comFrame(c',vs)   \\

\assertc{\P}   & \semframe{\P}{vs}                      \\

\havRt{x}{\P}  & x\in vs \land\semframe{\P}{vs}         \\

B_1; B_2       & \biFrame(B_1,vs) \land \biFrame(B_2,vs) \\

\ifFour{e\sep e'}{B_1}{B_2}{B_3}{B_4} 
     & \semframe{(e,e')}{vs}\land\quant{\land}{i}{1\leq i\leq 4}{\biFrame(B_i,vs)} \\

\biwhilev{e\sep e'}{\P\sep\P'}{E}{B_1} 
     & \semframe{(e,e',\P,\P',E)}{vs}\land \biFrame(B_1,vs) 
\end{array}\]
\end{small}
\vspace*{-3ex}
\caption{Defining $\comFrame$ and $\biFrame$.
Here $\semframe{(e,e',\ldots)}{vs}$ abbreviates
$\semframe{e}{vs}\land\semframe{e'}{vs}\land\ldots$.
}
\label{fig:frame}
\end{figure}

Figure~\ref{fig:frame} defines functions  $\comFrame$ and $\biFrame$ that we loosely describe as 
syntactic framing checks because they recurse over syntax.   
By contrast with the semantic property $\semframe{c}{vs}$, the condition $\comFrame(c,vs)$ considers 
variant expressions in the code, even though variants do not influence the semantics.  
The same for $\biFrame$.
The functions also check that all assigned/havoc'd variables are in $vs$. 
Observe that $\comFrame(c,vs)$ implies $\semframe{\missingArg}{vs}$ for every expression
and assertion in $c$.

The most difficult and subtle results involving these functions
is Lemma~\ref{lem:chk_frame_sem} in the sequel.
But the following is also important.
 
\begin{lemma}\label{lem:frame_of_com_sem}
$\comFrame(c,vs)$ implies $\semframe{c}{vs}$ and 
$\biFrame(B,vs)$ implies $\semframe{B}{vs}$.
\end{lemma}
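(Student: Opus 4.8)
The plan is to prove the two implications one after the other, commands first (by structural induction on the command) and then bicoms (by strong induction on the $\size$ measure). Strong induction is forced in the bicom case because the bi-while semantics in Figure~\ref{fig:bicomSem} refers to $\biLeft{B}$ and $\biRight{B}$ rather than to subterms of $B$; the size-of-projection lemma supplies the strict decrease that makes this legitimate. Throughout, I unfold the semantic-frame definitions and argue directly from the evaluation rules of Figures~\ref{fig:commandSemSelected} and~\ref{fig:bicomSem}. The one auxiliary fact I establish up front is that the syntactic frame checks are preserved by projection: $\biFrame(B,vs)$ implies $\comFrame(\Left{B},vs)$ and $\biFrame(\biRight{B},vs)$. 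Each is a routine structural induction on $B$ that reads the clauses of $\comFrame/\biFrame$ off against the defining equations of the projections (checking, in the loop case, that the projected variant is still framed), and it supplies exactly the framing hypotheses needed to invoke the induction hypothesis on $\biLeft{B}$ and $\biRight{B}$.

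For commands the primitive cases are immediate: for $x:=e$ the clause $x\in vs\land\semframe{e}{vs}$ makes the two updated stores agree on $vs$; for $\assertc{p}$ the clause $\semframe{p}{vs}$ transfers both the success and the failure verdict to the $vs$-agreeing store; $\havc{x}$ reuses the same havoc value on both sides (using $x\in vs$). Sequencing and $\ifc{e}{c_1}{c_2}$ follow by composing the induction hypotheses, using $\semframe{e}{vs}$ to force both runs down the same branch. The while case is the first place a nested induction appears: with $\semframe{c_1}{vs}$ in hand I argue $\semframe{\whilev{e}{e_1}{c_1}}{vs}$ by induction on the big-step derivation of $\whilec{e}{c_1}/s\ceval\phi$, using $\semframe{e}{vs}$ to keep the two runs synchronized at the guard and the outer induction hypothesis to match each body iteration modulo $\agree{vs}$.

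For bicoms, the embed case $\biEmb{c}{c'}$ reduces directly to the command lemma applied to $c$ and $c'$, respecting the asymmetric failure semantics (fail on the left, or terminate left then fail right). The cases $\assertc{\P}$ and $\havRt{x}{\P}$ are immediate from $\semframe{\P}{vs}$ and $x\in vs$; note that $\havRt{}{}$ has no failure rule, so its failure clause is vacuous, and for its normal clause one reuses the same havoc value on the right and transfers the guard $\P$ via $\semframe{\P}{vs}$. Sequencing and bi-if are routine compositions, with $\semframe{(e,e')}{vs}$ selecting the same one of the four branches on both runs. The crux is bi-while $W\eqdef\biwhilev{e|e'}{\Lrel|\R}{E}{B_0}$: since $\size(\biLeft{B_0}),\size(\biRight{B_0})\le\size(B_0)<\size(W)$, the strong induction hypothesis yields $\semframe{B_0}{vs}$, $\semframe{\biLeft{B_0}}{vs}$, and $\semframe{\biRight{B_0}}{vs}$ (invoking the projection-framing helper for the latter two). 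With these fixed, I prove $\semframe{W}{vs}$ by a nested induction on the derivation of $W/(s,s')\beval\varphi$, casing on which bi-while rule applies; in each case the guards and alignment conditions transfer across $vs$-agreeing pairs because $\semframe{e}{vs},\semframe{e'}{vs},\semframe{\Lrel}{vs},\semframe{\R}{vs}$ all hold, the single body step is matched by the relevant body frame, and the recursive call is matched by the nested induction hypothesis.

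The main obstacle is precisely this bi-while case, for three intertwined reasons: it mixes strong induction on $\size$ (for the body and its two projections) with a nested induction on the evaluation derivation (for the iterations); it relies on the projection-framing helper together with the size-of-projection lemma to license the hypotheses on $\biLeft{B_0}$ and $\biRight{B_0}$; and it must additionally handle the adequacy-failure rule (the last rule in Figure~\ref{fig:bicomSem}), whose side condition is a disjunction of guard and alignment (non)satisfactions that must be shown to transfer verbatim along $\agree{vs}$ using the frames of $e,e',\Lrel,\R$. Everything else is bookkeeping against the frame definitions.
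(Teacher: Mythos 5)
Your proposal is correct and follows essentially the same route as the paper: structural induction on $c$ for the command half, strong induction on $\size(B)$ for the bicom half (justified, exactly as the paper does, by the fact that the bi-while semantics invokes $\biLeft{B}$ and $\biRight{B}$, which are not subterms), with detailed case analysis against the evaluation rules. The paper gives only a three-sentence sketch; your explicit helper lemma that $\biFrame(B,vs)$ is preserved under the (bi-)projections, and the nested inductions on evaluation derivations in the two loop cases, are precisely the details any complete version of that sketch must supply.
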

The first implication is proved by induction on $c$.  The second implication is
proved by induction on $\size(B)$, because in the case $B$ is a loop the
semantics involves projection of the loop body, and projections are not in
general subterms of the bicom.  Both proofs involve reasoning about all details
in the semantics.

\begin{lemma}\label{lem:rwlp_subst_under} 
If $\biFrame(B,vs)$ and $x\notin vs$ then, for any $n\in\Z$,
$\models \subst{(\wlp(B,\R))}{|x}{|n} \imp \wlp(B,\subst{\R}{|x}{|n})$. \end{lemma}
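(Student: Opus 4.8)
The plan is to prove the stated implication (and in fact an equality in every case except loops) by induction on $\size(B)$, using the $\wlp$ equations of Lemma~\ref{lem:rwlp_equations} to push the right-substitution $\subst{\cdot}{|x}{|n}$ from the precondition into the postcondition. Unfolding $\models\cdot\imp\cdot$ as containment of store relations, the goal is: for every $B$ with $\biFrame(B,vs)$ and $x\notin vs$, and all $\R$, $n$, $\subst{(\wlp(B,\R))}{|x}{|n}\subseteq\wlp(B,\subst{\R}{|x}{|n})$. The one fact that makes this work is that $\biFrame(B,vs)$ with $x\notin vs$ renders $x$ inert in $B$: reading off $\comFrame/\biFrame$ from Figure~\ref{fig:frame}, every test, assertion, filter relation, and assignment right-hand side occurring in $B$ is $\semframe{\cdot}{vs}$, and every assigned or havoc'd variable lies in $vs$ and is therefore distinct from $x$. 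Hence, by Lemma~\ref{lem:substR_outside_frame}, $\subst{\cdot}{|x}{|n}$ acts as the identity on every expression and relational assertion appearing inside $B$.

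Two commutation sub-lemmas handle the non-loop cases. First, right-substitution commutes unconditionally with left-wlp, $\subst{(\wlpL(c,\Q))}{|x}{|n}=\wlpL(c,\subst{\Q}{|x}{|n})$: by the left-side analogue of Lemma~\ref{lem:wlpR_equations}, $\wlpL(c,\cdot)$ rewrites its argument using only left substitution, left quantification, and $\leftF{\cdot}$, each of which commutes with a right substitution. Second, when $\comFrame(c,vs)$ and $x\notin vs$, right-substitution commutes with $\wlpR(c,\cdot)$: by induction on $c$ over Lemma~\ref{lem:wlpR_equations}, the only substitutions produced land in $c$'s (framed) expressions, which Lemma~\ref{lem:substR_outside_frame} annihilates, or in a right-substituted/quantified variable that is in $vs$ and hence commutes with $\subst{\cdot}{|x}{|n}$. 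With these: the $\biEmb{c}{c'}$ case chains the two sub-lemmas through $\wlp(\biEmb{c}{c'},\R)=\wlpL(c,\wlpR(c',\R))$; the $\assertc{\P}$ and $\havRt{x'}{\P}$ cases use $\subst{\P}{|x}{|n}=\P$ (as $\P$ is framed) together with $x'\in vs$, $x'\neq x$; the sequence case uses the induction hypotheses on $B_1,B_2$ and monotonicity of $\wlp(B_1,\cdot)$ (immediate from (\ref{eq:def:rwlp})), chaining two implications; and the bi-if case distributes $\subst{\cdot}{|x}{|n}$ over $\land$ and $\imp$, the framed guards $\leftF{e},\rightF{e'}$ being invariant.

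The bi-while $W$ is the crux, because the substitution must be exchanged with a greatest fixpoint. Write $\wlp(W,\R)=\gfp(G_\R)$ and $\wlp(W,\subst{\R}{|x}{|n})=\gfp(G_{\R'})$, where $\R':=\subst{\R}{|x}{|n}$ and $G_\R,G_{\R'}$ are the loop functionals of Lemma~\ref{lem:rwlp_equations} differing only in the postcondition that occurs in conjunct (G1). Inspecting (G1)--(G5): every antecedent and the whole of (G5) are built from the framed tests $e,e'$ and the framed alignment conditions, so $\subst{\cdot}{|x}{|n}$ fixes them; it acts only on the postcondition in (G1) and on the recursive occurrences $\wlp(\biLeft{C},\X),\wlp(\biRight{C},\X),\wlp(C,\X)$ in (G2)--(G4), where $C$ is the loop body, all in positive position. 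For $\biLeft{C}$ the first sub-lemma applies directly; for $C$ and $\biRight{C}$ the induction hypothesis applies, since $\size(\biRight{C})\le\size(C)<\size(W)$ by the size-of-projection lemma and $\biFrame(\biRight{C},vs)$ holds (a routine check of $\biFrame$ against the clauses of Figure~\ref{fig:biproject}). Thus each recursive occurrence satisfies $\subst{(\wlp(\cdot,\X))}{|x}{|n}\imp\wlp(\cdot,\subst{\X}{|x}{|n})$, and by positivity these propagate through $G_\R$ to give $\subst{(G_\R(\X))}{|x}{|n}\imp G_{\R'}(\subst{\X}{|x}{|n})$.

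Instantiating at $\X:=\gfp(G_\R)$ and using the fixpoint equation $\gfp(G_\R)=G_\R(\gfp(G_\R))$ yields $\subst{(\gfp(G_\R))}{|x}{|n}\imp G_{\R'}(\subst{(\gfp(G_\R))}{|x}{|n})$, i.e.\ $\subst{(\gfp(G_\R))}{|x}{|n}$ is a post-fixpoint of the monotone $G_{\R'}$. Since $\gfp$ is the greatest post-fixpoint, $\subst{(\gfp(G_\R))}{|x}{|n}\subseteq\gfp(G_{\R'})=\wlp(W,\subst{\R}{|x}{|n})$, as required. This co-inductive step---turning ``substitution commutes with the loop functional, in the single needed direction'' into containment of the two greatest fixpoints---is where the real difficulty lies; the remaining bookkeeping is simply that $x$ never appears in any test or alignment condition (so the guards survive substitution verbatim) and that the body and its projections are framed and strictly smaller, which keeps the induction on $\size(B)$ well-founded.
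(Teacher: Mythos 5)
The paper states this lemma without a textual proof (it is discharged only in the Rocq mechanization), so there is no written argument to compare against; judged on its own, your proof is correct and uses exactly the machinery the paper sets up for this purpose: induction on $\size(B)$ with the postcondition kept general, the $\wlp$ equations of Lemma~\ref{lem:rwlp_equations}, inertness of $x$ via $\biFrame$ and Lemma~\ref{lem:substR_outside_frame}, the size-of-projection lemma plus preservation of $\biFrame$ under $\biRight{\mystrut\missingArg}$ to justify the inductive appeal for the loop body's projections, and the Knaster--Tarski characterization of $\gfp$ as greatest post-fixpoint. The bi-while step --- showing $\subst{(G_\R(\X))}{|x}{|n} \imp G_{\R'}(\subst{\X}{|x}{|n})$ by framing of guards and alignment conditions, instantiating at $\X:=\gfp(G_\R)$, and concluding by the post-fixpoint property --- is the genuinely hard part and you handle it correctly.

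Two small presentational caveats, neither a gap. First, your sub-lemma for $\wlpL$ claims unconditional \emph{equality}, but the equation-pushing argument you sketch only yields the $\imp$ direction once $c$ contains a loop (the same fixpoint asymmetry you acknowledge elsewhere); equality does hold, but the clean route is semantic --- $\biEmb{c}{\skipc}$ neither reads nor writes the right store, so $\wlpL(c,\cdot)$ commutes with any right substitution --- and in any case only the implication is used downstream. Second, the loop case of your $\wlpR$ sub-lemma needs the same post-fixpoint argument you give for bi-while; you don't say so explicitly, but since the technique is fully developed in your final paragraph, this is a matter of ordering the exposition rather than a missing idea.
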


We conclude the section with straightforward results involving substitution.

\begin{lemma}\label{lem:subst_equal_ante}
If $\models \P\imp\rightF{x=e}$ 
then 
$\models (\P\imp\subst{\Q}{|x}{|e}) \iff
(\P\imp\Q)$.
\end{lemma}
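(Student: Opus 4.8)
The plan is to argue pointwise, using the shallow-embedding convention that validity of a relational biconditional amounts to equality of the two relations, i.e. agreement at every pair of stores. So I would fix an arbitrary pair $(s,s')$ and show that the two implications $\P\imp\subst{\Q}{\smSep x}{\smSep e}$ and $\P\imp\Q$ have the same truth value at $(s,s')$; since $(s,s')$ is arbitrary this yields $\models (\P\imp\subst{\Q}{\smSep x}{\smSep e}) \iff (\P\imp\Q)$.

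First I would dispatch the trivial case. If $s,s'\not\models\P$, then both implications are vacuously satisfied at $(s,s')$, so their truth values agree regardless of $\Q$. The substantive case is $s,s'\models\P$. Here the hypothesis $\models\P\imp\rightF{x=e}$ gives $s,s'\models\rightF{x=e}$, which by the definition of $\rightF{\missingArg}$ means exactly $s'(x)=e(s')$: the right store already assigns $x$ the value that $e$ denotes there.

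The crux is then a one-line store-update observation. Because $s'(x)=e(s')$, updating $x$ to that same value changes nothing, i.e. $\update{s'}{x}{e(s')}=s'$ (by extensionality of stores). Unfolding the definition of right substitution from Section~\ref{sec:correctness}, namely $(s,s')\models\subst{\Q}{\smSep x}{\smSep e}$ iff $(s,\update{s'}{x}{e(s')})\models\Q$, the equality $\update{s'}{x}{e(s')}=s'$ collapses this to $s,s'\models\subst{\Q}{\smSep x}{\smSep e}$ iff $s,s'\models\Q$. Hence, under the standing assumption $s,s'\models\P$, the two consequents coincide, so the two implications agree at $(s,s')$. Combining the two cases completes the argument.

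I do not anticipate any real obstacle: the whole proof is a case split on membership in $\P$ followed by the ``substitution into an already-satisfied equality is a no-op'' simplification, which is just the definition of $\subst{\Q}{\smSep x}{\smSep e}$ together with $\update{s'}{x}{s'(x)}=s'$. The only point requiring care is bookkeeping of the components: the substitution, the guard $\rightF{x=e}$, and the update all act on the right store $s'$, with the left store $s$ held fixed throughout, so I must be sure to read $\subst{\Q}{\smSep x}{\smSep e}$ as right substitution with $e$ evaluated in $s'$.
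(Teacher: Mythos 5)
Your proof is correct: the pointwise case split on $s,s'\models\P$, followed by the observation that $s'(x)=e(s')$ makes the update $\update{s'}{x}{e(s')}=s'$ a no-op so that right substitution collapses, is exactly the straightforward argument the paper has in mind (it states this lemma without proof, as a routine substitution fact, and your reading of $\subst{\Q}{\smSep x}{\smSep e}$, $\rightF{x=e}$, and validity-as-equality of relations all match the paper's definitions). Nothing is missing.
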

This means the two sides are the same relation, which we could write 
a $(\P\imp\subst{\Q}{x}{e}) = (\P\imp\Q)$.

\begin{lemma}\label{lem:sr_valid_metaR}
$\models\R$ iff for all $n\in\Z$, 
$\models\subst{\R}{x}{n}$.
\end{lemma}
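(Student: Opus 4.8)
The plan is a direct unfolding of the two definitions involved --- validity of a relation and one-sided relational substitution --- after which the equivalence collapses onto the elementary store identity $\update{s'}{x}{s'(x)} = s'$. Throughout I read $\subst{\R}{x}{n}$ as substitution on a single designated side (say the right, matching the convention of the neighbouring lemmas such as Lemma~\ref{lem:rwlp_subst_under}); the left case is symmetric. Here $x\in\keyw{intVar}$, so that the metalevel range $n\in\Z$ coincides exactly with the possible values of the $x$-component of a store. The two facts I would invoke are: the defining property $(s,s')\in\subst{\R}{|x}{|n}$ iff $(s,\update{s'}{x}{n})\in\R$, and the unfolding of $\models\R$ into ``$(s,s')\in\R$ for all $s,s'$''.

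For the forward direction I would assume $\models\R$ and fix $n\in\Z$. To establish $\models\subst{\R}{|x}{|n}$, take arbitrary $s,s'$; by the defining property this reduces to $(s,\update{s'}{x}{n})\in\R$, which is immediate from validity of $\R$. Hence every instance $\subst{\R}{|x}{|n}$ is valid.

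For the converse I would assume $\models\subst{\R}{|x}{|n}$ for every $n\in\Z$ and fix arbitrary $s,s'$. The one nonmechanical move is to instantiate the metalevel quantifier over $n$ at the \emph{current} value $n := s'(x)$, rather than leaving it free; this yields $(s,\update{s'}{x}{s'(x)})\in\R$. Since updating a store at $x$ with its existing value is the identity, $\update{s'}{x}{s'(x)} = s'$, so $(s,s')\in\R$. As $s,s'$ were arbitrary, $\models\R$ follows.

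There is no genuine obstacle here: the only things one must get right are the choice of instantiation $n := s'(x)$ in the backward direction and the restriction to integer variables so that $n$ ranges over precisely the $x$-values of stores. In the mechanization this amounts to unfolding the substitution and rewriting by the update-idempotence lemma, so the whole argument is a few lines.
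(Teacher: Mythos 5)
Your proof is correct: the paper states this lemma without proof (it is introduced as one of the ``straightforward results involving substitution,'' with details deferred to the Rocq development), and your argument---unfolding validity and the substitution definition, then instantiating the metalevel quantifier at $n := s'(x)$ and using $\update{s'}{x}{s'(x)} = s'$---is exactly the intended elementary argument. Your reading of $\subst{\R}{x}{n}$ as one-sided substitution on an integer variable also matches how the lemma is actually invoked (e.g., with $\subst{\cdot}{|x_E}{|n}$ in the proof of the main theorem).
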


\section{The filter-adequacy transformation}\label{sec:trans}

Section~\ref{sec:chk} formalizes the transformation as a function on bicoms.
Since it inserts checks we give it the short name $\chk$.  It relies on a
transformation on commands called $\uChk$ for ``unary check''.
Section~\ref{sec:main} gives the main result, which supports the 
methodology spelled out at the end of Section~\ref{sec:overview}.

The definitions of Section~\ref{sec:chk} are carefully crafted to support the proofs
in Section~\ref{sec:main} of the main theorem about $\chk$ and its supporting lemma about $\uChk$.  By spelling out detailed proofs in a readable way 
(Section~\ref{sec:main} and appendix) we expose what is needed 
to develop such a result for a richer programming language and for particular relational assertion languages.  

\subsection{The check functions}\label{sec:chk}

For the transformation to serve its purpose, the instrumentation added by $\chk$
should not interfere with the underlying executions.  As shown in
Section~\ref{sec:overview}, the instrumentation uses fresh variables in loop
bodies to snapshot the value of the variant in order to assert that it gets
decreased.  An implementation of the transformation will use some form of gensym
for fresh variables, for example using a mutable global variable (as done in our
prototype).  The added variables should be fresh with respect to both the
program and its specification.

For the theoretical development, we have the self-inflicted challenge that
shallow embedding prevents us from computing, the free variables of
assertions, commands, etc.  We also have the challenge to formulate freshness in a way that is
convenient for reasoning; for example, implementing gensym using a state monad
is elegant from a programming point of view but would be a distraction in our
main proofs.
         Instead, we parameterize $\chk$ and the helping function $\uChk$ by a set of variables to avoid, which should be chosen to frame the spec and bicom of interest.  
For simplicity the set is represented by a list.

\begin{figure}[t!]
\begin{small}
\[\begin{array}{ll}
c                   & \uChk(c,vs) \\\hline
\skipc              & \skipc \\
x:=e                & x:=e \\
\havc{x}            & \havc{x} \\
\assertc{p}         & \assertc{p} \\
c_1;c_2               & \uChk(c_1,vs);\uChk(c_2,vs) \\
\ifc{e}{c_1}{c_2}     & \ifc{e}{\uChk(c_1,vs)}{\uChk(c_2,vs)} \\
\whilev{e}{e_1}{c_1} & \whilev{e}{e_1}{c_2} \\
& \quad \mbox{where} 
  \begin{array}[t]{l}
    c_2 \mbox{ is } x := e_1; \uChk(c_1,vs); \assertc(0\leq e_1 < x) \\
    \mbox{and } x\notin vs\catenate\modVars(\uChk(c_1,vs))
  \end{array}
\end{array}\]
\end{small}
\vspace*{-2ex}
\caption{The $\uChk$ transformation on commands}\label{fig:uChk}
\end{figure}
The \dt{unary check} function maps a command $c$ and list $vs$ of variables to a
command \graybox{$\uChk(c,vs)$} that is equivalent except that each loop is
instrumented so it asserts that the body decreases the given variant expression.  It
is defined in Figure~\ref{fig:uChk}.  In the loop case, a variable $x$ is chosen
that is fresh with respect to any instrumentation variables added to the body
$c_1$ and with respect to the given list $vs$.  This is achieved by the
condition $x\notin vs\catenate\modVars(\uChk(c_1,vs))$ in the case for
while.\footnote{For any $c$ we define \graybox{$\modVars(c)$} to be the list of
variables modified in $c$, namely variables that are assigned or havoc'd. The
omitted definition is straightforward recursion on syntax.}  This ensures that
the instrumentation variables of nested loops are distinct.

In our use of $\uChk$, $vs$ will frame the command as well as the specification of interest,
so $vs$ already includes its assigned variables.  Thus the catenation ($\catenate$) will create duplicates. This is harmless, because we allow duplicate elements in $vs$.

For $\uChk$ to be a function one could determinize the choice of $x$, for
example by ordering variables and letting $x$ be the least variable that does
not occur in $vs\catenate\modVars(\uChk(c_1,vs))$.  In our mechanization we
instead use Hilbert's ``indefinite choice'' operator.

The definition of $\uChk$ for sequence is slightly delicate.  Both recursive
calls use the same list $vs$, which means the instrumentation variables added to
$c_1$ may well be the same as those added in $c_2$.  This is harmless, because
the variables are initialized in each loop body (the assignment $x:=e_1$ where
$e_1$ is framed by $vs$).  One might guess to change the definition to use
$\uChk(c_2,vs\catenate\modVars(\uChk(c_1,vs))$, which prevents re-use of an
instrumentation variable for two loops.  The same idea can be used for the
branches of an if.  However, the chosen definition has the nice feature that the
given $vs$ is used unchanged in every recursive call.  This pays off in the
inductive proof of the main theorem and its supporting Lemma~\ref{lem:uChk_term}.

Observe that the check added to a right-side loop body by $\uChk$ has the effect
that a $\forall\forall$ verification will prove must-termination, whereas
may-termination would be sufficient for $\forall\exists$.  Of course in the
absence of nondeterminacy, may- and must-termination are the same.  In the
presence of havoc, must-termination may not hold, in which case the bicom should
be rewritten to one that uses $\havRtKeyword$ to filter out any potential
nontermination.  For an example, see case $B_2$ in Figure~\ref{fig:unno}.

\begin{figure}[t!]
\begin{small}
\[\begin{array}{ll}
B               & \chk(B,vs) \\\hline

\biEmb{c}{c'}    & \biEmb{c}{\uChk(c',vs)} \\

\assertc{\P}     & \assertc{\P}     \\

\havRt{x}{\P}  & \assertc{(\some{\smSep x}{\P})};\havRt{x}{\P} \\ 

B_1; B_2,      & \chk(B_1,vs);\chk(B_2,vs) \\

\ifFour{e\sep e'}{B_1}{B_2}{B_3}{B_4} 
  \quad &  
\ifFour{e\sep e'}{\chk(B_1,vs)}{\chk(B_2,vs)}{\chk(B_3,vs)}{\chk(B_4,vs)} \\

\biwhilev{e\sep e'}{\P\sep\P'}{E}{B_1} & \biwhilev{e\sep e'}{\P\sep\P'}{E}{B_2} \\
     &\quad\mbox{where } B_2 \mbox{ is } 
       \begin{array}[t]{l}
       \havRt{x_1}{(\rightex{x_1}=E)}; \\
       \havRt{x_2}{(\rightex{x_2} = (\rightex{e'}\land\P'))}; \\ 
       \chk(B_1,vs); \\
       \assertc{(\rightex{x_2} \imp 0 \leq E < \rightex{x_1})}
       \end{array} \\
     & \quad\mbox{and } x_1,x_2 \mbox{ are distinct and not in } vs\catenate\modVarsR(\chk(B_1,vs)) 
\end{array}\]
\end{small}
\vspace*{-2ex}
\caption{The $\chk$ transformation on bicoms}
\label{fig:chk}
\end{figure}

The \dt{bicom check} function maps $B$ to a bicom \graybox{$\chk(B,vs)$} with
loops instrumented to assert right-side execution decreases the declared variant
expressions.  Filtered havocs are also guarded by an existence assertion.  The definition is
in Figure~\ref{fig:chk}.  The instrumentation only adds variables on the right
side, so it is convenient to define \graybox{$\modVarsR(B)$} to be the list of
variables modified on the right side of $B$, namely those in \havRtKeyword\ and
on the right side of $\biEmb{\missingArg}{\missingArg}$.

In the loop case, the two-state expression $E$ is used as a variant. It is only
relevant for right-only iterations where it must decrease (due to changes on the
right side since any left variables will remain unchanged).  Integer variable $x_1$
snapshots the value of $E$ and boolean variable $x_2$ snapshots the truth value of the
condition\footnote{The condition on $x_2$ is written using the symbol $=$ but
one could as well write $\iff$ since the type is boolean.  For any $s,s'$ we
have $s,s'\models \rightex{x_2} =
(\rightex{e'}\land\P'\land\neg(\leftex{e}\land\P))$ iff the value of $s'(x_2)$
is true or false according to whether
$s,s'\models \rightex{e'}\land\P'\land\neg(\leftex{e}\land\P))$.}  under which
the current iteration will be right only, which is the relation
$\rightex{e'}\land\P'\land\neg(\leftex{e}\land\P)$.  In the definition of
$\uChk$ an assignment command can be used for the snapshot, but here $E$ and the
condition $\rightex{e'}\land\P'$ depend on a pair of states so we cannot use
assignments for $x_1$ and $x_2$.  In our experience, it is often sufficient for
$E$ to be a right-only expression in which case assignment could be used.

As with \uChk, the definition of \chk\ has the nice feature that $vs$ is
unchanged in recursive calls. This is achieved by allowing that instrumentation
variables may be re-used between conditional branches and between bicoms in
sequence, while ensuring that nested loops have distinct variables. This feature
considerably simplifies the proofs compared with other formulations that we
tried.

A key technical result is that the biprojections commute with $\chk$.  On the
left side this is only up to $\bieq$, because $\Left{\mystrut\missingArg}$
replaces assertions and assumptions with $\skipc$.  (In fact we have
$\Left{\chk(B,vs)} \kateq \Left{B}$ but the two are not identical.)  The same
for $\biLeft{\mystrut\missingArg}$.

\begin{lemma}
\label{lem:chk_commute} $\biLeft{\chk(B,vs)} \bieq \chk(\biLeft{B},vs)$ and 
$\biRight{\chk(B,vs)} = \chk(\biRight{B},vs)$.
\end{lemma}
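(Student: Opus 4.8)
The plan is to prove both equivalences by structural induction on the bicom $B$. Structural induction (rather than induction on $\size$) suffices here because every recursive call on the right-hand sides unfolds $\chk$, $\biLeft{\missingArg}$, and $\biRight{\missingArg}$ on a genuine subterm of $B$: in the bi-while case all three functions recurse on the loop body, which is a direct subterm, so no projection-of-the-body ever escapes the induction. The two parts are largely independent, and I would treat them separately, handling the left equivalence first since it reduces to a clean kat-equivalence fact.

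For the first part I would compute $\chk(\biLeft{B},vs)$ outright. Since $\biLeft{B} = \biEmb{\Left{B}}{\skipc}$ and $\chk(\biEmb{c}{c'},vs)=\biEmb{c}{\uChk(c',vs)}$ with $\uChk(\skipc,vs)=\skipc$, we get $\chk(\biLeft{B},vs)=\biEmb{\Left{B}}{\skipc}=\biLeft{B}$ on the nose, whereas $\biLeft{\chk(B,vs)}=\biEmb{\Left{\chk(B,vs)}}{\skipc}$. Hence the first claim follows from the single sublemma $\Left{\chk(B,vs)}\kateq\Left{B}$ together with the congruence clause of Definition~\ref{def:bieq} that turns $c\kateq d$ and $c'\kateq d'$ into $\biEmb{c}{c'}\bieq\biEmb{d}{d'}$ (instantiated with $c'=d'=\skipc$). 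I would then establish $\Left{\chk(B,vs)}\kateq\Left{B}$ by induction: for $\biEmb{c}{c'}$ and $\assertc{\P}$ the two left projections are literally equal; for $\havRt{x}{\P}$ the assertion inserted by $\chk$ contributes an extra skip, so one needs $\skipc;\skipc\kateq\skipc$ from Definition~\ref{def:kateq}; the sequence and bi-if cases close by the induction hypothesis and congruence of $\kateq$; and in the bi-while case the instrumented body projects on the left to $\skipc;\skipc;\Left{\chk(B_1,vs)};\skipc$, which is $\kateq\Left{B_1}$ by the hypothesis and skip-elimination, so congruence under $\whilec{e}{\missingArg}$ finishes it.

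For the second part I would prove the exact equality $\biRight{\chk(B,vs)}=\chk(\biRight{B},vs)$ by induction on $B$. The cases for embed, assert, $\havRtKeyword$, sequence, and bi-if are routine symbol-pushing: unfold the defining clauses of $\biRight{\missingArg}$ (Figure~\ref{fig:biproject}) and $\chk$ (Figure~\ref{fig:chk}), apply the induction hypothesis to the immediate sub-bicoms, and observe that both sides reduce to the identical term; for bi-if in particular both sides produce $\ifFour{\True|e'}{\cdots}{\cdots}{\cdots}{\cdots}$ carrying $\chk(\biRight{B_i},vs)$ in each branch. The crux is the bi-while case. Here $\biRight{\missingArg}$ forces the left test and left alignment condition to $\False$ while preserving the right test $e'$, the right alignment $\P'$, and the variant $E$; and the instrumentation $\chk$ inserts into the loop body is built entirely from these preserved data — the snapshot of $E$ in $x_1$, the snapshot of $\rightex{e'}\land\P'$ in $x_2$, and the decrease assertion about $E,x_1,x_2$. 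Consequently, once the induction hypothesis identifies $\biRight{\chk(B_1,vs)}$ with $\chk(\biRight{B_1},vs)$, the two instrumented bodies agree clause by clause.

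The one remaining gap, and the main obstacle, is that $\chk$ chooses the snapshot variables $x_1,x_2$ freshly relative to the avoid-list $vs\catenate\modVarsR(\chk(B_1,vs))$ in the left computation but relative to $vs\catenate\modVarsR(\chk(\biRight{B_1},vs))$ in the right; for the bodies to be syntactically identical these deterministic choices must coincide, i.e. the two avoid-lists must have the same membership. I would discharge this with a small auxiliary lemma, $\modVarsR(\biRight{C})=\modVarsR(C)$ for every bicom $C$ (immediate by induction, since $\biRight{\missingArg}$ never removes or renames a right-side modification), applied at $C=\chk(B_1,vs)$: combined with the induction hypothesis $\biRight{\chk(B_1,vs)}=\chk(\biRight{B_1},vs)$ it yields $\modVarsR(\chk(\biRight{B_1},vs))=\modVarsR(\chk(B_1,vs))$, so the avoid-lists coincide and the fresh choices agree. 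This is precisely the point where the design decisions behind Figure~\ref{fig:chk} — keeping $vs$ unchanged across recursive calls, and making the $x_2$-snapshot depend only on the right test and right alignment — are what force the equality to hold on the nose rather than merely up to $\bieq$.
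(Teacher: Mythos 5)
Your proof is correct and follows essentially the route the paper intends: the paper never spells out a proof of this lemma (it is discharged in the Rocq mechanization), but its surrounding remarks --- the parenthetical that $\Left{\chk(B,vs)} \kateq \Left{B}$ though not identical, and the emphasis on $vs$ being unchanged across recursive calls --- presuppose exactly your decomposition, namely the left half via that kat-equivalence sublemma plus the congruence clause of Definition~\ref{def:bieq} (after observing $\chk(\biLeft{B},vs)=\biLeft{B}$ outright), and the right half as an on-the-nose equality by structural induction. Your handling of snapshot-variable freshness is the one genuinely delicate point and you resolve it correctly: the auxiliary fact $\modVarsR(\biRight{C})=\modVarsR(C)$ combined with the induction hypothesis makes the two avoid lists $vs\catenate\modVarsR(\chk(B_1,vs))$ and $vs\catenate\modVarsR(\chk(\biRight{B_1},vs))$ coincide, so a determinized (or extensional Hilbert-style) choice picks the same $x_1,x_2$ on both sides, and you rightly note that this on-the-nose equality also depends on the $x_2$-snapshot condition in Figure~\ref{fig:chk} mentioning only $e'$ and $\P'$, which $\biRight{\mystrut\missingArg}$ preserves.
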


In general $\chk(B,vs)$ acts on variables that are not in $vs$.  In particular,
$\biFrame(B,vs)$ does not imply $\biFrame(\chk(B,vs),vs)$.  However, it does
imply semantic framing of $\chk(B,vs)$, which will be used in conjunction with
Lemma~\ref{lem:frame_sem_rwlp}.

\begin{lemma}
\label{lem:chk_frame_sem} \begin{list}{}{}
\item[(i)] $\comFrame(c,vs)$ implies $\semframe{\uChk(c,vs)}{vs}$ 
\item[(ii)] $\biFrame(B,vs)$ implies $\semframe{\biRight{\chk(B,vs)}}{vs}$
\item[(iii)] $\biFrame(B,vs)$ implies $\semframe{\chk(B,vs)}{vs}$
\end{list}
\end{lemma}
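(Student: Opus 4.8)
The plan is to prove the three parts together, with (i) by structural induction on $c$ and (ii)--(iii) by strong induction on $\size(B)$; the interleaving is forced by the definitions in Figures~\ref{fig:uChk} and~\ref{fig:chk}, since the embed case of (iii) invokes (i), and the bi-while case of (iii) invokes (ii) on the loop body. In every part the non-loop cases are routine and compositional. For (i), the cases $\skipc$, $x:=e$, $\havc{x}$, $\assertc{p}$ are immediate from Definition~\ref{def:semframe} together with the expression- and assertion-framing facts packaged in $\comFrame(c,vs)$; sequence and conditional follow by composing the two induction hypotheses, using $\semframe{e}{vs}$ for the test. The analogous cases of (iii) use the induction hypotheses on the strictly smaller sub-bicoms, the $\semframe{(e,e')}{vs}$ clauses from $\biFrame$ for bi-if tests, and part (i) applied to the right command inside $\biEmb{c}{\uChk(c',vs)}$.

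The one genuinely hard case in each of (i) and (iii) is the loop, and it is the main obstacle precisely because $\uChk$ and $\chk$ insert snapshot variables and a decrease-assertion that mention variables \emph{outside} $vs$. One therefore cannot establish $\semframe{\cdot}{vs}$ of the instrumented body and simply compose it through the loop: the inserted $\assertc(0\le e_1<x)$ (resp.\ $\assertc(\rightex{x_2}\imp 0\le E<\rightex{x_1})$) refers to $x\notin vs$ (resp.\ $x_1,x_2\notin vs$) and is not $vs$-framed in isolation. The resolution is to reason directly about the loop semantics (the while rules, and for bi-while the rules of Figure~\ref{fig:bicomSem}), proving a \emph{strengthened} invariant by induction on the number of iterations: two runs started from $vs$-agreeing states (resp.\ state pairs) remain $vs$-agreeing at every loop head \emph{and} the freshly inserted snapshot variables hold identical values in both runs whenever the inserted assertion is reached. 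This suffices because the assertion's truth value depends only on $vs$-framed data (the variant $e_1$, resp.\ $E$ and the right-only condition $\rightex{e'}\land\P'$, all framed by $\biFrame$) together with the snapshots; once the snapshots agree and the states agree on $vs$, the assertion fails in both runs or in neither, so it is invisible to $\semframe{\cdot}{vs}$.

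The two ingredients that make the strengthened invariant go through are exactly the side conditions built into the definitions. First, the snapshot is recomputed at the start of every iteration from a $vs$-framed quantity: the assignment $x:=e_1$ in $\uChk$, and the two filtered havocs $\havRt{x_1}{(\rightex{x_1}=E)};\havRt{x_2}{(\rightex{x_2}=(\rightex{e'}\land\P'))}$ in $\chk$, which never block, since each filter is satisfiable, and so act as deterministic $vs$-determined assignments to fresh right variables. Hence $vs$-agreeing runs assign the snapshots the same value. Second, the freshness conditions $x\notin vs\catenate\modVars(\uChk(c_1,vs))$ and $x_1,x_2\notin vs\catenate\modVarsR(\chk(B_1,vs))$ guarantee that the instrumented body does not overwrite the snapshot, so (using the standard fact that a command leaves variables outside its modified set unchanged) the identical value survives to the assertion. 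For bi-while one additionally splits on the three iteration kinds of Figure~\ref{fig:bicomSem}: writing $B_2$ for the instrumented body of Figure~\ref{fig:chk}, left-only iterations use $\biLeft{B_2}$, whose left projection is $\kateq$-equal to $\Left{B_1}$ and which touches neither $x_1$ nor $x_2$, so framing reduces to that of the original left body; joint iterations use $B_2$ directly; and right-only iterations use $\biRight{B_2}$, which contains $\biRight{\chk(B_1,vs)}=\chk(\biRight{B_1},vs)$ by Lemma~\ref{lem:chk_commute}, so that part (ii) for the strictly smaller $B_1$ is exactly what is consumed.

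Finally, part (ii) is obtained economically from part (iii): by Lemma~\ref{lem:chk_commute} we have $\biRight{\chk(B,vs)}=\chk(\biRight{B},vs)$, so it suffices to apply (iii) to $\biRight{B}$, using the auxiliary fact (a short induction on $B$, mirroring Lemma~\ref{lem:frame_of_com_sem}) that bi-right projection preserves $\biFrame$, i.e.\ $\biFrame(B,vs)$ implies $\biFrame(\biRight{B},vs)$. Because $\size(\biRight{B})\le\size(B)$ and the bi-while case of (iii) appeals to (ii) only on the strictly smaller loop body, the combined induction is well founded: at each size we first establish (iii) for all bicoms of that size, then read off (ii). The semantic equivalences used when reducing left projections to the original body are justified by Lemma~\ref{lem:bieq_bsem_equiv}, and framing of $\chk(B,vs)$ (rather than $\biFrame$) is the right target because it feeds directly into Lemma~\ref{lem:frame_sem_rwlp}. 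The hard part throughout is the bookkeeping of the snapshot variables across iterations; everything else is the routine propagation of $\agree{vs}$ through the semantics.
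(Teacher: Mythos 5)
Your proposal is correct and follows essentially the same route as the paper's (sketched) proof: induction on structure/size with inner inductions over loop executions, where the crux is that the snapshot variables lie outside $vs$ but are set from $vs$-framed data (the variant and the right-only condition) and are preserved across the body by the freshness side conditions, so the inserted assertions fail in both matched runs or in neither; your use of Lemma~\ref{lem:chk_commute} for right-only iterations and for deriving part (ii) from part (iii) also matches the paper's dependency structure. One correction: your claim that one \emph{cannot} establish $\semframe{\cdot}{vs}$ of the instrumented loop body and compose it through the loop is overstated---only the inserted assertion \emph{in isolation} fails to be $vs$-framed, and the paper's stated key point is precisely that the whole instrumented body ($B_2$ in Figure~\ref{fig:chk}) \emph{is} semantically framed by $vs$ (this is what the paper says motivates Definition~\ref{def:semframe}, which constrains only the effect on $vs$ and the failure behavior); your strengthened iteration invariant is exactly a proof of that fact inlined into the loop induction, so the content, though packaged differently, is the same.
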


The proofs go by induction on structure of the command/bicom, with inner
inductions for loop execution.  Detailed semantic analyses are needed,
especially for loop bodies.
For both $\uChk$ and $\chk$, the behavior is
altered because the added assertions can fail. However, whether failure happens
is influenced only by the assumed relation (for $\havRtKeyword$) or the variant
(for loop), and those are framed by $vs$.  That influence goes via snapshot
variables that are outside $vs$ (and outside the snapshot variables of inner
loop bodies, which must be shown to preserve their values) so the proof requires
more than simple application of induction hypotheses.
A key point is that for a bi-while $B$, the instrumented body
($B_2$ in Figure~\ref{fig:chk}) is semantically framed by $vs$.
(Indeed, this motivates the definition of $\semframe{\missingArg}{vs}$.)

\subsection{Main result}\label{sec:main}

The main result says that if the bicom $\chk(B,vs)$ satisfies a spec $\bspec{\R}{\S}$,
which is an $\forall\forall$ property, then the projections $\Left{B}$ and $\Right{B}$
satisfy the $\forall\exists$ spec $\aespec{\R}{\S}$.
An analogous result holds for $\uChk$.

\begin{restatable}{lemma}{lemuChkterm}\label{lem:uChk_term}
Suppose $\comFrame(c,vs)$ and  
$\semframe{\R}{vs}$ and $\semframe{\S}{vs}$.
If $\,\models \biEmb{\skipc}{\uChk(c,vs)}: \bspec{\R}{\S}$
then $\models \skipc\sep c: \aespec{\R}{\S}$.
\end{restatable}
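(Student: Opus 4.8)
The plan is to strengthen the statement and prove, by induction on the structure of $c$, that for every $vs$, every command $c$ with $\comFrame(c,vs)$, and every relation $\S$ with $\semframe{\S}{vs}$, there is an ERHL derivation
\[ \proves \skipc \sep c : \aespec{\wlpR(\uChk(c,vs),\S)}{\S}. \]
Call this $(\star)$. The lemma then follows quickly: by Lemma~\ref{lem:rwlp_prop}(i) and the definition $\wlpR(\uChk(c,vs),\S) = \wlp(\biEmb{\skipc}{\uChk(c,vs)},\S)$, the hypothesis $\models \biEmb{\skipc}{\uChk(c,vs)}:\bspec{\R}{\S}$ is exactly $\models \R\imp\wlpR(\uChk(c,vs),\S)$; feeding this (with the trivial $\S\imp\S$) and $(\star)$ into \rn{eConseq} gives $\proves\skipc\sep c:\aespec{\R}{\S}$, whence validity by soundness (Theorem~\ref{thm:soundness}). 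Clause (i) of the $\aespecSym$ judgment, non-failure on the left, is immediate since $\skipc$ never fails.

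For $(\star)$ the atomic cases are read off the $\wlpR$ equations of Lemma~\ref{lem:wlpR_equations}: $\wlpR(\skipc,\S)=\S$, $\wlpR(x:=e,\S)=\subst{\S}{|x}{|e}$, and $\wlpR(\assertc{p},\S)=\rightF{p}\land\S$ are the exact preconditions supplied by \rn{eSkip}, \rn{eSkipAsgn} and \rn{eSkipAsrt}, while $\wlpR(\havc{x},\S)=\all{|x}{\S}$ is stronger than the precondition $\some{|x}{\S}$ of \rn{eSkipHav}, so that case closes by \rn{eConseq} using $\all{|x}{\S}\imp\some{|x}{\S}$. The sequence and conditional cases use \rn{eSeq} and \rn{eSkipIf} with the induction hypotheses on subterms; to apply a hypothesis one needs the intermediate post-relation to be framed by $vs$, which holds by Lemma~\ref{lem:frame_sem_rwlp}(i) together with $\semframe{\uChk(c_i,vs)}{vs}$ from Lemma~\ref{lem:chk_frame_sem}(i) (and $\comFrame(c,vs)$ propagates to each subcommand).

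The loop case $c=\whilev{e}{e_1}{c_1}$ is the heart of the proof and uses \rn{eSkipDo} with invariant $\Q\eqdef\wlpR(\uChk(c,vs),\S)$ and variant $\rightex{e_1}$. By the loop equation of Lemma~\ref{lem:wlpR_equations}, $\Q$ is a fixed point of the associated operator $F$, so $\neg\rightF{e}\land\Q\imp\S$ discharges the postcondition (via \rn{eConseq}, turning the conclusion $\Q\land\neg\rightF{e}$ of \rn{eSkipDo} into $\S$), and it remains to establish the family of premises: for every $n\in\Z$,
\[ \proves \skipc\sep c_1 : \aespec{\rightF{e}\land\Q\land(\rightex{e_1}=n)}{\Q\land(0\leq\rightex{e_1}<n)}. \]
Applying the induction hypothesis for $c_1$ with the (framed) post-relation $\Psi\eqdef\Q\land(0\leq\rightex{e_1}<n)$ and then \rn{eConseq}, this reduces to the semantic implication $\rightF{e}\land\Q\land(\rightex{e_1}=n)\imp\wlpR(\uChk(c_1,vs),\Psi)$.

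The main obstacle is discharging that implication, which amounts to reconciling the \emph{program} snapshot variable $x$ inserted by $\uChk$ with the \emph{metalanguage} snapshot $n$ built into \rn{eSkipDo}. Unfolding $\Q=F(\Q)$ under $\rightF{e}$ exposes the instrumented body $x:=e_1;\uChk(c_1,vs);\assertc(0\leq e_1<x)$ of $\uChk(c,vs)$ and yields $\Q\imp\subst{\bigl(\wlpR(\uChk(c_1,vs),\;\rightF{0\leq e_1<x}\land\Q)\bigr)}{|x}{|e_1}$. The freshness condition $x\notin vs\catenate\modVars(\uChk(c_1,vs))$ is what makes this usable: $x$ is outside $vs$ and is neither read nor written by $\uChk(c_1,vs)$, so under the premise $\rightex{e_1}=n$ the substitution of $e_1$ for $x$ coincides with substitution of the constant $n$ (Lemma~\ref{lem:subst_equal_ante}, Lemma~\ref{lem:sr_valid_metaR}), the resulting $\subst{(\cdot)}{|x}{|n}$ commutes past $\wlpR(\uChk(c_1,vs),\cdot)$ (a semantically-framed instance of Lemma~\ref{lem:rwlp_subst_under}, justified because $x$ is irrelevant to the body), and it drops on the $vs$-framed $\Q$ (Lemma~\ref{lem:substR_outside_frame}). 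This rewrites the residual comparison $\rightF{0\leq e_1<x}$ into $0\leq\rightex{e_1}<n$, exactly the extra conjunct of $\Psi$, closing the implication and hence $(\star)$. This freshness/framing bookkeeping, precisely the complication that \rn{eSkipDo}'s metalanguage variant was designed to localize, is the delicate part; everything else is routine induction.
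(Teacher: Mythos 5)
Your proposal is correct and is essentially the paper's own proof: structural induction on $c$ using the wlp of the instrumented program as the (weakest) precondition, the framing Lemmas~\ref{lem:chk_frame_sem} and~\ref{lem:frame_sem_rwlp} to justify applying the induction hypothesis to intermediate post-relations, and, in the crucial loop case, rule \rn{eSkipDo} with invariant $\wlpR(\uChk(c,vs),\S)$, its (post)fixpoint property, and the same chain of substitution/freshness lemmas (Lemmas~\ref{lem:subst_equal_ante}, \ref{lem:sr_valid_metaR}, \ref{lem:rwlp_subst_under}, \ref{lem:substR_outside_frame}) to trade the program snapshot variable $x$ for the metavariable $n$ of the rule. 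The only differences are cosmetic: you phrase the induction as the derivability claim $(\star)$ with the wlp as precondition, whereas the paper carries a general $\R$ together with the hypothesis $\models \R\imp\wlpR(\uChk(c,vs),\S)$ (equivalent via Lemma~\ref{lem:rwlp_prop}), and you discharge the skip/assignment/havoc base cases with the one-sided ERHL axioms plus \rn{eConseq}, where the paper instead invokes adequacy of embed (Lemma~\ref{lem:biprogram_embed_correctness}) and Lemma~\ref{lem:allall_allexists_term}.
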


\begin{restatable}{theorem}{thmmain}\label{thm:main} 
Suppose $B,vs,\R$, and $\S$ satisfy the following.
(i) $B$ is well-formed.
(ii) $\biFrame(B,vs)$.
(iii) $\semframe{\R}{vs}$ and $\semframe{\S}{vs}$.
(iv) $\models \chk(B,vs) : \bspec{\R}{\S}$.
Then $\models \Left{B} \sep \Right{B} : \aespec{\R}{\S}$.
\end{restatable}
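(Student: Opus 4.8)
The plan is to strengthen the statement to a \emph{wlp-form} and prove it by well-founded induction on $\size(B)$, combining valid $\aespecSym$ judgments through the (sound, by Theorem~\ref{thm:soundness}) rules of ERHL, driven by the bicom wlp equations of Lemma~\ref{lem:rwlp_equations}. Concretely I would prove: for every well-formed $B$, every $vs$ with $\biFrame(B,vs)$, and every $\S$ with $\semframe{\S}{vs}$, one has $\models \Left{B}\sep\Right{B} : \aespec{\wlp(\chk(B,vs),\S)}{\S}$. The theorem follows at once: by Lemma~\ref{lem:rwlp_prop}(i) hypothesis (iv) is equivalent to $\models \R\imp\wlp(\chk(B,vs),\S)$, so \rn{eConseq} turns the wlp-form into $\aespec{\R}{\S}$. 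I induct on $\size$ rather than structure because the bi-while case must recurse through the loop body's projections, which are not subterms; the size-of-projection lemma guarantees $\size(\biRight{B_1})\le\size(B_1)<\size(B)$.

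The easy cases are read off from the wlp equations. For $B=\assertc{\P}$ and $B=\havRt{x}{\P}$ (projecting to $\skipc\sep\skipc$ and $\skipc\sep\havc{x}$) the computed preconditions $\P\land\S$ and $(\some{\smSep x}{\P})\land(\all{\smSep x}{(\P\imp\S)})$ entail $\S$ and $\some{\smSep x}{\S}$ respectively, so \rn{eSkip}/\rn{eSkipHav} with \rn{eConseq} close them. For $B=B_1;B_2$ I apply the induction hypothesis to $B_2$ with post $\S$ and to $B_1$ with post $\wlp(\chk(B_2,vs),\S)$ (which is framed by $vs$ via Lemma~\ref{lem:chk_frame_sem}(iii) and Lemma~\ref{lem:frame_sem_rwlp}(ii)), composing with \rn{eSeq}. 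The base case $B=\biEmb{c}{c'}$ is where the unary check enters: writing $\Q=\wlpR(\uChk(c',vs),\S)$, Lemma~\ref{lem:rwlp_prop}(ii) and Lemma~\ref{lem:uChk_term} give $\models\skipc\sep c':\aespec{\Q}{\S}$, while Lemma~\ref{lem:rwlp_prop}(ii), Lemma~\ref{lem:biprogram_embed_correctness}, and Lemma~\ref{lem:allall_allexists_term}(i) give $\models c\sep\skipc:\aespec{\wlpL(c,\Q)}{\Q}$; \rn{eSeq} then \rn{eRewrite} (using $c;\skipc\kateq c$, $\skipc;c'\kateq c'$) yield $\aespec{\wlpL(c,\Q)}{\S}$, and $\wlpL(c,\Q)=\wlp(\chk(B,vs),\S)$. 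For $B=\ifFour{e|e'}{B_1}{B_2}{B_3}{B_4}$ I instantiate \rn{eIf4} with the canonical branch commands $\Left{B_1},\Left{B_3},\Right{B_1},\Right{B_2}$; the four premises come from the induction hypotheses on $B_1,\dots,B_4$, each reshaped by \rn{eRewrite} using the well-formedness equations~(\ref{eq:ifFourWF}) and by \rn{eConseq} against the four implicational conjuncts of the bi-if wlp.

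The loop case is the main obstacle. For $B=\biwhilev{e|e'}{\Lrel|\R'}{E}{B_1}$, set $I=\wlp(\chk(B,vs),\S)=\gfp(G(\dots))$, so $I$ equals the conjunction of (G1)--(G5) with $\X:=I$; note $\semframe{I}{vs}$ by Lemma~\ref{lem:chk_frame_sem}(iii) and Lemma~\ref{lem:frame_sem_rwlp}(ii). I apply \rn{eDo} with the alignment conditions $\P:=\Lrel$ and $\P':=\R'\land\neg(\leftF{e}\land\Lrel)$. This exact choice aligns the rule's premises with the semantics' precedence of left-only over right-only iterations, so that (G2) supplies the left-only premise, (G3) the right-only premise, (G4) the joint premise (its guard $\neg\Lrel\land\neg\P'$ simplifying to $\neg\Lrel\land\neg\R'$), (G1) the exit (giving post $\S$ by \rn{eConseq}), and (G5) the side condition. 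The left-only and joint premises invoke the induction hypothesis on $B_1$ (noting $\biLeft{B_2}$ is semantically $\biEmb{\Left{B_1}}{\skipc}$), and the right-only family invokes it on $\biRight{B_1}$, whose projections are $\skipc$ and $\Right{B_1}$ up to $\kateq$ and for which $\biRight{\chk(B_1,vs)}=\chk(\biRight{B_1},vs)$ by Lemma~\ref{lem:chk_commute}.

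The genuinely delicate step is relating $\wlp(\biRight{B_2},I)$ — where $B_2$ is the instrumented body of Figure~\ref{fig:chk} — to $\wlp(\chk(\biRight{B_1},vs),I\land(0\le E<n))$. Unfolding the two $\havRtKeyword$ prefixes forces the snapshot $x_1$ to the current value of $E$ (hence to $n$, under $n=E$) and $x_2$ to the truth of the right-only guard, after which the trailing $\assertc{(\rightex{x_2}\imp 0\le E<\rightex{x_1})}$ becomes exactly the required variant decrease; in the joint case $x_2$ is forced false and the assertion is vacuous. Making this precise consumes all the framing machinery: $x_1,x_2\notin vs$ while $E$, $I$, $\S$, and the effect of $\chk(\biRight{B_1},vs)$ on $vs$ are framed by $vs$ (Lemma~\ref{lem:chk_frame_sem}, Lemma~\ref{lem:frame_sem_rwlp}), and $x_1,x_2\notin\modVarsR(\chk(B_1,vs))$ so they survive the body, whence substituting the forced snapshot values commutes through the wlp by Lemma~\ref{lem:rwlp_subst_under}. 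This tracking of fresh snapshot variables through the loop body, and its exact match with the metalevel variant index $n$ of \rn{eDo}, is the heart of the proof; I expect the bookkeeping for (G3) and the variant to be by far the hardest part.
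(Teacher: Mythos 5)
Your proposal follows essentially the same route as the paper's proof: induction on $\size(B)$, converting hypothesis (iv) into wlp form via Lemma~\ref{lem:rwlp_prop}(i), discharging each case by the wlp equations of Lemma~\ref{lem:rwlp_equations} plus the ERHL rules, handling embed via Lemma~\ref{lem:biprogram_embed_correctness}, Lemma~\ref{lem:allall_allexists_term} and Lemma~\ref{lem:uChk_term}, and in the loop case instantiating \rn{eDo} with right alignment condition $\P'\land\neg(\leftF{e}\land\P)$ (the paper packages exactly this instantiation as its derived rule \rn{eDoX}) before working through the same snapshot/framing bookkeeping (Lemmas~\ref{lem:chk_frame_sem}, \ref{lem:frame_sem_rwlp}, \ref{lem:rwlp_subst_under}, \ref{lem:chk_commute}) for the right-only premise. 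One sketch-level imprecision: the left-only premise cannot come from the induction hypothesis applied to $B_1$ itself (that yields $\Right{B_1}$, not $\skipc$, on the right); it must be applied to $\biLeft{B_1}$, justified by the size-of-projection lemma and Lemma~\ref{lem:chk_commute}, which is what the paper does.
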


As a corollary, to prove $c\sep c':\aespec{\R}{\S}$ it suffices to find
well-formed $B$ such that $\Left{B}\kateq c$ and $\Right{B}\kateq c'$ and to 
prove $\models \chk(B,vs) : \bspec{\R}{\S}$ for suitable $vs$.
Finding a frame $vs$ that satisfies (ii) and (iii) is, in practice, a straightforward syntactic matter as mentioned in Section~\ref{sec:WhyRel}. 

Full detailed proofs of both Lemma~\ref{lem:uChk_term} and Theorem~\ref{thm:main} are provided in the appendix.
The proof of the lemma is similar to the proof of the theorem.
The theorem is proved by induction on $\size(B)$ and thus cases on the bicom forms.
The list $vs$ is fixed but $\R,\S$ are general in the induction hypothesis,
which requires them to be framed by $vs$.
In each case, key consequences of the assumption $\models \chk(B,vs) : \bspec{\R}{\S}$ are derived using $\wlp$ equations, leading to application of proof rules (Figure~\ref{fig:ERHL}) to establish 
$\models \Left{B} \sep \Right{B} : \aespec{\R}{\S}$.  

The base case where $B$ is $\havRt{x}{\Q}$ sets the main pattern 
used throughout the proof.\footnote{Note: every line of the calculation should 
   begin with $\models$, as we are reasoning about valid judgments and valid 
   implications, so for brevity we elide $\models$ throughout.  We also elide 
   $vs$ as an argument to $\chk$ (and $\uChk$) as it is unchanged in
    recursive calls to those functions.}
The correctness judgment is put in $\wlp$ form which is then used to establish
the premises of a proof rule (or rules) for the projections of $B$, in this
case \rn{eSkipHav} in Figure~\ref{fig:ERHL}.
\[\begin{array}{lll}
    & \chk(\havRt{x}{\Q}) : \bspec{\R}{\S} \\
\iff & \hint{wlp/correctness Lemma~\ref{lem:rwlp_prop}(i)} \\
    & \R\imp \wlp(\chk(\havRt{x}{\Q}),\S) \\
\iff & \hint{definition of $\chk$ (Figure~\ref{fig:chk})} \\
    & \R\imp\wlp(\assertc{\some{\smSep x}{\Q}}; \havRt{x}{\Q},\S) \\
\iff & \hint{wlp equations for seq, assert, \keyw{havf} (Lemma~\ref{lem:rwlp_equations})} \\
    & \R\imp \some{\smSep x}{\Q} \land \all{\smSep x}{(\Q\imp\S)} \\
\imp & \hint{predicate calculus} \\
    & \R\imp \some{\smSep x}{\S} \\ 
\imp & \hint{rule \rn{eSkipHav} and its soundness (Theorem~\ref{thm:soundness})} \\
    & \skipc\sep\havc{x} : \aespec{\R}{\S} \\
\iff & \hint{def $\Left{\mystrut\missingArg}$ and $\Right{\mystrut\missingArg}$} \\
    & \Left{\havRt{x}{\Q}} \sep \Right{\havRt{x}{\Q}} : \aespec{\R}{\S} 
\end{array}\]
The case where $B$ is  $B_1;B_2$ shows the role of framing in using the induction hypothesis.
\[\begin{array}{lll}
     & \chk(B_1;B_2) : \bspec{\R}{\S} \\
\iff & \hint{def $\chk$, wlp/correctness Lemma~\ref{lem:rwlp_prop}(i)} \\
     & \R \imp \wlp(\chk(B_1);\chk(B_2), \S) \\
\iff & \hint{wlp equation for sequence (Lemma~\ref{lem:rwlp_equations})} \\
     & \R \imp \wlp(\chk(B_1),\wlp(\chk(B_2), \S)) \\ 
\iff & \hint{Abbreviate $\Q := \wlp(\chk(B_2,vs),\S)$} \\ 
    & \R \imp \wlp(\chk(B_1),\Q) \\
\iff & \hint{wlp/correctness Lemma~\ref{lem:rwlp_prop}(i) and 
fact that $\models \chk(B_2):\bspec{\Q}{\S}$ by Lemma~\ref{lem:rwlp_prop}(ii) } \\
    & \chk(B_1):\bspec{\R}{\Q} \quad\mbox{and}\quad \chk(B_2):\bspec{\Q}{\S} \\
\imp & \hint{ind.\ hyp.\ twice, using $\semframe{\Q}{vs}$
             from $\biFrame(B_2,vs)$ and $\semframe{\S}{vs}$
             by Lemmas~\ref{lem:chk_frame_sem} and~\ref{lem:frame_sem_rwlp} } \\
    & \Left{B_1}\sep\Right{B_1}:\aespec{\R}{\Q} \quad\mbox{and}\quad 
      \Left{B_2}\sep\Right{B_2}:\aespec{\Q}{\S} \\
\imp & \hint{sequence rule \rn{eSeq}} \\
    & \Left{B_1};\Left{B_2}\sep\Right{B_1};\Right{B_2}:\aespec{\R}{\Q}  \\
\iff & \hint{definitions of  $\Left{\mystrut\missingArg}$ and $\Right{\mystrut\missingArg}$} \\
    & \Left{B_1;B_2}\sep\Right{B_1;B_2}:\aespec{\R}{\Q} 
\end{array}\]
For the loop case we use as invariant $\I \eqdef \wlp(\chk(B),\S)$.
We get $\semframe{\I}{vs}$ from the assumptions 
$\biFrame(B,vs)$ and  $\semframe{\S}{vs}$,
using Lemmas~\ref{lem:chk_frame_sem} and~\ref{lem:frame_sem_rwlp}.
By definitions, $\I$ satisfies
$\models\I\imp G(e,e',\Lrel,\R,B,\Q)(\I)$ for $G$ in Lemma~\ref{lem:rwlp_equations},
whence $\I$ implies each of the conditions (Gi) there. Those implications are used to establish the premises of rule \rn{eDo}, using also \rn{eConseq}.

\section{Prototype}\label{sec:WhyRel}

The prototype is being used to investigate the effectiveness of the filter
adequacy transformation.  It is a modified version of an existing
tool~\cite{NagasamudramBN25} that supports a bicom-like syntax which it
translates to unary code and annotations in a subset of WhyML, the source
language of the Why3 verifier.\footnote{\url{www.why3.org}} Why3 in turn
generates verification conditions and dispatches them to SMT solvers.  The
existing tool interprets pre-post specifications as $\forall\forall$ properties.
It implements projections which are used to check that a user-provided alignment
product corresponds to the associated user-provided unary program(s).

What we trust about the prototype is that, for programs acting on integer variables, it correctly verifies judgments of the
form $\models B:\bspec{\P}{\Q}$.  One reason this is of interest is that---if
$B$ has no $\havRtKeyword$---then it implies
$\models \Left{B}\sep\Right{B}: \rspec{\P}{\Q}$.  This fact is not needed,
however, for our development.

Our prototype extends the existing tool in two ways.  First, it adds the
$\havRtKeyword$ construct (including its translation to right havoc followed by
assumption) together with variant declarations on loops. Second, it applies the
filter adequacy transformation on procedure body syntax trees, after
desugaring and typechecking, just before translation to Why3. 
The prototype only supports a subset of the features supported by the existing tool;
it emits warnings about features that are unsupported or unsound in $\forall\exists$
mode such as right side procedure calls.
The computation of projections has been extended to \havRtKeyword.  Checking conditions 
$\Left{B}\kateq c$ and $\Right{B}\kateq c'$ was done manually for our examples.

Although the transformation is very close to the $\chk$ and $\uChk$ functions in
Figures~\ref{fig:uChk} and~\ref{fig:chk}, the implementation, which is in OCaml,
does not pass around the ``avoid list'' $vs$.  Instead, it generates fresh names
for the snapshot variables using a global counter.  
Our prototype has been used to verify examples including those in
Section~\ref{sec:overview}.  Verification goes through automatically using
simple invariant annotations as indicated in Section~\ref{sec:overview}.  User
interaction is limited to selecting which solver to apply as usual in Why3.
Existential quantifiers can be challenging for SMT solvers, but the
the existentially quantified asserts for our $\havRtKeyword$ are equalities
that solve automatically.

\section{Related work}\label{sec:related}

Relational verification is an active area of research encompassing secure
compilation~\cite{AbateBCDGHPTT21}, probabilistic reasoning for security and
privacy~\cite{AvanziniBDG25,GregersenAHTB24}, 
regression verification~\cite{StrichmanV2016}, 
functional specification of tensor
programs~\cite{GladshteinZAAS24}, just to name a few directions. Here we focus
on works close to our goals and contributions, more or less following the order of our list of contributions in Section~\ref{sec:intro}.

Quite many relational Hoare logics have been proposed for $\forall\forall$ and
$k$-safety~\cite{MaillardHRM20,DosualdoFD2022} but few for $\forall\exists$.
One important line of work has developed relational separation logics, based on
Iris~\cite{JungKJBBD18}, for refinement of concurrent
programs~\cite{TuronDB13,FruminKB18,GaherSSJDKKD22}.  Iris is implemented in the
Rocq interactive proof assistant.  These logics are quite complicated; while expressive
and powerful, they are very different from systems based on first order assertions and
amenable to SMT-based automation in auto-active tools like Dafny and Viper.

Our focus is on sequential progams and alignment-oriented logics\footnote{As opposed to taking a global view of traces~\cite{BartheEGGKM19}.} 
for which one of the first $\forall\exists$
logics is RHLE~\cite{DickersonYZD22}.  It addresses nondeterminacy both in the
form of havoc and in the form of underspecified procedure calls.  The logic uses
three judgments: in addition to the $\forall\exists$ judgment (like ours but
without failure), there is the standard partial correctness judgment (called
universal) and the less common forward under-approximation judgment (called
existential) that can be written $\skipc\sep c:\aespec{\rightF{p}}{\rightF{q}}$
in our notation.  The existential judgment has been called possible
correctness~\cite{Hoare:wp}, sufficient incorrectness~\cite{Ascari24}, etc.
RHLE is designed to cater for automated proof search.  The primary relational rules
provide for unary reasoning on one side or the other in forward symbolic
execution style, relying on the unary logics and unary over- and
under-approximate procedure specifications.  Alignment of loops is achieved
using mostly-lockstep rules adapted from~\citet{SousaD2016}.  Choice variables,
a kind of logical variable, are used to facilitate reasoning about existential
witnesses.

Building on RHLE, Beutner develops FEHL (forall-exist Hoare
logic)~\cite{Beutner24}.  It lacks procedures but is more general than RHLE in that it
handles $\forall^k\exists^l$ judgments.  Like RHLE, FEHL decomposes reasoning
about $\forall\exists$ properties in terms of reasoning about single programs in
isolation, using ordinary Hoare logic and a complete underapproximate unary
logic.  Like RHLE, the core rules cater for proof search and include the rules
for reasoning forward on one side or the other via the unary logics.  FEHL
features a novel rule for reasoning about loops in non-lockstep alignment: it
aligns $n$ iterations of one loop with $m$ iterations of another loop, for fixed
$n$ and $m$. There are naturally occuring
examples of this~\cite{ChurchillP0A19}.  But it does not support more general alignments
of loops as needed for data dependent alignments like example $c1$ in
Section~\ref{sec:overview} and those considered in~\cite{ShemerGSV19,BNNN19} for
$\forall\exists$ and in~\cite{UnnoTerauchiKoskinen21}.  The approach to
witnessing existentials is to reason about symbolic values and postpone witness
choices.

Our approach is applicable to $\forall^k\exists^l$ properties but many practical examples do not need the extra generality.  Restriction to $\forall\exists$ facilitates streamlined notations in theory and in prototypes.

In an unpublished preprint, \citet{Wu25} introduce a $\forall\exists$ relational
Hoare logic in which the relational judgment is asymmetric with respect to the
two programs to be related.  Following the approach
of~\cite{TuronDB13,FruminKB18,GaherSSJDKKD22}, the existential program (called
the abstract program, with reference to refinement) appears in the pre- and
post-relations with a special predicate related to wlp.  Like RHLE and FEHL, it
focuses on rules rules that ``take a step'' on the universally or existentially
quantified side based on unary logic.  The treatment of loop alignment is
limited. The main focus of the work is an encoding into ordinary Hoare logic, to
which we return later.

Another unpublished preprint~\cite{BNN23} introduces a logic ERHL+ which uses
only the single $\forall\exists$ judgment.  It features a general data-dependent
loop alignment rule attributed to~\citet{Beringer11} and a rewrite rule
attributed to~\cite{BNN16}.  Similar rules can also be found
in \cite{BartheGHS17}.  The rewrite rule is based on full KAT equivalence and so
can be used to derive the $n,m$-fixed-iteration rule of \citet{Beutner24}
without use of alignment conditions.  As noted earlier, our logic is directly
adapted from ERHL+, because we find its rules to be simple and orthogonal like
those of standard Hoare logic.  The focus of the paper is on the logic's
completeness with respect to alignments that can be described by alignment
automata.  To this end the authors introduce a form of annotated product
automaton called filtered, from which we borrow the term.

Moving on to work on alignment products, i.e., precursors to bicoms, early work
is found in~\cite{BartheCK16} for $\forall\forall$, see also~\cite{BartheDR11}.
Works that represent products as automata (i.e., transition systems)
include~\citet{ChurchillP0A19} and~\citet{ShemerGSV19}.  The latter uses
constrained Horn clauses (CHC)~\cite{Gurfinkel22} to encode the transition
system and alignment condition adequacy, using CHC solving to simultaneously
infer inductive relational invariants and alignment conditions for
$\forall\forall$ properties.
\citet{UnnoTerauchiKoskinen21} introduce a solver for a special class of 
constraints in order to verify $\forall\exists$ properties by also inferring
well founded relations for termination.  By contrast with the filtering
approach, \citet{UnnoTerauchiKoskinen21} also solve for Skolem functions that
witness existentials.
\citet{ItzhakySV24} show how CHC solvers can be used for $\forall\exists$, building on the
representation of witnesses as strategies in a game~\cite{BeutnerF22CAV}.  

Syntactic representation of product programs is convenient as a way to reduce
relational verification to unary verification for which a wide range of tools
are available~\cite{BartheDR11}.  \citet{AntonopoulosEtal2022popl} give an
algebraic formulation (called BiKAT, in reference to KAT~\cite{Kozen97}) that is
shown to subsume $\forall\forall$ relational logic.  It is also used to express
$\forall\exists$ by combining a $\forall\forall$ condition with inequations that
express adequacy. Our embed notation $\biEmb{c}{c'}$ is inspired by that
work. The idea is to use equational reasoning to manipulate $\biEmb{c}{c'}$ into
a better aligned form by inserting assumptions (i.e., filters). But the adequacy
check is via equations that do not correspond to a standard property for which
tools exist,\footnote{Aside from decision procedures for KAT, which generally do
not support semantic interpretation of commands or expressive
assertions~\cite{Kozen1996,GreenbergBC22}.}  by contrast with our transformation
that reduces adequacy to $\forall\forall$.  One subsequent work adapts BiKAT to
probabilistic relational logic~\cite{GomesBG25}.  The KestRel
tool~\cite{DickersonMD25} uses an algebra similar to BiKAT with e-graphs in
data-driven automatic search for good alignments for $\forall\forall$
verification.  This is complementary to the problem addressed in this paper.
Support for manipulating bicoms would be important in an auto-active tool for
programs beyond the reach of full automated verification.

Readers familiar with the product notations of~\citet{AntonopoulosEtal2022popl}
or~\citet{DickersonMD25} might expect that our Definition~\ref{def:bieq} of
$\bieq$ should include additional equations including laws of the form
\( \biEmb{c}{c'};\biEmb{d}{d'} \bieq^? \biEmb{c;d}{c';d'} \)
or $\biEmb{c}{\skipc};\biEmb{\skipc}{c'} \bieq^? \biEmb{\skipc}{c'};\biEmb{c}{\skipc}$.
However the possibility of failure invalidates these.  
For example, $\biEmb{\skipc}{diverge}; \biEmb{\mathit{fail}}{\skipc}$ does not fail, 
whereas both $\biEmb{\mathit{fail}}{diverge}$ and 
$\biEmb{\mathit{fail}}{\skipc}; \biEmb{\skipc}{diverge}$ fail.
This is a topic for future work that may draw on FailKAT~\cite{Mamouras17},
and is important for reasons discussed in the previous paragraph.

The $\forall\forall$ logic of \citet{BNNN19} combines deductive rules with a
rule of rewriting (like \rn{eRewrite} in Figure~\ref{fig:ERHL}) using a
bicom-like notation that includes an embed construct like our
$\biEmb{\missingArg}{\missingArg}$.  Their $\forall\forall$ property disallows
failure entirely, and the phenomena mentioned in the preceding paragraph are
avoided by using smallstep semantics with dovetailed execution of the embed
construct.  That semantics, however, is not easily reconciled with our goal that
bicoms have a straightforward translation to unary code.  Indeed, their
prototype verifier~\cite{NagasamudramBN25} uses the sequential encoding and thus
with respect to failure it is verifying a property like our
Definition~\ref{def:relcorrect} for $\rspecSym$.  (In other regards the
prototype is very close to the logic.)  The weaker treatment of failure does not
seem disadvantageous in practice, since absence of failure is a unary property
that can be checked as such.

The programs handled by~\cite{BNNN19,NagasamudramBN25} have procedures and act
on dynamically allocated object structures.  Although allocation is often
modeled by nondeterministic choice, $\forall\exists$ properties are avoided by
considering relations that describe the heap up to bijective renaming as
in~\cite{BartheRezk05}.  To extend our prototype to soundly handle procedures
and pointer structures, some features such as frame conditions with read effects
should be revisited in connection with proving $\forall\exists$ properties.

We are not aware of prior work that translates $\forall\exists$ judgments to a
$\forall\forall$ property of a product.  The closest work is the preprint
of \citet{Wu25} which translates $\forall\exists$ judgments to judgments in
unary Hoare logic.  As noted above, their $\forall\exists$ judgment is phrased
as a pre-post condition on the ``concrete'' program $c$ where the pre and post
refer to computation by the abstract program $c'$ as a resource in the sense of
separation logic.  The judgment has a bespoke semantics that refers to the
computations of both $c$ and $c'$~\cite{TuronDB13,FruminKB18,GaherSSJDKKD22}.
What \citet{Wu25} do is encode this semantics in Hoare triples with ordinary
semantics, using pre/post conditions expressed in terms of an operation derived
from $wlp(c')$ so that the $\forall\exists$ semantics gets encoded using
existential quantification within pre/post.\footnote{In appendix C of the
document, the development is extended to include assertions and failures.  The
$\forall\exists$ judgment is interpreted to say that (a) from related initial
states where $c$ can fail, $c'$ must fail too, and (b) for any normal
termination of $c$, either $c'$ fails or it can terminate in a state that
satisfies the postcondition.  This perhaps accords with a view of failure as
undefinedness.}  This encoding is used to verify some examples and also to
derive Hoare logic rules that correspond to relational logic.  The goal of this
work is similar to ours: reducing relational verification to unary in order to
leverage existing tools.  The work is carried out in Rocq, however, and it is
not clear that the encoding is amenable to use of automated theorem provers for
practical application.

One advantage of logics, compared with annotation-oriented tools, is that proof
rules can embody reasoning principles beyond assertion-based, such as the rule
of conjunction and frame rules.  An important principle is transitive
composition, known as vertical composition in works on refinement.  (Transitive
composition motivated the $\forall\exists$ semantics called relative termination
in \citet{hawblitzelklr13} where deterministic programs are considered.)  Such
principles, for $k$-safety, are developed in the work of~\cite{DosualdoFD2022}
on verifications that seem out of reach of many relational logics but are within
reach of ERHL+ (as discussed in~\cite{BNN23v5}) provided the assertion language
allows explicit use of wlp as in~\cite{DosualdoFD2022}.

The assertions used in \citet{Wu25} makes use of wlp as well as explicit
quantification over states (and even logical variables of type set-of-states).
\citet{DardinierM24} develop a Hoare style logic for hyperproperties, in which pre- and post-condition are second order predicates.
The logic applies to a single program. The correctness judgment says that for any set of initial states satisfying the precondition, the direct image (collecting semantics) is a set that satisfies the postcondition.  Many hyperproperties including $\forall\exists$ can be expressed owing to the ability to quantify over states in the specification. 
\cite{DardinierLM24} show that, remarkably, this approach is amenable to SMT-based automation, by encodings that track both over and under approximations of the collecting semantics.  
A novel hint annotation is used to help with nondeterministic choices.
Their prototype is used to verify (or refute) correctness 
of many examples from the literature, and relies on several loop rules.
Outcome Logic~\cite{ZilbersteinDS23} is another Hoare style logic in which postconditions can be predicates on sets of states;
in general, predicates on an outcome monoid.  Such monoids encompass not only powerset but also probability distributions and error monads. The approach has been developed further to encompass the challenging combination of demonic nondeterminacy with probability~\cite{ZilbersteinKST25,ZhangZK024}.

\ifappendix

\newpage

\appendix
\section{Appendix: main proofs}

This section provides detailed proofs for the results in Section~\ref{sec:main}.
Also, the $\size$ function is defined in Figure~\ref{fig:size} and the full 
definition of command semantics is in Figure~\ref{fig:commandSem}.

\begin{figure}[h]
\begin{small}
\(\begin{array}{ll}
c               & \size(c) \\\hline
\skipc           & 0 \\
x:=e             & 1 \\ 
\havc{x}         & 1 \\
c_1; c_2          & 1 + \size(c_1) + \size(c_2)  \\
\ifc{e}{c_1}{c_2} & 1+ \size(c_1) + \size(c_2) 
\hspace*{-1ex} \\
\whilec{e}{c_1}   & 1+\size(c_1)
\end{array}\)
\quad
\(\begin{array}{ll}
B               & \size(B) \\\hline
\biEmb{c}{c'}   & 1 + \size(c_1) + \size(c_2)  \\
\assertc{\P}    & 1 \\
\havRt{x}{\P}   & 1 \\
B_1;B_2         & 1 + \size(B_1) + \size(B_2)  \\
\ifFour{e|e'}{B_1}{B_2}{B_3}{B_4} 
    & 1 + \quant{+}{i}{1\leq i \leq 4}{B_i} \\
\biwhile{e|e'}{\Lrel|\R}{B_1} & 1 + \size(B_1) 
\hspace*{-1ex} \end{array}\)
\end{small}
\caption{Size of commands and bicoms}
\label{fig:size}
\end{figure}

\begin{figure}[h]
\begin{small}
\begin{mathpar}
\inferrule{s\models p}{\assertc{p} / s \ceval s}

\inferrule{s\not\models p}{\assertc{p} / s \ceval \fail}

\inferrule{n\in\Z}{\havc{x} / s \ceval \update{s}{x}{n}}

\inferrule{s(e)=n}{x:=e / s \ceval \update{s}{x}{n}}

\inferrule{c/s \ceval t\\ d/t \ceval \phi}{c;d/ s \ceval \phi}

\inferrule{c/s \ceval \fail}{c;d/ s \ceval \fail}

\inferrule{s\models e \\ c/s\ceval \phi}
          {\ifc{e}{c}{d} / s \ceval \phi}

\inferrule{s\not\models e \\ d/s\ceval \phi}
          {\ifc{e}{c}{d} / s \ceval \phi}

\inferrule{s \not\models e}
          {\whilec{e}{c} / s \ceval s}

\inferrule{s\models e \\ c/s\ceval \fail}
          {\whilec{e}{c} / s \ceval \fail}

\inferrule{s\models e \\ c/s\ceval t
           \\ \whilec{e}{c} / t \ceval \phi}
          {\whilec{e}{c} / s \ceval \phi}
\end{mathpar}
\end{small}
\caption{Semantics of commands}
\label{fig:commandSem}
\end{figure}

\begin{remark}
Definition~\ref{def:bieq} of $\bieq$ does not include congruence clauses.  It
may seem natural to include them even though we have no specific use for them.
In fact congruence with respect to sequence and bi-if is no problem, but
congruence for bi-while would falsify Lemma~\ref{lem:bsem_equiv_bcorrect}.  This
is an artifact of the current definition of $\biLeft{\mystrut\missingArg}$ which
is used in the semantics of bi-while.  Because $\biLeft{\mystrut\missingArg}$
discards bi-assertions and $\havRtKeyword$, semantic equivalence is not a
congruence with respect to bi-while.  For example, let $B_0$ be
$\assertc{\False};\biEmb{x:=0}{\skipc}$ and $B_1$ be
$\assertc{\False};\biEmb{x:=1}{\skipc}$.  They are semantically equivalent,
i.e., $\means{B_0}=\means{B_1}$.  Consider their use in this context:
$\biwhile{x<0|\False}{\True|\False}{B_i}$.  From initial stores where $x$ on the
left is negative, this iterates once and sets $x$ to 0 or 1 depending on whether
$B_0$ or $B_1$ is used.  This peculiarity can probably be avoided by defining
$\biLeft{\mystrut\missingArg}$ to keep assertions like
$\biRight{\mystrut\missingArg}$ does, but the current definitions slightly streamline
the proof of our main result.
\qed
\end{remark}

A note about the proofs to follow:
In the hints we say ``predicate calculus'' both for reasoning in the ambient
logic and for manipulating shallow embedded assertions and relations.

\lemuChkterm*
\begin{proof}
By structural induction on $c$, keeping $\R$ and $\S$ general. So the induction hypothesis is 
that for any subprogram $d$ of $c$, and any $\P,\Q$, 
if $\models \biEmb{\skipc}{\uChk(d,vs)}: \bspec{\P}{\Q}$
then $\models \skipc\sep d: \aespec{\P}{\Q}$.

\graybox{\textbf{case} $c$ is $\skipc$}
\[\begin{array}{lll}
    & \models \biEmb{\skipc}{\uChk(\skipc,vs)}: \bspec{\R}{\S} \\
\iff& \hint{definition of $\uChk$ (Figure~\ref{fig:uChk})} \\
    & \models \biEmb{\skipc}{\skipc}: \bspec{\R}{\S} \\
\imp& \hint{adequacy of embed: Lemma~\ref{lem:biprogram_embed_correctness}}\\     & \models \skipc\sep\skipc: \rspec{\R}{\S} \\
\imp& \hint{Lemma~\ref{lem:allall_allexists_term}(i)}\\
    & \models \skipc\sep\skipc: \aespec{\R}{\S} 
\end{array}\]

\graybox{\textbf{case} $c$ is assignment or havoc}
The argument is the same as for $\skipc$,
because $\uChk$ leaves them unchanged and they always terminate
so we can apply Lemma~\ref{lem:allall_allexists_term}(ii).

\graybox{\textbf{case} $c$ is $\assertc{p}$}
\[\begin{array}{lll}
    & \models \biEmb{\skipc}{\uChk(\assertc{p},vs)}: \bspec{\R}{\S} \\
\iff& \hint{definition of $\uChk$}\\
    & \models \biEmb{\skipc}{\assertc{p}}: \bspec{\R}{\S} \\
\iff& \hint{wlp/correctness Lemma~\ref{lem:rwlp_prop}(i)}\\
    & \models \R\imp \wlp(\biEmb{\skipc}{\assertc{p}}, \S) \\
\iff& \hint{definition of $\wlpR$ and $\wlpR$ equation in Lemma~\ref{lem:wlpR_equations}}\\
    & \models \R\imp \rightF{p}\land\S \\
\imp& \hint{rule \rn{eConseq}, using 
      $\skipc\sep\assertc{p}: \aespec{\rightF{p}\land\S}{\S}$ from rule \rn{eSkipAssert}}\\
    & \skipc\sep\assertc{p}: \aespec{\R}{\S}
\end{array}\] 

\graybox{\textbf{case} $c$ is $\ifc{e}{c_1}{c_2}$} 
For clarity we elide $\models$ throughout, and we elide $vs$ as an argument to $\uChk$,
noting that $vs$ is the same in recursive calls to $\uChk$.
\[\begin{array}{lll}
    & \biEmb{\skipc}{\uChk(\ifc{e}{c_1}{c_2})}: \bspec{\R}{\S} \\
\iff& \hint{wlp/correctness Lemma~\ref{lem:rwlp_prop}(i), definition of $\uChk$} \\
    & \R\imp \wlp(\biEmb{\skipc}{\ifc{e}{\uChk(c_1)}{\uChk(c_2)}}, \S) \\
\iff& \hint{wlpR definition (twice) and wlpR equation for if (Lemma~\ref{lem:wlpR_equations}); predicate calculus} \\
    & \R\land\rightF{e}\imp \wlp(\biEmb{\skipc}{\uChk(c_1)}, \S) 
      \mbox{ and }
      \R\land\rightF{\neg e}\imp \wlp(\biEmb{\skipc}{\uChk(c_2)}, \S) \\
\iff& \hint{wlp/correctness Lemma~\ref{lem:rwlp_prop}(i)} \\
    & \biEmb{\skipc}{\uChk(c_1)}: \bspec{\R\land\rightF{e}}{\S}
      \mbox{ and }
      \biEmb{\skipc}{\uChk(c_2)}: \bspec{\R\land\rightF{\neg e}}{\S} \\
\imp& \hint{induction hypothesis for $c_1$ and for $c_2$, note below} \\
    & \skipc\sep c_1: \aespec{\R\land\rightF{e}}{\S}
      \mbox{ and }
      \skipc\sep c_2: \bspec{\R\land\rightF{\neg e}}{\S} \\
\imp& \hint{rule \rn{eSkipIf}} \\
    & \skipc\sep\ifc{e}{c_1}{c_2}: \aespec{\R}{\S}
\end{array}\]
To apply the induction hypothesis we need $\comFrame(c1,vs)$ and $\comFrame(c2,vs)$
which follow easily from $\comFrame(c,vs)$ which also gives
$\semframe{e}{vs}$.  That in turn gives $\semframe{\R\land\rightF{e}}{vs}$
and $\semframe{\R\land\neg\rightF{e}}{vs}$ as needed to apply the induction hypothesis.

\graybox{\textbf{case} $c$ is $c_1;c_2$} Again we elide $vs$ as an argument to $\uChk$. 
\[\begin{array}{lll}
    & \biEmb{\skipc}{\uChk(c_1;c_2)}: \bspec{\R}{\S} \\
\iff& \hint{def $\uChk$}\\
    & \biEmb{\skipc}{\uChk(c_1);\uChk(c_2)}: \bspec{\R}{\S} \\
\iff& \hint{Lemma~\ref{lem:rwlp_prop}(i), wlpR def and
           Lemma~\ref{lem:wlpR_equations};
           abbreviate $\Q:=\wlpR(\uChk(c_2),\S)$}\\
    & \R\imp\wlpR(\uChk(c_1),\Q) \\
\iff& \hint{Lemma~\ref{lem:rwlp_prop}(i) and (ii), wlpR definition} \\
    & \biEmb{\skipc}{\uChk(c_1)}:\bspec{\R}{\Q} 
      \mbox{ and }
      \biEmb{\skipc}{\uChk(c_2)}:\bspec{\Q}{\S} \\
\imp& \hint{induction hypothesis using $\semframe{\Q}{vs}$ by Lemmas~\ref{lem:chk_frame_sem} 
            and~\ref{lem:frame_sem_rwlp} }\\
    & \skipc\sep c_1:\aespec{\R}{\Q} 
      \mbox{ and }
      \skipc\sep c_2:\aespec{\Q}{\S} \\
\imp& \hint{rule \rn{eSeq}} \\
    & \skipc;\skipc\sep c_1;c_2: \aespec{\R}{\S} \\
\imp& \hint{rule \rn{eRewrite}, $\skipc;\skipc\kateq\skipc$ from Definition~\ref{def:kateq}} \\
    & \skipc\sep c_1;c_2: \aespec{\R}{\S} 
\end{array}\]

\graybox{\textbf{case} $c$ is $\whilev{e}{e_v}{d}$}
By definition, $\uChk(\whilev{e}{e_v}{d},vs)$ is 
\[ \whilev{e}{e_v}{x:=e_v;\uChk(d,vs);\assertc{(0\leq e_v<x)}} \]
where $x$ is fresh with respect to $vs$ and the assigned variables of $\uChk(d,vs)$.
Thus by hypothesis of the lemma $x$ is fresh for $\R$ and $\S$ as well as for $d$ and the variant expression $e_v$.
We aim to apply rule \rn{eSkipDo} in Figure~\ref{fig:ERHLadditional},
using $x$ as the fresh variable and $e_v$ as the variant.  
For an invariant let  
\[ \I := \wlpR(x:=e_v;\uChk(d,vs);\assertc{0\leq e_v<x}, \S) \]
The assumption for this case is 
$\biEmb{\skipc}{\uChk(\whilev{e}{e_v}{d},vs)}: \bspec{\R}{\S}$
so by definition of $\wlpR$ and wlp/correctness lemma  we have $\models \R\imp \I$.
By the loop equation for $\wlpR$ we have $\models\I\imp F(\I)$
by the fixpoint property,\footnote{We use that $\I$ is a postfixpoint but not that it is greatest.} 
where $F$ is defined in Lemma~\ref{lem:wlpR_equations}.  
By predicate calculus this is equivalent to the following.
\[\begin{array}[t]{ll}
\models\I\land\neg\rightF{e} \imp \S                                          & \mbox{(K1)}\\
\models\I\land\rightF{e} \imp \wlpR(x:=e_v;\uChk(d,vs);\assertc{0\leq e_v<x},\I) & \mbox{(K2)}
\end{array}\]
Using rule \rn{eSkipDo} we will prove
\[ \skipc\sep\whilev{e}{e_v}{d}: \aespec{\I}{\I\land\neg\rightF{e}} \]
Then \rn{eConseq} yields the goal 
$\skipc\sep\whilev{e}{e_v}{d}: \aespec{\R}{\S}$
using (K1) and $\R\imp\I$.
It remains to show the premise of \rn{eSkipDo}, which in this instance is 
\[ \skipc\sep d: \aespec{\rightF{e}\land\I\land\rightF{x=e_v}}{\I\land\rightF{0\leq e_v < x}} \]
To prove it we calculate starting from (K2), eliding $\models$ and $vs$.
\[\begin{array}{lll}
    & \I\land\rightF{e} \imp \wlpR(x:=e_v;\uChk(d);\assertc{0\leq e_v<x},\I) \\
\iff& \hint{wlpR equation for sequence (Lemma~\ref{lem:wlpR_equations})} \\
    & \I\land\rightF{e} \imp 
        \wlpR(x:=e_v,
        \wlpR(\uChk(d),
        \wlpR(\assertc{0\leq e_v<x},\I))) \\
\iff& \hint{wlpR equations for assignment and assert (Lemma~\ref{lem:wlpR_equations})} \\
    & \I\land\rightF{e} \imp 
        \subst{(\wlpR(\uChk(d), \rightF{0\leq e_v<x}\land\I))}{|x}{|e_v} \\
\imp& \hint{predicate calculus (strengthen antecedent)} \\
    & \I\land\rightF{e}\land\rightF{x=e_v} \imp 
        \subst{(\wlpR(\uChk(d), \rightF{0\leq e_v<x}\land\I))}{|x}{|e_v}  \\
\iff& \hint{substitution Lemma~\ref{lem:subst_equal_ante}} \\
    & \I\land\rightF{e}\land\rightF{x=e_v} \imp 
        \wlpR(\uChk(d), \rightF{0\leq e_v<x}\land\I) \\
\iff& \hint{lifting Lemma~\ref{lem:sr_valid_metaR} (still eliding $\models$)} \\
    & \all{n\in\Z}{ \subst{(\I\land\rightF{e}\land\rightF{x=e_v} \imp 
               \wlpR(\uChk(d), \rightF{0\leq e_v<x}\land\I))}{x}{n} }\\
\iff& \hint{substitution properties including Lemma~\ref{lem:substR_outside_frame}} \\
    & \all{n}{ \I\land\rightF{e}\land\rightF{n=e_v} \imp 
               \subst{(\wlpR(\uChk(d), \rightF{0\leq e_v<x}\land\I))}{x}{n} }\\
\imp& \hint{subst. lemmas incl. Lemma~\ref{lem:rwlp_subst_under} 
 and frame of $\uChk(d)$, freshness of $x$} \\     & \all{n}{ \I\land\rightF{e}\land\rightF{n=e_v} \imp 
               \wlpR(\uChk(d), \rightF{0\leq e_v<n}\land\I) }\\
\iff& \hint{def $\wlpR$, wlp/correct } \\
    & \all{n}{ \biEmb{\skipc}{\uChk(d)} : 
         \Bspec{\I\land\rightF{e}\land\rightF{x=e_v}}{\rightF{0\leq e_v<x}\land\I}} \\
\imp& \hint{induction hypothesis, defs $\Left{\mystrut\missingArg}$ and $\Right{\mystrut\missingArg}$} \\
    & \all{n}{ \skipc\sep d : 
         \Aespec{\I\land\rightF{e}\land\rightF{x=e_v}}{\rightF{0\leq e_v<x}\land\I}}
\end{array}\]
Note that we use Lemma~\ref{lem:sr_valid_metaR} to lift quantification over $x$ 
to the ambient logic, because the induction hypothesis can't be directly applied where there are occurrences of 
$x$ which is outside the frame $vs$. This in turn motivated our use of the metavariable formulation
of rule \rn{eWhile}.
Apropos freshness of $x$, we use that it is outside the assigned vars of $\uChk(d)$ and that those frame $\uChk(d)$. 

Note: at the point where we appeal to the induction hypothesis, if instead of a metavariable $n$ in 
rule \rn{eDo} we used a program variable, there would be a term $\rightex{e_E}$ that is 
not framed by $vs$ (as $x_E$ is fresh), so the induction hypothesis would not be applicable.
That is why we need to apply Lemma~\ref{lem:sr_valid_metaR} at an earlier step.
\end{proof}

Before proceeding to the main theorem we note the following derived rule which is convenient in the calculational proof.
\[
\inferrule[eDoX]{
  c\sep \skipc : \aespec{\I\land \leftF{e}\land\P }{\I} \\ 
  c\sep c' : \aespec{\I\land \leftF{e}\land\rightF{e'} \land \neg\P \land \neg\P'}{\I} \\
  \mbox{$\skipc\sep c' : \aespec{\I\land \rightF{e'}\land\P'\land \neg(\leftF{e}\land\P)\land(n=E) }{\I\land (0\leq E < n)}$ for all $n\in\Z$}
 \\ 
   \I\imp (\leftex{e} = \rightex{e'} \lor (\P \land \leftF{e})\lor (\P' \land \rightF{e'})) \\
   \R\imp\I \\ \I\imp\S 
  }{
   \whilec{e}{c} \Sep \whilec{e'}{c'} : \aespec{\R}{\S\land \neg\leftF{e}\land\neg\rightF{e'}} 
}
\]
The rule is derived from \rn{eDo} simply using \rn{eConseq} to allow general pre- and post-relations.  It also has a formally stronger precondition for the right-only premise,
i.e., with added conjunct that negates the left-only condition. 
(That is derived by instantiating the right alignment condition $\P'$ in \rn{eDo} with $\P'\land \neg(\leftex{e}\land\P)$.)

\thmmain*
\begin{proof}
The proof is by induction on $\size(B)$. 
We keep $vs$ fixed but leave $\R,\S$ general, so the induction hypothesis is as follows:
\begin{quote}
For all $C,\R,\S$,
if $\size(C)<\size(B)$, $\keyw{wf}(C)$, $\biFrame(C,vs)$, $\semframe{\R}{vs}$, 
$\semframe{\S}{vs}$, and 
$\models \chk(C,vs) : \bspec{\R}{\S}$
then $\models \Left{C} \sep \Right{C} : \aespec{\R}{\S}$.
\end{quote}
For the given $B,\R,\S$ we go by cases on $B$.
In each case we calculate from assumption $\models \chk(B,vs) : \bspec{\R}{\S}$.
Every line of the calculation should begin with $\models$, as we are reasoning about valid judgments and valid implications, so for brevity we elide $\models$ throughout.  We also elide $vs$ as an argument to $\chk$ (and $\uChk$) as it is unchanged in recursive calls to those functions.

Although excerpts of this proof appear in Section~\ref{sec:main}, we repeat them here for readability.

\graybox{\textbf{case} $B$ is $\havRt{x}{\Q}$}. This base case sets the main pattern used throughout the proof. The correctness judgment is put in $\wlp$ form which is then used to establish the premises of a proof rule for the projections of $B$, in this case \rn{eSkipHav} in Figure~\ref{fig:ERHL}.
\[\begin{array}{lll}
    & \chk(\havRt{x}{\Q}) : \bspec{\R}{\S} \\
\iff & \hint{wlp/correctness Lemma~\ref{lem:rwlp_prop}(i)} \\
    & \R\imp \wlp(\chk(\havRt{x}{\Q}),\S) \\
\iff & \hint{definition of $\chk$ (Figure~\ref{fig:chk})} \\
    & \R\imp\wlp(\assertc{\some{\smSep x}{\Q}}; \havRt{x}{\Q},\S) \\
\iff & \hint{wlp equations for seq, assert, \keyw{havf} (Lemma~\ref{lem:rwlp_equations})} \\
    & \R\imp \some{\smSep x}{\Q} \land \all{\smSep x}{(\Q\imp\S)} \\
\imp & \hint{predicate calculus} \\
    & \R\imp \some{\smSep x}{\S} \\ 
\imp & \hint{rule \rn{eSkipHav} and its soundness (Theorem~\ref{thm:soundness})} \\
    & \skipc\sep\havc{x} : \aespec{\R}{\S} \\
\iff & \hint{def $\Left{\mystrut\missingArg}$ and $\Right{\mystrut\missingArg}$} \\
    & \Left{\havRt{x}{\Q}} \sep \Right{\havRt{x}{\Q}} : \aespec{\R}{\S} 
\end{array}\]

\graybox{\textbf{case} $B$ is $\assertc{\Q}$}. The same reasoning pattern is used in this base case.

\[\begin{array}{lll}
     & \chk(\assertc{\Q}) : \bspec{\R}{\S} \\
\iff & \hint{def $\chk$, wlp/correctness Lemma~\ref{lem:rwlp_prop}(i)} \\
     & \R\imp\wlp(\assertc{\Q},\S) \\
\iff & \hint{wlp equation for assert (Lemma~\ref{lem:rwlp_equations})} \\
     & \R\imp\Q\land\S \\
\imp & \hint{rules \rn{eSkipSkip} and \rn{eConseq} using $\R\imp\S$} \\
     & \skipc\sep\skipc : \bspec{\R}{\S} \\
\iff & \hint{def $\Left{\mystrut\missingArg}$ and $\Right{\mystrut\missingArg}$} \\
     & \Left{\assertc{\Q}} \sep \Right{\assertc{\Q}} : \aespec{\R}{\S} 
\end{array}\]

\graybox{\textbf{case} $B$ is $B_1;B_2$}.
This case shows the role of framing in using the induction hypothesis.

\[\begin{array}{lll}
     & \chk(B_1;B_2) : \bspec{\R}{\S} \\
\iff & \hint{def $\chk$, wlp/correctness Lemma~\ref{lem:rwlp_prop}(i)} \\
     & \R \imp \wlp(\chk(B_1);\chk(B_2), \S) \\
\iff & \hint{wlp equation for sequence (Lemma~\ref{lem:rwlp_equations})} \\
     & \R \imp \wlp(\chk(B_1),\wlp(\chk(B_2), \S)) 
\end{array}\]
At this point it is convenient to abbreviate $\Q\eqdef\wlp(\chk(B_2,vs),\S)$.
We have $\biFrame(B_2,vs)$ from assumption $\biFrame(B_1;B_2,vs)$.
So we have $\semframe{\chk(B_2)}{vs}$ by Lemma~\ref{lem:chk_frame_sem},
whence by Lemma~\ref{lem:frame_sem_rwlp} and the assumption $\semframe{\S}{vs}$
we get $\semframe{\Q}{vs}$.
The calculation continues:
\[\begin{array}{lll}
    & \R \imp \wlp(\chk(B_1),\Q) \\
\iff & \hint{wlp/correctness Lemma~\ref{lem:rwlp_prop}(i) and 
fact that $\models \chk(B_2):\bspec{\Q}{\S}$ by Lemma~\ref{lem:rwlp_prop}(ii) } \\
    & \chk(B_1):\bspec{\R}{\Q} \quad\mbox{and}\quad \chk(B_2):\bspec{\Q}{\S} \\
\imp & \hint{induction hypothesis for $B_1$ and $B_2$, using $\semframe{\Q}{vs}$} \\
    & \Left{B_1}\sep\Right{B_1}:\aespec{\R}{\Q} \quad\mbox{and}\quad 
      \Left{B_2}\sep\Right{B_2}:\aespec{\Q}{\S} \\
\imp & \hint{sequence rule \rn{eSeq}} \\
    & \Left{B_1};\Left{B_2}\sep\Right{B_1};\Right{B_2}:\aespec{\R}{\Q}  \\
\iff & \hint{definitions of  $\Left{\mystrut\missingArg}$ and $\Right{\mystrut\missingArg}$} \\
    & \Left{B_1;B_2}\sep\Right{B_1;B_2}:\aespec{\R}{\Q} 
\end{array}\]
Appeal to the induction hypothesis is justified by $\size(B_i) < \size(B_1;B_2)$.\footnote{Here and in the case for if, induction on the structure of $B$ would suffice, but not so in the case for loops.}
Use of the induction hypothesis also requires that each $B_i$ is well-formed and $\biFrame(B_i,vs)$.
These are easy consequences of the corresponding assumptions (i) and (ii) about B.

Note that ``and'' in the intermediate steps above is at the meta level and we could have written 
$\models \Left{B_1}\sep\Right{B_1}:\bspec{\R}{\Q} \mbox{ and } 
\models \Left{B_2}\sep\Right{B_2}:\bspec{\Q}{\S}$ to be precise.

\graybox{\textbf{case} $B$ is $\ifFour{e|e'}{B_1}{B_2}{B_3}{B_4}$}.
The importance of well-formedness only emerges in this case. 
It gives us the syntactic equivalences 
$\Left{B_1} \kateq \Left{B_2}$,
$\Left{B_3} \kateq \Left{B_4}$,
$\Right{B_1} \kateq \Right{B_3}$,
$\Right{B_2} \kateq \Right{B_4}$.
By symmetry we reverse them, and list them in the order used below:
\begin{equation}\label{eq:wfIf}
\Left{B_2} \kateq \Left{B_1} \quad
\Right{B_3} \kateq \Right{B_1} \quad
\Left{B_4} \kateq \Left{B_3} \quad
\Right{B_4} \kateq \Right{B_2}
\end{equation}
In the following we elide the $vs$ argument to $\chk$, as it is the same throughout.
\[\begin{array}{lll}
    & \chk(\ifFour{e|e'}{B_1}{B_2}{B_3}{B_4}):\bspec{\R}{\S} \\
\iff & \hint{def $\chk$, wlp/correctness Lemma~\ref{lem:rwlp_prop}(i)} \\
    & \R\imp\wlp(\ifFour{e|e'}{\chk(B_1)}{\chk(B_2)}{\chk(B_3)}{\chk(B_4)},\S) \\
\iff & \hint{wlp equation for if (Lemma~\ref{lem:rwlp_equations})} \\
    & \R\imp (
         \begin{array}[t]{l}
           (\leftF{e}\land \rightF{e'}\imp\wlp(\chk(B_1),\S))\;\land
           (\leftF{e}\land \neg\rightF{e'}\imp\wlp(\chk(B_2),\S))\;\land \\
           (\neg\leftF{e}\land \rightF{e'}\imp\wlp(\chk(B_3),\S))\;\land
           (\neg\leftF{e}\land \neg\rightF{e'}\imp\wlp(\chk(B_4),\S) ) \; )\\
         \end{array} \\
\iff & \hint{pred.\ calc.\ and semantics of conjunction, leaving $\models$ implicit} \\
    & \begin{array}{l}
          \R\land \leftF{e}\land \rightF{e'}\imp\wlp(\chk(B_1),\S)     \mbox{ and }
          \R\land \leftF{e}\land \neg\rightF{e'}\imp\wlp(\chk(B_2),\S) \mbox{ and}\\
          \R\land\neg\leftF{e}\land \rightF{e'}\imp\wlp(\chk(B_3),\S)) \mbox{ and }
          \R\land\neg\leftF{e}\land \neg\rightF{e'}\imp\wlp(\chk(B_4),\S)
      \end{array} \\
\iff & \hint{wlp property Lemma~\ref{lem:rwlp_prop}(i)} \\ 
    & \begin{array}{l}
          \chk(B_1):\bspec{\R\land \leftF{e}\land \rightF{e'}}{\S}     \mbox{ and }
          \chk(B_2):\bspec{\R\land \leftF{e}\land \neg\rightF{e'}}{\S} \mbox{ and}\\
          \chk(B_3):\bspec{\R\land\neg\leftF{e}\land \rightF{e'}}{\S} \mbox{ and }
          \chk(B_4):\bspec{\R\land\neg\leftF{e}\land \neg\rightF{e'}}{\S}
      \end{array} \\
\imp & \hint{induction hypothesis four times} \\
     & \begin{array}{l}
           \Left{B_1}\sep\Right{B_1}:\aespec{\R\land \leftF{e}\land \rightF{e'}}{\S}
          \mbox{ and }
           \Left{B_2}\sep\Right{B_2}:\aespec{\R\land \leftF{e}\land \neg\rightF{e'}}{\S}
          \mbox{ and}\\
           \Left{B_3}\sep\Right{B_3}:\aespec{\R\land\neg\leftF{e}\land \rightF{e'}}{\S}
          \mbox{ and }
          \Left{B_4}\sep\Right{B_4}:\aespec{\R\land\neg\leftF{e}\land \neg\rightF{e'}}{\S} 
       \end{array}\\
\imp & \hint{rule \rn{eRewrite} four times using (\ref{eq:wfIf}) and reflexivity of $\kateq$}\\
    & \begin{array}{l}
           \Left{B_1}\sep\Right{B_1}:\aespec{\R\land \leftF{e}\land \rightF{e'}}{\S}
          \mbox{ and }
           \Left{B_1}\sep\Right{B_2}:\aespec{\R\land \leftF{e}\land \neg\rightF{e'}}{\S}
          \mbox{ and}\\
           \Left{B_3}\sep\Right{B_1}:\aespec{\R\land\neg\leftF{e}\land \rightF{e'}}{\S}
          \mbox{ and }
          \Left{B_3}\sep\Right{B_2}:\aespec{\R\land\neg\leftF{e}\land \neg\rightF{e'}}{\S}
      \end{array} \\
\imp & \hint{rule \rn{eIf4}} \\
     &\ifc{e}{\Left{B_1}}{\Left{B_3}}\sep
      \ifc{e'}{\Right{B_1}}{\Right{B_2}}:\aespec{\R}{\S} \\
\iff & \hint{def $\Left{\mystrut\missingArg}$ and $\Right{\mystrut\missingArg}$} \\
     & \Left{\ifFour{e|e'}{B_1}{B_2}{B_3}{B_4}}\sep 
       \Right{\ifFour{e|e'}{B_1}{B_2}{B_3}{B_4}}:\aespec{\R}{\S} 
\end{array}\]

\graybox{\textbf{case} $B$ is $\biEmb{c}{c'}$}.  Again the argument $vs$ to $\chk$ is the same throughout and elided, as is that same argument to $\uChk$.
First observe that 
$\biEmb{c}{\uChk(c')} \bieq \biEmb{c}{\skipc};\biEmb{\skipc}{\uChk(c')}$ by Definition~\ref{def:bieq},
so $\means{\biEmb{c}{\uChk(c')}} = \means{\biEmb{c}{\skipc};\biEmb{\skipc}{\uChk(c')}}$ by Lemma~\ref{lem:bieq_bsem_equiv}.
Hence by Lemma~\ref{lem:rwlp_prop}(iii) we have
$\wlp(\biEmb{c}{\uChk(c')},\S) = \wlp(\biEmb{c}{\skipc};\biEmb{\skipc}{\uChk(c')},)$.
Now we calculate.
\[\begin{array}{lll}
    & \chk(\biEmb{c}{c'}): \bspec{\R}{\S} \\
\iff & \hint{def $\chk$, wlp/correctness Lemma~\ref{lem:rwlp_prop}(i)} \\
    & \R\imp\wlp(\biEmb{c}{\uChk(c')},\S) \\
\iff & \hint{observation above} \\
    & \R\imp\wlp(\biEmb{c}{\skipc};\biEmb{\skipc}{\uChk(c')},\S) \\
\iff & \hint{wlp of seq, abbreviate $\Q:=\wlp(\biEmb{\skipc}{\uChk(c')},\S)$} \\
    & \R\imp\wlp(\biEmb{c}{\skipc},\Q) \\
\iff & \hint{wlp properties Lemma~\ref{lem:rwlp_prop}(i) and (ii)} \\
    & \biEmb{c}{\skipc}:\bspec{\R}{\Q}      \quad\mbox{and}\quad 
      \biEmb{\skipc}{\uChk(c')}:\bspec{\Q}{\S} \\
\imp & \hint{adequacy of embed: Lemma~\ref{lem:biprogram_embed_correctness}} \\
    & c\sep\skipc: \rspec{\R}{\Q}      \quad\mbox{and}\quad 
          \biEmb{\skipc}{\uChk(c')}:\bspec{\Q}{\S} \\
\imp & \hint{Lemma~\ref{lem:allall_allexists_term} } \\
    & c\sep\skipc: \aespec{\R}{\Q}      \quad\mbox{and}\quad 
      \biEmb{\skipc}{\uChk(c')}:\bspec{\Q}{\S} \\
\imp & \hint{Lemma~\ref{lem:uChk_term}}\\
    & c\sep\skipc: \aespec{\R}{\Q}      \quad\mbox{and}\quad 
      \skipc\sep c': \aespec{\Q}{\S} \\
\imp & \hint{sequence rule \rn{eSeq}}\\
    & c;\skipc\Sep\skipc; c': \aespec{\Q}{\S} \\
\imp & \hint{rule \rn{eRewrite} using $c;\skipc\kateq c$ and $\skipc;c'\kateq c'$}\\
    & c\sep c': \aespec{\Q}{\S} \\
\iff & \hint{defs $\Left{\mystrut\missingArg}$ and $\Right{\mystrut\missingArg}$}\\
    & \Left{\biEmb{c}{c'}}\sep\Right{\biEmb{c}{c'}}: \aespec{\Q}{\S} 
\end{array}\]

\graybox{\textbf{case} $B$ is $\biwhilev{e\sep e'}{\P\sep\P'}{E}{B_1}$}.

To introduce some nomenclature we expand the definition of $\chk(B,vs)$ 
as follows.
\[ \begin{array}{ll}
  \multicolumn{2}{l}{
  \chk(\biwhilev{e\sep e'}{\P\sep\P'}{E}{B_1},vs) = \biwhilev{e\sep e'}{\P\sep\P'}{E}{B_2}
  } \\
\quad\mbox{where } & B_2 = B_{snp};\chk(B_1,vs);B_{dec} \\
   & B_{snp} =  
       \havRt{x_E}{(\rightex{x_E}=E)};
       \havRt{x_{ro}}{(\rightex{x_{ro}} = (\rightex{e'}\land\P'))} \\ 
   & B_{dec} = \assertc{(\rightex{x_{ro}} \imp 0 \leq E < \rightex{x_{E}})}
\end{array}\]
The variables $x_{E}$ and $x_{ro}$ are fresh with respect to both $vs$ and the assigned variables in $\chk(B_1)$
(see Figure~\ref{fig:chk}).  The names are mnemonic: 
in $B_{snp}$, variable $x_{E}$ snapshots the value of $E$ at the start of an iteration
and $x_{ro}$ snapshots the condition that it will be a right-only iteration.
The assertion $B_{dec}$ checks that on right-only iterations the variant decreases.
In the following we elide the argument $vs$ to $\chk$ as it is the same throughout.  

Let $\I \eqdef \wlp(\chk(B),\S)$, so 
$\I = \wlp(\biwhilev{e\sep e'}{\P\sep\P'}{E}{B_{snp};\chk(B_1);B_{dec}},\S)$.
Note that $\wlp(\chk(B),\S)$ is different from $\wlp(B,\S)$ owing to the instrumentation. A key fact is 
\begin{equation}\label{eq:Iframe}
\semframe{\I}{vs}
\end{equation}
This holds because we have $\semframe{\chk(B)}{vs}$ 
by Lemma~\ref{lem:chk_frame_sem} and the assumption $\biFrame(B,vs)$;
and we have assumption $\semframe{\S}{vs}$ 
so we can apply Lemma~\ref{lem:frame_sem_rwlp}.

We aim to instantiate rule \rn{eDoX}
for the commands $c:=\Left{B}$ and $c':=\Right{B}$, invariant $\I$ defined above,
variant $E$ from $B$.
This yields the desired conclusion 
$\Left{B}\sep\Right{B}: \aespec{\R}{\S}$ for the loop case,
provided we can establish the following proof obligations involving the original loop body $B_1$.  
\[
\begin{array}{ll}
\models \R \imp \I                                                                  &\mbox{(I0)}\\ 
\models \Left{B_1}\sep\Right{B_1} :
  \aespec{\I\land\leftF{e}\land\rightF{e'}\land\neg\P\land\neg\P'}{\I}              & \mbox{(I1)}\\ 
\models \Left{B_1}\sep\skipc : \aespec{\I\land\leftF{e}\land\P}{\I}                 & \mbox{(I2)}\\ 
\all{n\in\Z}{
\models \skipc\sep\Right{B_1} : \aespec{
  \I\land\rightF{e'}\land\P'\land\neg(\leftF{e}\land\P)\land(n=E)}{\I\land(0\leq E < n)}} & \mbox{(I3)}\\ 
\models \I\imp (e\agreeRel e') \lor (\leftF{e} \land P) \lor (\rightF{e'} \land P')  &\mbox{(I4)}\\
\models \I \land \neg\leftF{e} \land \neg\rightF{e'} \imp \S                         & \mbox{(I5)}
\end{array}\]
By the wlp equation for bi-while (Lemma~\ref{lem:rwlp_equations}) we have $\I
= \gfp(G(e,e',\P,\P',B_2,\S))$ where $G$ is defined in
Lemma~\ref{lem:rwlp_equations}.  We do not use that this is a greatest fixpoint,
only that it is a postfixpoint, i.e., we have $\models\I\imp
G(e,e',\P,\P',B_2,\S)(\I)$.  Expanding the definition of $G$ and applying
propositional equivalences we get the following.
\[\begin{array}[t]{ll}
\models\I\land\neg\leftF{e} \land \neg\rightF{e'} \imp \S                               & \mbox{(H1)}\\
\models\I\land\leftF{e} \land \P \imp \wlp(\biLeft{B_2}, \I)                            & \mbox{(H2)}\\ 
\models\I\land\rightF{e'}\land\P'\land\neg(\leftF{e}\land\P) \imp\wlp(\biRight{B_2},\I) & \mbox{(H3)}\\
\models\I\land\leftF{e} \land \rightF{e'} \land \neg\P \land \neg\P' \imp \wlp(B_2, \I) & \mbox{(H4)}\\
\models\I\imp((e \ddot{=} e') \lor (\leftF{e} \land \P) \lor (\rightF{e'} \land \P'))   & \mbox{(H5)} 
\end{array}\]
From our given assumption $\models\chk(B):\rspec{\R}{\S}$ we have
$\models\R\imp\wlp(\chk(B),\S)$ by Lemma~\ref{lem:rwlp_prop}(i); 
so (I0) holds by definition of $\I$.
By (H5) we have (I4) and by (H1) we have (I5).
It remains to show (I1), (I2), and (I3) which correspond to the three main premises in \rn{eDoX}.
We use (H4), (H2), and (H3) to prove (I1)--(I3).

For (I1), starting from the relevant fact (H4) we calculate, eliding $\models$
and eliding $vs$ as an argument to $\chk$.
\[\begin{array}{lll}
    & \I\land\leftF{e} \land \rightF{e'} \land \neg\P \land \neg\P' \imp \wlp(B_2, \I) \\
\iff   & \hint{def $B_2$} \\
    & \I\land\leftF{e} \land \rightF{e'} \land \neg\P \land \neg\P' \imp \wlp(B_{snp};\chk(B_1);B_{dec}, \I)  \\
\iff   & \hint{wlp equation for sequence (Lemma~\ref{lem:rwlp_equations})} \\ 
    & \I\land\leftF{e} \land \rightF{e'} \land \neg\P \land \neg\P' \imp 
        \wlp(B_{snp},\wlp(\chk(B_1),\wlp(B_{dec}, \I)))  \\
\iff   & \hint{unfold defs, wlp equation for sequence} \\ 
    & \I\land\leftF{e} \land \rightF{e'} \land \neg\P \land \neg\P' \imp 
        \begin{array}[t]{l}
           \wlp(\havRt{x_{E}}{(x_{E} = E)}, \\
           \wlp(\havRt{x_{ro}}{(x_{ro}=(\leftF{e'}\land\P'))}, \\ 
           \wlp(\chk(B_1), \\
           \wlp(\assertc{x_{ro}\imp E < x_{E}}, I))))))
        \end{array} \\
\iff & \hint{wlp equations for \keyw{havf} and \keyw{assert} (Lemma~\ref{lem:rwlp_equations}) } \\     & \I\land\leftF{e} \land \rightF{e'} \land \neg\P \land \neg\P' \imp 
        \begin{array}[t]{l}
         \all{|x_{E}}{(x_{E} = E \imp  \\
            \all{|x_{ro}}{(x_{ro}=(\leftF{e'}\land\P')) \imp \\ 
            \wlp(\chk(B_1), \I\land (x_{ro}\imp E < x_{E})))}} ))
        \end{array} \\
\imp & \hint{$\wlp(\chk(B_1),\missingArg)$ monotonic, in monotonic context} \\ 
    & \I\land\leftF{e} \land \rightF{e'} \land \neg\P \land \neg\P' \imp 
        \begin{array}[t]{l}
         \all{|x_{E}}{(x_{E} = E \imp  \\
            \all{|x_{ro}}{(x_{ro}=(\leftF{e'}\land\P')) \imp \\ 
            \wlp(\chk(B_1), \I))}} ))
        \end{array} \\
\iff & \hint{predicate calculus (one point rule), eliding $(\leftF{e'}\land\P')$ } \\ 
    & \I\land\leftF{e} \land \rightF{e'} \land \neg\P \land \neg\P' \imp 
         \all{|x_{E}}{(x_{E} = E \imp  \subst{\wlp(\chk(B_1), \I))}{x_{ro}}{\ldots} }  \\
\iff & \hint{$x_{ro}\notin vs$, Lemma~\ref{lem:substR_outside_frame}, (\ref{eq:Iframe}), 
             Lemmas~\ref{lem:chk_frame_sem} and~\ref{lem:frame_sem_rwlp} }\\
    & \I\land\leftF{e} \land \rightF{e'} \land \neg\P \land \neg\P' \imp 
         \all{|x_{E}}{(x_{E} = E \imp          \wlp(\chk(B_1), \I))} \\
\iff & \hint{one point rule, $x_E\notin vs$, framing as in preceding two steps }\\
  & \I\land\leftF{e} \land \rightF{e'} \land \neg\P \land \neg\P'
           \imp  \wlp(\chk(B_1), \I) \\ 
\iff & \hint{wlp/correctness Lemma~\ref{lem:rwlp_prop}(i)}\\
  & \chk(B_1): \bspec{\I\land\leftF{e} \land \rightF{e'} \land \neg\P \land \neg\P'}{\I}
\\
\imp & \hint{induction hypothesis, noting $\size(B_1)<\size(B)$}\\
  & \Left{B_1}\sep\Right{B_1}: \aespec{\I\land\leftF{e} \land \rightF{e'} \land \neg\P \land \neg\P'}{\I}
\end{array}
\]
So (I1) is proved.
To apply the induction hypothesis in the last step,
we need the pre- and post-relations to be framed by $vs$.  
For $\I$ this is just (\ref{eq:Iframe}).  For the precondition we also 
use that $e,e',\P,\P'$ are framed by $vs$ which follows by def from $\biFrame(B)$.

Next we prove (I2), starting from the relevant fact (H2) and eliding $vs$ as argument to $\chk$.

\[\begin{array}{lll}
    & \I\land\leftF{e} \land \P \imp \wlp(\biLeft{B_2}, \I) \\
\iff& \hint{defs $B_2$, $B_{snp}$, $B_{dec}$, $\biLeft{\mystrut\missingArg}$} \\
    & \I\land\leftF{e} \land \P \imp \wlp(
             \biLeft{\havRt{x_E}{(\rightex{x_E}=E)}} ;
             \biLeft{\havRt{x_{ro}}{\ldots}} ; 
             \biLeft{\chk(B_1)} ;
             \biLeft{\assertc{(\rightex{x_{ro}} \imp \ldots)}}, \I) \\
\iff& \hint{def $\biLeft{\mystrut\missingArg}$, abbreviate $ESS:=\biEmb{\skipc}{\skipc}$} \\
    & \I\land\leftF{e} \land \P \imp \wlp(ESS;ESS;\biLeft{\chk(B_1)};ESS, \I) \\
\iff& \hint{wlp equation for sequence, $\wlp(ESS,.)$ is identity function } \\
    & \I\land\leftF{e} \land \P \imp \wlp(\biLeft{\chk(B_1)}, \I) \\
\iff& \hint{wlp/correctness Lemma~\ref{lem:rwlp_prop}(i) } \\
    & \biLeft{\chk(B_1)}: \bspec{\I\land\leftF{e} \land \P}{\I} \\
\iff& \hint{Lemma~\ref{lem:bsem_equiv_bcorrect}, 
         using $\means{\biLeft{\chk(B_1)}}=\means{\chk(\biLeft{B_1})}$ 
         from Lemmas~\ref{lem:chk_commute} and~\ref{lem:bieq_bsem_equiv} }\\ 
    & \chk(\biLeft{B_1}): \bspec{\I\land\leftF{e} \land \P}{\I} \\
\imp& \hint{induction hypothesis, using fact (\ref{eq:Iframe}) 
       and $\size(\biLeft{B_1}) \leq  \size(B_1) < \size(B)$ }\\ 
    & \Left{\biLeft{B_1}} \sep \Right{\biLeft{B_1}}: \aespec{\I\land\leftF{e} \land \P}{\I} \\
\iff& \hint{$\Left{\biLeft{B_1}}=\Left{B_1}$ and $\Right{\biLeft{B_1}}=\skipc$ by (\ref{eq:proj_biproj}) } \\
    & \Left{B_1} \sep \skipc : \aespec{\I\land\leftF{e} \land \P}{\I} \\
\end{array}\]
So (I2) is proved.

Finally we prove (I3).  
We use the fact that for any $x$ and $\Q$, $\models\all{|x}{\Q}$ iff $\models \Q$. We start from the relevant property (H3).
\[\begin{array}{lll}
    & \I\land\rightF{e'}\land\P'\land\neg(\leftF{e}\land\P) \imp\wlp(\biRight{B_2},\I) \\
\iff& \hint{defs $B_2$, $B_{snp}$, $B_{dec}$, $\biRight{\mystrut\missingArg}$; wlp over seq;
            abbrev.\ $XRO:=\rightF{e'}\land\P'\land\neg(\leftF{e}\land\P)$ } \\
    & \I\land XRO \imp
         \begin{array}[t]{l}
            \wlp(\havRt{x_E}{(\rightex{x_E}=E)}, 
            \wlp(\havRt{x_{ro}}{(\rightex{x_{ro}}=(\rightF{e'}\land\P'))}, \\
            \wlp(\biRight{\chk(B_1)},        
            \wlp(\assertc{(\rightex{x_{ro}} \imp 0 \leq E < \rightex{x_E})}, \I)))) 
         \end{array} \\
\iff& \hint{wlp equations for \assertc\ and \havRtKeyword } \\
    & \I\land XRO \imp
         \begin{array}[t]{l}
            \all{|x_E}{(\rightex{x_E}=E) \imp    
            \all{|x_{ro}}{(\rightex{x_{ro}}=(\rightF{e'}\land\P')) \imp \\
            \wlp(\biRight{\chk(B_1)}, (\rightex{x_{ro}} \imp 0 \leq E < \rightex{x_E})\land \I)}}
         \end{array} \\
\iff& \hint{predicate calculus, $x_E$ and $x_{ro}$ outside frames of $\I$ and $XRO$} \\
    &   \begin{array}[t]{l}
            \all{|x_E}{\all{|x_{ro}}{
                \I\land XRO \land (\rightex{x_E}=E) \land 
                (\rightex{x_{ro}}=(\rightF{e'}\land\P')) \imp \\
                \wlp(\biRight{\chk(B_1)}, (\rightex{x_{ro}} \imp 0 \leq E < \rightex{x_E})\land \I)}}
         \end{array} \\
\iff& \hint{$\models XRO\imp (\rightF{e'}\land\P')$, and 
            if $\models X\imp Z$ then $\models X\land(y=Z)\iff X\land(y=\True)$ } \\
    &   \begin{array}[t]{l}
            \all{|x_E}{\all{|x_{ro}}{
                \I\land XRO \land (\rightex{x_E}=E) \land 
                (\rightex{x_{ro}}=\True) \imp \\
                \wlp(\biRight{\chk(B_1)}, (\rightex{x_{ro}} \imp 0 \leq E < \rightex{x_E})\land \I)}}
         \end{array} \\
\iff& \hint{fact about $\models$ and $\forall$  mentioned at the start (noting $\models$ is elided here)} \\
    &   \begin{array}[t]{l}
            \all{|x_{ro}}{
                \I\land XRO \land (\rightex{x_E}=E) \land 
                (\rightex{x_{ro}}=\True) \imp \\
                \wlp(\biRight{\chk(B_1)}, (\rightex{x_{ro}} \imp 0 \leq E < \rightex{x_E})\land \I)}
         \end{array} \\
\iff& \hint{predicate calculus, $x_{ro}$ outside frames of $\I$, $XRO$, and $\rightex{x_E}=E$} \\
    &   \begin{array}[t]{l}
                \I\land XRO \land (\rightex{x_E}=E) \imp 
                \all{|x_{ro}}{(\rightex{x_{ro}}=\True) \imp \\
                \wlp(\biRight{\chk(B_1)}, (\rightex{x_{ro}} \imp 0 \leq E < \rightex{x_E})\land \I)}
         \end{array} \\
\iff& \hint{predicate calculus (one point rule)} \\
    &   \I\land XRO \land (\rightex{x_E}=E) \imp 
                \subst{(\wlp(\biRight{\chk(B_1)}, 
                       (\rightex{x_{ro}} \imp 0 \leq E < \rightex{x_E})\land \I))}{|x_{ro}}{|\True} \\
\imp& \hint{Lemma~\ref{lem:rwlp_subst_under} using that $x_{ro}$ is outside frame of $\chk(B_1)$ }  \\     &   \I\land XRO \land (\rightex{x_E}=E) \imp 
                \wlp(\biRight{\chk(B_1)}, 
                    \subst{((\rightex{x_{ro}} \imp 0 \leq E < \rightex{x_E})\land \I)}{|x_{ro}}{|\True}) \\
\iff& \hint{subst over conj, Lemma~\ref{lem:substR_outside_frame} for $\I,E$; $x_{ro}$ fresh } \\
    & \I\land XRO \land (\rightex{x_E}=E) \imp 
         \wlp(\biRight{\chk(B_1)}, (\rightex{\True} \imp 0 \leq E < \rightex{x_E})\land \I) \\
\iff& \hint{Lemma~\ref{lem:sr_valid_metaR} for $n\in\Z$, eliding $\models$ } \\
    & \all{n}{ \subst{(\I\land XRO \land (\rightex{x_E}=E) \imp 
         \wlp(\biRight{\chk(B_1)}, (\rightex{\True} \imp 0 \leq E < \rightex{x_E})\land \I))}{|x_E}{|n}} \\

\iff& \hint{framing: $x_E$ fresh for all except $\rightex{e_E}$, Lemmas~\ref{lem:substR_outside_frame} \ref{lem:frame_sem_rwlp} ,
           $\subst{\leftex{x_E}}{|x_E}{n} = n$ } \\     & \all{n}{ (\I\land XRO \land (n=E) \imp 
         \wlp(\biRight{\chk(B_1)}, (\rightex{\True} \imp 0 \leq E < n)\land \I)) } \\

\iff& \hint{wlp/correctness Lemma~\ref{lem:rwlp_prop}(i), simplify $\rightF{\True}\imp\ldots$} \\ 
    & \all{n}{ \biRight{\chk(B_1)}: 
          \Bspec{\I\land XRO \land (n=E)}{0 \leq E < n \land \I}} \\
\iff& \hint{Lemma~\ref{lem:chk_commute}} \\ 
    & \all{n}{ \chk(\biRight{B_1}): 
          \Bspec{\I\land XRO \land (n=E)}{0 \leq E < n \land \I}} \\
\imp& \hint{induction hypothesis, note below} \\ 
    & \all{n}{ \Left{\biRight{B_1}}\sep\Right{\biRight{B_1}}:
          \Aespec{\I\land XRO \land (n=E) }{ 0 \leq E < n \land \I}} \\
\imp& \hint{rule \rn{eRewrite} using $\Left{\biRight{B_1}} \kateq \skipc$ and $\Right{\biRight{B_1}} = \Right{B_1}$
               from (\ref{eq:proj_biproj}), and $\kateq$ reflexive } \\
    & \all{n}{ \Left{B_1}\sep\Right{B_1}:
          \Aespec{\I\land \rightF{e'}\land\P'\land\neg(\leftF{e}\land\P) \land (n=E) 
                  }{ 0 \leq E < n \land \I}}
\end{array}\]

So (I3) is proved, which completes the proof of the Theorem.
Note: the step using induction hypothesis relies on $\size(\biRight{B_1})\leq\size(B_1)<\size(B)$.
Also the pre- and post-relations are semantically framed by $vs$: for $\I$ this is 
(\ref{eq:Iframe}); for $\P,\P',E$ this is from $\biFrame(B)$.
\end{proof}

\fi 

\bibliographystyle{ACM-Reference-Format}

\bibliography{biblio}

\end{document}